%

\documentclass[pdflatex,sn-mathphys-num]{sn-jnl}


\usepackage{graphicx}%
\usepackage{multirow}%
\usepackage{amsmath,amssymb,amsfonts}%
\usepackage{amsthm}%
\usepackage{mathrsfs}%
\usepackage[title]{appendix}%
\usepackage{xcolor}%
\usepackage{textcomp}%
\usepackage{manyfoot}%
\usepackage{booktabs}%
\usepackage{algorithm}%
\usepackage{algorithmicx}%
\usepackage{algpseudocode}%
\usepackage{listings}%
\usepackage[utf8]{inputenc}
\usepackage{natbib}
\usepackage{anyfontsize}
\usepackage{fullpage}
\usepackage{tabularx}
\usepackage{float}
\usepackage{mathtools}
\usepackage{enumitem}
\usepackage{subcaption}
\usepackage{tikz}
\usetikzlibrary{mindmap}

\usepackage[capitalize]{cleveref}


\newtheorem{theorem}{Theorem}
\newtheorem{lemma}[theorem]{Lemma}

\newtheorem{corollary}[theorem]{Corollary}

\theoremstyle{definition}
\newtheorem{definition}[theorem]{Definition}

\newcommand{\bigO}{O} 

\newcommand{\threefield}[3]{$#1\mid#2\mid#3$}

\DeclareMathOperator{\rep}{rep}
\newcommand{\EQUIJIT}{\threefield{1}{\rep}{\min_j \sum_i Z_{i,j}}}
\newcommand{\OPEQUIJIT}{\threefield{1}{\rep, d_{i,j}=d_j}{\min_j \sum_i Z_{i,j}}}
\newcommand{\DDEQUIJIT}{\threefield{1}{\rep, p_{i,j}=p_j}{\min_j \sum_i Z_{i,j}}}
\newcommand{\SEQUIJIT}{\threefield{1}{\rep, p_{i,j}=p}{\min_j \sum_i Z_{i,j}}}
\newcommand{\SDEQUIJIT}{\threefield{1}{\rep, d_{i,j}=d_j, p_{i,j}=p_j}{\min_j \sum_i Z_{i,j}}}
\newcommand{\UEQUIJIT}{\threefield{1}{\rep, p_{i,j}=1}{\min_j \sum_i Z_{i,j}}}

\newcommand{\WEQUIJIT}{\threefield{1}{k_j,\rep}{\min_j \sum_i Z_{i,j}}}


\newcommand{\q}{m} 

\newcommand{\problemdef}[3]{
	\begin{center}\fbox{
	\begin{minipage}{0.95\textwidth}
		\noindent
		#1
		\vspace{5pt}\\
		\setlength{\tabcolsep}{3pt}
		\begin{tabularx}{\textwidth}{@{}lX@{}}
			\textrm{Input:}     & #2 \\
			\textrm{Task:}  & #3
		\end{tabularx}
	\end{minipage}}
	\end{center}
}


\raggedbottom
\date{}

\sloppy

\begin{document}

\title{Fair Repetitive Interval Scheduling}

\definecolor{lime}{HTML}{A6CE39}
\DeclareRobustCommand{\orcidicon}{%
	\begin{tikzpicture}
	\draw[lime, fill=lime] (0,0) 
	circle [radius=0.16] 
	node[white] {{\fontfamily{qag}\selectfont \tiny ID}};
	\draw[white, fill=white] (-0.0625,0.095) 
	circle [radius=0.007];
	\end{tikzpicture}
	\hspace{-2mm}
}

\foreach \x in {A, ..., Z}{%
	\expandafter\xdef\csname orcid\x\endcsname{\noexpand\href{https://orcid.org/\csname orcidauthor\x\endcsname}{\noexpand\orcidicon}}
}

\newcommand{\orcidauthorA}{0000-0001-8779-0890}
\newcommand{\orcidauthorB}{0000-0002-6379-0383}
\newcommand{\orcidauthorC}{0000-0002-5309-7075}
\newcommand{\orcidauthorD}{0000-0002-4590-798X}
\newcommand{\orcidauthorE}{0000-0002-2709-599X}

\author[1]{\fnm{Klaus} \sur{Heeger}\orcidA{} }\email{heeger@post.bgu.ac.il}

\author[1]{\fnm{Danny} \sur{Hermelin}\orcidB{}}\email{hermelin@bgu.ac.il}

\author[1]{\fnm{Yuval} \sur{Itzhaki}\orcidC{}}\email{ityuval@bgu.ac.il}

\author[2]{\fnm{Hendrik} \sur{Molter}\orcidD{}}\email{molterh@post.bgu.ac.il}

\author[1]{\fnm{Dvir} 
\sur{Shabtay}\orcidE{}}\email{dvirs@bgu.ac.il}

%

\affil[1]{\small Department of Industrial Engineering and Management, Ben-Gurion~University~of~the~Negev, 
Beer-Sheva, 
Israel
}

\affil[2]{\small Department of Computer Science, Ben-Gurion~University~of~the~Negev, 
Beer-Sheva, 
Israel
}


\abstract{
Fair resource allocation is undoubtedly a crucial factor in customer satisfaction in several scheduling scenarios. This is especially apparent in repetitive scheduling models where the same set of clients repeatedly submits jobs on a daily basis. In this paper, we aim to analyze a repetitive scheduling system involving a set of $n$ clients and a set of $m$ days. On every day, each client submits a request to process a job exactly within a specific time interval, which may vary from day to day, modeling the scenario where the scheduling is done Just-In-Time (JIT). The daily schedule is executed on a single machine that can process a single job at a time,  therefore it is not possible to schedule jobs with intersecting time intervals. Accordingly, a feasible solution corresponds to sets of jobs with disjoint time intervals, one set per day. We define the quality of service (QoS) that a client receives as the number of executed jobs over the $m$ days period.
Our objective is to provide a feasible solution where each client has at least $k$ days where his jobs are processed. We prove that this problem is NP-hard even under various natural restrictions such as identical processing times and day-independent due dates. We also provide efficient algorithms for several special cases and analyze the parameterized tractability of the problem with respect to several parameters, providing both parameterized hardness and tractability results.
}

\keywords{Just-in-time Scheduling,  Interval Scheduling, Algorithms, Graph Theory, Parameterized Complexity}








\maketitle

\section{Introduction}

Fairness as an objective is a relatively new and unexplored concept in scheduling problems. In recent years, customer satisfaction has gained prominence as a goal of high importance~\cite{hom2000overview,lin2011critical}.
This shift has naturally led to an ascent in the prioritization of fairness, compared to alternative scheduling objectives \cite{shim2017innovative,
chen2011tradeoffs,
lee1997current,
doulamis2007fair}. Nevertheless, despite the general pursuit of fair resource allocation, there lacks a clear and universally accepted notion of fairness \cite{gupta2020too}. What is deemed fair in one situation typically depends on specific customer preferences, and may not hold true in another. Any set of preferences may necessitate a unique model for fair resource allocation. For this reason, there remains a significant absence of theoretical models that support fair decision-making in service systems, especially in the context of fair scheduling decisions.

Traditionally, deterministic scheduling models focus on a single scheduling horizon (\emph{e.g.}, a single day). Such models typically aim to find an optimal daily schedule from a global perspective, while ignoring fairness issues altogether. When jobs are associated with clients, an optimal daily schedule may result in a large variance in the Quality of Service (QoS) that each client receives. For example, a daily schedule may minimize the total tardiness of all jobs submitted by the clients, at the expense of a single client whose job is significantly tardier than the others. Hence, in such scenarios, solely focusing on the global perspective may lead to poor customer satisfaction and future abandonment of customers.

To promote fairness, one may consider a daily schedule that ensures the highest possible QoS for the \textit{least satisfied} client.
Accordingly, if tardiness is used as a measure of QoS, a fair solution will involve the minimization of the maximum tardiness.
However, in repetitive scheduling problems, where we need to repeatedly provide schedules for the same set of clients, minimizing the maximum tardiness on each day individually may still result in an \textit{unfair} set of schedules, for example when the jobs of a certain client consistently have the maximum tardiness in each day. The goal of this paper is to provide a set of tools for making fair decisions in a specific repetitive scheduling scenario. 

\subsection{Fair Repetitive Scheduling}

We employ a recently introduced framework for studying the \textit{minimum QoS per agent} measure in the context of \emph{fair repetitive scheduling}~\cite{heeger2021equitable,hermelin2021equitable}. In this framework, we aim to serve a set of~$n$ agents over a period of~$m$ days. In the most basic case, each agent requests the execution of a single job on each of the days. Furthermore, we have a single machine to process the jobs on each day, and a single objective function that measures the QoS of each agent. Thus, one might think of the input in this setting as $\q$ instances of a certain single-machine scheduling problem, where each instance contains $n$ jobs. The goal is to determine whether there exist $\q$ schedules, one for each day, such that a minimum level of QoS is guaranteed for every client over the entire $\q$-day scheduling period. This quality threshold is represented by a natural number $k$, which we call the \emph{fairness parameter}.

The work in~\cite{heeger2021equitable} studies the \threefield{1}{\rep}{\max_j {\sum_i U_{i,j}}} problem in the fair repetitive scheduling framework. In this problem, any instance can be thought of as $\q$ instances of the classical \threefield{1}{}{\sum_j U_j} problem, the problem of minimizing the number of tardy jobs on a single machine. The goal in \threefield{1}{\rep}{\max_j {\sum_i U_{i,j}}} is to find $\q$ schedules, one for each day, where no client has his jobs tardy in more than $k$ of the days. In~\cite{hermelin2021equitable}, the setting of~\cite{heeger2021equitable} is extended to a more general framework, and three additional problems are studied: The \threefield{1}{\rep}{\max_j {\sum_i C_{i,j}}} problem where the goal is ensure that the total completion time of the jobs of each client over all $\q$ days is at most $k$, the \threefield{1}{\rep}{\max_j {\sum_i W_{i,j}}} problem where the goal is to ensure that no client has total waiting time that exceeds~$k$, and \threefield{1}{\rep}{\max_j {\sum_i L_{i,j}}}  which uses the total lateness instead of the total waiting time to measure the QoS received by any of the clients. 

\subsection{Fair Repetitive Interval Scheduling} 

All performance measures employed in \cite{heeger2021equitable} and \cite{hermelin2021equitable} are classical and share the characteristic of being \emph{regular}, i.e., non-decreasing functions of the job completion times. Consequently, these metrics encourage the completion of jobs \textit{as early as possible} to minimize unnecessary machine idle times. Nevertheless, modern production management strategies such as JIT and \emph{lean production} emphasize the importance of completing a job \textit{exactly} on its due date, aiming to avoid both earliness and tardiness as much as possible.
In fact, completing a job too early may result in wastage of capital and additional storage costs. Conversely, a belated completion of a job usually entails tardiness penalties. The successful implementation of the JIT paradigm by the Japanese automotive company Toyota is arguably the major factor for the rise of the Japanese motor industry in the global market~\cite{ohno1988toyota,shingo1985revolution}.
Consequently, JIT scheduling has been studied in many different contexts over the last few decades~\cite{
angelelli2014optimal,
baker1990sequencing,
van2015interval,
bouzina1996interval,
hiraishi2002scheduling,
lann1996single,
shabtay2012just,
spieksma1999approximability,
sung2005maximizing}.

Recognizing the significance of JIT and fair scheduling, this paper focuses on analyzing a fair repetitive scheduling problem where the performance measure captures the JIT scheduling concept. Formally, let~$p_{i,j}$ and~~$d_{i,j}$ denote the processing time and due date of the job of client $j \in \{1,\ldots,n\}$ on day $i \in \{1,\ldots,\q\}$. The processing time and due date of a single job then define a \emph{time-interval} $(d_{i,j}-p_{i,j},d_{i,j}]$ during which a job can be processed. A job can be either scheduled within its time interval or not scheduled at all (\emph{i.e.}, rejected).
A single machine that can process only one job at a time is available on each of the $m$ days. Therefore, two jobs cannot be scheduled on the same day if their time-intervals intersect. In such a case we say that these jobs are in \textit{conflict}. A schedule $\sigma_i$ for day  $i \in \{1,\ldots,\q\}$ is simply a subset of the jobs on that day, and it is said to be \emph{feasible} if it contains no conflicting pair of jobs. 
A solution for all $\q$ days is therefore a tuple $(\sigma_1,\ldots,\sigma_\q)$ of $\q$ schedules, one for each day, and it is said to be \emph{feasible} if each $\sigma_i$ is feasible. Given a feasible solution, we let~$Z_{i,j}$ denote a binary performance measure that equals 1 if the job of client $j$ is scheduled on day $i$, and 0 otherwise. The problem we study in this paper, denoted \EQUIJIT, is defined as follows: 

\problemdef{Fair Repetitive Interval Scheduling (\EQUIJIT)}
{Given a set of $n$ clients, where each client $j \in \{1,\ldots, n\}$ have a job with processing time $p_{i,j} \in \mathbb{N}$ and due date~$d_{i,j}\in\mathbb{N}$ for each day $i \in \{1,\ldots,\q\}$, and an integer $k \in \mathbb{N}$.}
{Find $\q$ schedules, $\sigma_1,\ldots,\sigma_\q$, such that $\sigma_i$ is feasible on each day $i \in \{1,\ldots,\q\}$, and $\sum_{i} Z_{i,j} \geq k$ for each $j\in \{1,\ldots,n\}$.}

\subsection{Our Contribution}

We embark on a thorough investigation of the computational complexity of \EQUIJIT, in an attempt to provide a clear picture of when the problem is tractable or not. Our first main result in this regard, presented in \Cref{sec:km}, gives a dichotomy of the NP-hardness of the problem under every possible value of the fairness parameter~$k$: 
\begin{theorem}
\label{thm:sec3}%
The \EQUIJIT\ problem is polynomial-time solvable when $k\in \{0,\q-1,\q\}$, and otherwise it is NP-hard.
\end{theorem}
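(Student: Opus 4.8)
The plan is to establish a dichotomy, so I naturally split into two directions: the tractable cases $k \in \{0, m-1, m\}$ and the NP-hardness for all intermediate values $2 \le k \le m-2$ (note $k=1$ would fall into the hard range only if $m \ge 3$, but I should double-check the small-$m$ boundary; when $m \le 2$ every value of $k$ already lies in $\{0, m-1, m\}$, so the hard regime is vacuous and consistency is automatic).

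\medskip

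\noindent\textbf{The tractable cases.} For $k=0$ the empty solution (every $\sigma_i = \emptyset$) is trivially feasible, so there is nothing to decide. For $k=m$ I would observe that every client must be scheduled on \emph{every} day; this decomposes completely across days, since day $i$ is feasible in isolation exactly when \emph{all} $n$ intervals of that day are pairwise disjoint. Thus one simply checks each day independently for a conflict-free full schedule, which is a linear-time interval-overlap test. The interesting tractable case is $k = m-1$: each client may be dropped on at most one day. Here I would model the problem as a matching/flow problem. The natural approach is to ask, for each client, which single day (if any) it is permitted to skip, and then require that on each day the set of \emph{present} clients forms a conflict-free interval selection. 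I would try to reduce the question of whether such a choice exists to a bipartite-matching or network-flow formulation: create a ``skip slot'' for each client and route these skips to days so that every day's remaining intervals are pairwise disjoint. The key observation making this tractable is that conflicts are pairwise and per-day, so on a given day the forced clients (those not skipping that day) must already be conflict-free among the non-skippers; I would exploit Hall-type/flow arguments to decide feasibility in polynomial time. I expect this case to require the most careful modeling among the three, and I would verify it reduces cleanly to a polynomially solvable combinatorial problem.

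\medskip

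\noindent\textbf{The NP-hardness for $2 \le k \le m-2$.} This is the heart of the theorem and the main obstacle. My plan is to give a reduction from a known NP-hard problem into a single target value of $k$ first, and then amplify to cover all intermediate $k$ simultaneously by a padding construction. A promising source problem is a variant of $3$-coloring, independent set, or a SAT-like problem, encoded via interval conflicts. Concretely, I would design a gadget instance with a fixed number of days where the requirement ``each client is served on at least $k$ days'' forces a consistent global choice (e.g., a proper coloring or a satisfying assignment), using the interval structure on each day to encode the per-day conflict constraints. To handle the full range of $k$, I would add \emph{padding days} and \emph{padding clients}: extra days on which all relevant clients can be trivially scheduled (all intervals disjoint) to raise the achievable count, together with dummy conflict days that are forced to cost each client exactly one unserved day. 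By tuning how many padding days of each type are inserted, I can shift the effective threshold so that the same core hardness gadget certifies NP-hardness for any chosen $k$ with $2 \le k \le m-2$. The delicate part will be ensuring the padding neither over-constrains the core gadget (so a ``yes'' instance of the source problem still yields a feasible solution) nor creates spurious feasibility (so a ``no'' instance stays infeasible), which requires the count of forced-skip days and free days to be balanced exactly against $k$ and $m$.

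\medskip

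\noindent The step I expect to be hardest is the core hardness gadget: arranging day-indexed interval sets so that satisfying the minimum-QoS-$k$ constraint is equivalent to solving the source combinatorial problem, while keeping the intervals consistent across days in a way that the padding argument can later exploit uniformly. Once a single-value hardness gadget is in hand, the amplification to the entire intermediate range should follow from a clean padding lemma, and combined with the three polynomial cases this yields the claimed dichotomy.
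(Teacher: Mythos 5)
Your overall architecture matches the paper's: dispose of $k=0$ and $k=m$ trivially, give a polynomial algorithm for $k=m-1$, and prove NP-hardness by a core reduction plus padding. However, there are genuine gaps in all the non-trivial parts. First, the core hardness gadget---which you yourself flag as the heart of the theorem---is never constructed, only promised. The paper builds it explicitly: a reduction from \textsc{[2,3]-bounded 3-SAT} to the case $(m,k)=(3,1)$, with equal processing times, variable/clause clients spread over three days, and three dummy clients that occupy one fixed slot on each day to block unwanted schedules. Second, your amplification step fails as stated. Conflict-free padding days do give ``$(m,k)$ hard $\Rightarrow$ $(m+1,k+1)$ hard,'' but starting from a base at $k=1$ this only reaches the diagonal $k=m-2$. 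Your other padding type---``dummy conflict days that are forced to cost each client exactly one unserved day''---does not force anything: on a day where all clients mutually conflict, one client can still be served there, so a $k$-fair schedule of the padded instance need not restrict to a $k$-fair schedule of the original one, and the backward direction of the reduction breaks. The paper repairs exactly this by adding, together with the new all-conflict day, a \emph{fresh client} that conflicts with every original client on every original day; a swapping argument then lets one assume the new client is the one served on the new day. That operation is specific to $k=1$, and the paper composes the two operations (first grow $m$ at $k=1$, then grow both) to cover every pair with $1\le k\le m-2$. Relatedly, your stated hard range $2\le k\le m-2$ silently drops $k=1$, which is NP-hard for every $m\ge 3$; for $m=3$ your range is empty, so as written you would prove nothing in the smallest hard case.

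Third, for $k=m-1$ your matching/flow plan is the wrong tool. The constraint is that, on each day, the set of skipped clients must form a vertex cover of that day's interval conflict graph, while each client skips at most one day. These are disjunctive, implication-chain constraints (``if $j_1$ is not skipped on day $i$ then $j_2$ must be, hence $j_2$ cannot skip any other day, \dots''), not an assignment problem, and no Hall-type certificate is apparent. The paper instead encodes exactly this structure as a 2-SAT instance: a variable $x_{i,j}$ per client--day pair, a conflict clause $(\neg x_{i,j_1}\vee\neg x_{i,j_2})$ per daily conflict, and a validation clause $(x_{i_1,j}\vee x_{i_2,j})$ per client and pair of days, solvable in linear time in its size. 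Your observation that ``conflicts are pairwise and per-day'' is precisely the right structural insight---it just points to 2-SAT, not to bipartite matching.
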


Following \Cref{thm:sec3}, we proceed to examine \EQUIJIT\ under further restrictions. The first restriction we consider is \textit{day-independent processing times}, the case where the processing time of a job depends solely on its client and is identical for each of the $\q$ days. That is, for each client~$j$ we have $p_{i,j}=p_j$ for all days $i \in \{1,\ldots,\q\}$. We denote this special case by \DDEQUIJIT. As it turns out, the hardness proof of \Cref{thm:sec3} applies for \DDEQUIJIT\ as well. However, when we further restrict the processing times to be \emph{unit}, the problem becomes polynomial-time solvable. Note that the latter special case corresponds to scheduling scenarios with \textit{slots} of fixed time duration for which clients can apply, as commonly used in public health systems~\cite{carpenter2011managing}. 
\begin{theorem}
\label{thm:sec4b}%
The \DDEQUIJIT\ problem is NP-hard. It is polynomial-time solvable when $p_j=1$ for all $j\in\{1,\ldots,n\}$.
\end{theorem}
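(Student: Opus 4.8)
# Proof Proposal for Theorem~\ref{thm:sec4b}

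The theorem has two independent parts: an NP-hardness claim for the general \DDEQUIJIT\ problem, and a polynomial-time algorithm for the unit processing time restriction. I would handle them separately, since they require completely different techniques.

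\textbf{NP-hardness of \DDEQUIJIT.} The excerpt explicitly states that ``the hardness proof of \Cref{thm:sec3} applies for \DDEQUIJIT\ as well,'' so the plan here is essentially to verify that the construction used to prove the NP-hardness in \Cref{thm:sec3} can be made to respect the day-independence constraint $p_{i,j}=p_j$. Concretely, I would revisit the reduction behind \Cref{thm:sec3} (for some fixed value $k\notin\{0,\q-1,\q\}$, e.g.\ $k=1$ with $\q$ large, or a value in the middle of the range) and check that each client's job can be assigned a single processing time independent of the day. If the original reduction already encodes conflicts purely through the \emph{placement} of the intervals (i.e.\ through the due dates $d_{i,j}$, which are allowed to vary by day) rather than through day-varying lengths, then setting $p_{i,j}=p_j$ requires no change and the hardness is immediate. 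The main thing to be careful about is that the conflict graph on each day---two jobs conflict iff their intervals $(d_{i,j}-p_j,d_{i,j}]$ intersect---must still realize whatever gadget structure the \Cref{thm:sec3} reduction needs, using only the freedom in the due dates. I expect this to go through with at most cosmetic modifications, possibly rescaling processing times to a common value per client.

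\textbf{Polynomial-time algorithm for unit processing times ($p_j=1$).} This is where the real work lies. With $p_j=1$ for every client, each job on each day occupies a unit interval $(d_{i,j}-1,d_{i,j}]$, so two jobs on the same day conflict if and only if they share the same due date $d_{i,j}$. Thus on each day $i$, the conflict structure partitions the jobs into groups by due-date value, and a feasible schedule $\sigma_i$ may select \emph{at most one} job from each due-date group. The core combinatorial question becomes: can we pick, for each day, a set of jobs with pairwise-distinct due dates, so that every client is selected on at least $k$ days? I would model this as a bipartite-matching / flow problem. Build a flow network with a source, a node for each (day, due-date-value) slot, a node for each client, and a sink; route one unit of capacity through each slot (enforcing the at-most-one-per-slot constraint), connect each client to the slots where that client has a job, and require flow of at least $k$ into each client node before reaching the sink. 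The fairness requirement ``$\sum_i Z_{i,j}\ge k$'' translates naturally into lower bounds of $k$ on the client-to-sink arcs, so this is a feasible-flow-with-lower-bounds instance.

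The main obstacle---and the step I would spend the most care on---is correctly capturing the per-slot constraint together with the per-client lower bound in a single flow (or degree-constrained subgraph / bipartite $b$-matching) formulation, and then arguing both directions of correctness: that any feasible solution yields a feasible flow and vice versa, with the integrality of flow guaranteeing that the extracted schedules are genuine job selections. A subtlety to watch is that a single client may have the same due date as itself across different days (fine, different slots) but must not be double-counted within one day; the slot-per-(day, due-date) decomposition handles this automatically since each client contributes at most one job to each such slot. Once the reduction to feasible flow with lower bounds is established, standard results give a polynomial-time algorithm, completing the proof. I would close by noting the running time is polynomial in $n$, $\q$, and the number of distinct due-date values, which is bounded by the input size.
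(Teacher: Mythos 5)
Your proposal is correct on both counts. For the hardness part you take the same route as the paper: the reduction behind \Cref{thm:sec3} (namely \Cref{thm:k2m3}) assigns every job the constant processing time $2$, so it already satisfies $p_{i,j}=p_j$ (indeed the stronger $p_{i,j}=p$), and \Cref{cor:NPhard-pij-pj} yields the first claim with no modification at all --- your anticipated ``rescaling to a common value per client'' turns out to be unnecessary. For the unit-processing-time part you use exactly the paper's key insight (with $p_j=1$ and integer due dates, two jobs conflict on a day if and only if they share a due date, so a feasible daily schedule selects at most one job per (day, due-date) slot), but you realize it with a different reduction target. The paper reduces to \textsc{Bipartite Maximum Matching}: one job vertex per (client, day) pair, one due-date vertex per slot, and $m-k$ rejection vertices per client, with a $k$-fair schedule corresponding to a matching of size $nm$; the rejection vertices are precisely the device that converts ``at least $k$ jobs scheduled'' into ``at most $m-k$ jobs rejected, all job vertices matched'', so no lower bounds are needed. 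You instead formulate a feasible-flow problem with lower bounds (source $\to$ slot nodes $\to$ client nodes $\to$ sink, unit capacity per slot, lower bound $k$ on each client-to-sink arc). The two encodings are equivalent and both run in polynomial time; yours expresses the fairness constraint more directly, while the paper's matching formulation avoids the standard lower-bounds-to-circulation transformation and immediately gives the concrete $O(n^{1.5}m^{2.5})$ bound via the algorithm of Hopcroft and Karp. Your correctness plan (integrality of flow, slot capacities corresponding to conflict-freeness, lower bounds corresponding to fairness) mirrors the two directions of the paper's proof of \Cref{thm:bipartite}.
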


We next consider the case of \textit{day-independent due dates}, \emph{i.e.}, the case where for each client~$j$ we have $d_{i,j}=d_j$ for all days $i \in \{1,\ldots,\q\}$. This special case, denoted by \OPEQUIJIT, is not only a natural theoretical restriction but also makes sense in practice as often clients have a unique due date that is associated with \emph{e.g.}~a specific delivery time. It turns out that this variant is NP-hard as well. However, we identify two further natural restrictions for which the problem becomes polynomial-time solvable, namely that either the number of days is constant or that processing times are day-independent as well. 
\begin{theorem}
\label{thm:sec4}%
The \OPEQUIJIT\ problem is NP-hard. It is solvable in polynomial time if either of the  following additional restrictions holds:
\begin{itemize}
\item[$(i)$] The number~$\q$ of days is constant.
\item[$(ii)$] The processing times are day-independent. That is, $p_{i,j}=p_j$ for all $i\in\{1,\ldots,\q\}$ and $j\in\{1,\ldots,n\}$.
\end{itemize}
\end{theorem}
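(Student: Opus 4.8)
The plan is to treat the three claims separately, and a useful first step for all of them is to record the conflict structure forced by day-independent due dates. Order the clients so that $d_1\le\dots\le d_n$. On day $i$ client $j$ occupies $(d_j-p_{i,j},d_j]$, so for $d_{j'}<d_j$ the two jobs conflict on day $i$ if and only if $p_{i,j}>d_j-d_{j'}$, a condition depending only on $p_{i,j}$. Hence on each day the conflict graph is a right-anchored interval graph in which each client's set of conflicts with lower-due-date clients is a consecutive block whose length we may set independently on each day through $p_{i,j}$. In particular a single day is an ordinary interval-scheduling problem, which is easy in isolation, so all hardness must come from coordinating the per-day choices to meet the coverage requirement $\sum_i Z_{i,j}\ge k$.

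For the NP-hardness I would give a polynomial reduction from a suitable NP-hard problem (for instance a restricted 3-SAT, or an exact-cover/independent-set variant), building the instance so that a few congested days carry large cliques forcing at most one client of each designated group to be scheduled, while the coverage threshold $k$ together with padding days forces a consistent global selection that encodes the source instance. The reach observation above is the construction tool: it lets me realize, day by day, exactly the blocking pattern a gadget needs while keeping due dates day-independent. The main obstacle is precisely this gadget design, because each day's conflicts must form a right-anchored interval graph (so a client that reaches one earlier client reaches all clients in between), and I therefore cannot impose arbitrary pairwise conflicts on a single day; the intended conflicts must be distributed across days, and I must argue that the only way to satisfy all coverage constraints corresponds to a solution of the source instance. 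I would prove both directions and verify that the construction is polynomial and produces a value of $k$ in a hard regime (not one of the easy values of \Cref{thm:sec3}).

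For $(i)$, with $\q$ constant I would use dynamic programming over clients taken in nondecreasing due-date order. Since each job ends exactly at its due date, on any fixed day the scheduled jobs are pairwise compatible if and only if consecutive ones (in due-date order) are, so it suffices to remember, per day, the due date of the last job placed on it. The DP state is thus a vector in $(\{0,d_1,\dots,d_n\})^{\q}$ recording these last endpoints, giving at most $(n+1)^{\q}$ states. When processing client $j$, I enumerate every subset $S\subseteq\{1,\dots,\q\}$ of days on which to schedule it with $|S|\ge k$, check for each $i\in S$ that $d_j-p_{i,j}$ is at least the stored endpoint of day $i$, and then update those endpoints to $d_j$; the fairness constraint for $j$ is checked locally since $j$ is never revisited. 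The number of reachable states and the $2^{\q}$ subsets per client are polynomial for constant $\q$, for a running time of about $O\big(n\,(n+1)^{\q}2^{\q}\q\big)$, and a solution exists iff some state is reachable after all clients are processed.

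For $(ii)$, when processing times are day-independent as well we are in the case \SDEQUIJIT, where every client $j$ has the fixed interval $(d_j-p_j,d_j]$ on all days, so all $\q$ daily conflict graphs coincide with a single interval graph $G$. The task becomes: choose $\q$ independent sets of $G$ covering each vertex at least $k$ times, i.e.\ decide whether the $k$-fold chromatic number of $G$ is at most $\q$. I would prove this holds iff $\q\ge k\,\omega(G)$, where $\omega(G)$ is the maximum number of intervals through a common point, computable in polynomial time. Necessity is immediate, since the $\omega(G)$ clients of a maximum clique are pairwise conflicting, so each day schedules at most one of them and covering each $k$ times needs at least $k\,\omega(G)$ days. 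For sufficiency I would give the standard greedy left-to-right assignment: sweeping intervals by left endpoint, hand each interval $k$ colors avoiding those of the at most $\omega(G)-1$ currently overlapping already-processed intervals; these use at most $(\omega(G)-1)k$ of the $k\,\omega(G)$ available colors, so $k$ free colors always remain. This produces a $k$-fold coloring with $k\,\omega(G)$ colors, hence a feasible solution whenever $\q\ge k\,\omega(G)$, and the test is polynomial.
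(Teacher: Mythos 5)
Your parts $(i)$ and $(ii)$ are correct and essentially coincide with the paper's own arguments. For constant $\q$ the paper runs the same dynamic program over clients in nondecreasing due-date order, with a state recording, per day, the last scheduled client (equivalently, its due date), and the same feasibility test $d_j - p_{i,j} \ge$ (stored endpoint); your variant with $|S|\ge k$ instead of $|S|=k$ is immaterial. For day-independent processing times the paper likewise reduces the question to whether $k\cdot\chi(G)\le \q$ for the single daily interval graph $G$, invoking perfection of interval graphs ($\chi=\omega$) exactly where you invoke your greedy sweep; your left-to-right $k$-fold coloring is a self-contained proof of the same fact, but it is the same idea.

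The genuine gap is the NP-hardness claim: you never give a reduction. What you provide is a plan (``I would give a polynomial reduction from a suitable NP-hard problem \dots{} The main obstacle is precisely this gadget design''), together with an accurate explanation of why gadget design is constrained here --- on each day a client's conflicts with lower-due-date clients form a consecutive block, so arbitrary pairwise conflicts cannot be realized within a single day. Naming the obstacle is not overcoming it, so the hardness half of the theorem is simply unproved in your write-up. Moreover, the difficulty dissolves once the right source problem is chosen: the paper observes that \OPEQUIJIT\ with $k=1$ is, after renaming machines as days, exactly the problem \threefield{R}{}{\sum_j Z_j} of scheduling \emph{all} jobs just-in-time on $\q$ unrelated machines --- a job with due date $d_j$ and machine-dependent processing times $p_{i,j}$ becomes a client with day-independent due date $d_j$ and processing times $p_{i,j}$, and a $1$-fair schedule exists iff all jobs can be scheduled JIT. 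Since \threefield{R}{}{\sum_j Z_j} is strongly NP-hard~\cite{sung2005maximizing}, hardness follows with no gadget construction at all, whereas your SAT/independent-set route would require precisely the coordination machinery you flagged as missing and gave no construction for.
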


In the final part of the paper, we focus on natural parameters of \EQUIJIT\, and study how they affect the complexity of the problem when they are limited in size. The most natural framework for doing this is the theory of parameterized complexity~\cite{downey2012parameterized,cygan2015parameterized}. For the parameters~$k$ and $m$, we know that the problem is NP-hard even when both are constant due to \Cref{thm:sec3}. We show that this is not the case for the parameter $n$, as \EQUIJIT\ admits a fixed-parameter tractable (FPT) algorithm for the number of clients~$n$. We then turn to consider the \textit{treewidth} $\tau$ of the overall conflict graph, the graph with a vertex for each client where two vertices are adjacent if their corresponding clients have conflicting jobs on some day $i \in \{1,\ldots,m\}$ (see \Cref{sec:prelim}). We show that the problem is NP-hard for $\tau=O(1)$, but admits an FPT algorithm for the parameter $\tau +m$. 
\begin{theorem}
\label{thm:sec5}%
The \EQUIJIT\ problem
\begin{itemize}
\item is NP-hard for $\tau=O(1)$, 
\item admits an FPT algorithm with respect to parameter~$\q+\tau$, and
\item admits an FPT algorithm with respect to~$n$.
\end{itemize}
\end{theorem}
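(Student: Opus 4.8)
The three parts call for quite different techniques, so the plan is to treat them separately and flag the one genuinely delicate step.

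For the NP-hardness under $\tau=O(1)$, my plan is to revisit the hardness construction underlying \Cref{thm:sec3} and argue that the overall conflict graph it produces already has constant treewidth; if it does not, I would replace it by a dedicated reduction from a strongly NP-hard number problem such as \textsc{3-Partition}, encoding the item sizes \emph{in unary across the $\q$ days} rather than in the conflict structure. The guiding observation is that, since we also prove an FPT algorithm in $\q+\tau$ below, any hard instance with constant $\tau$ must have unbounded $\q$; hence the hardness has to be driven by the coverage-over-many-days mechanism and not by a complex overlap pattern. Concretely, I would build a ``backbone'' of clients whose pairwise conflicts form a path or caterpillar (treewidth $1$), attach constant-size selection gadgets, and let the numeric constraints be enforced by the fairness threshold $k$ together with the number of days on which each gadget client can be served, finally exhibiting an explicit tree decomposition of width $O(1)$ in which each bag contains only a gadget together with its backbone neighbours. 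The main obstacle is exactly this locality requirement: I must ensure that the union over all $\q$ days of the per-day interval conflicts is globally tree-like rather than, for instance, a grid, which is already the union of two interval graphs yet has treewidth $\Theta(\sqrt{n})$.

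For the FPT algorithm in $\q+\tau$, the plan is standard dynamic programming over a tree decomposition. First I would compute, in FPT time, a nice tree decomposition of the conflict graph of width $O(\tau)$, and for each edge $jj'$ record the set $D_{jj'}\subseteq\{1,\dots,\q\}$ of days on which clients $j$ and $j'$ actually conflict. The DP state at a bag $B_t$ is an assignment $j\mapsto S_j\subseteq\{1,\dots,\q\}$ giving, for each client in the bag, the set of days on which it is scheduled; there are at most $(2^{\q})^{|B_t|}\le 2^{\q(\tau+1)}$ such states. A state is locally valid if $S_j\cap S_{j'}\cap D_{jj'}=\emptyset$ for every pair $j,j'\in B_t$, which exactly encodes per-day feasibility, since a set of intervals on a single day is feasible iff it is pairwise non-conflicting. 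Introduce nodes guess $S_j$ and check the new conflicts; forget nodes verify the quota $|S_j|\ge k$ before dropping client $j$; join nodes intersect compatible states. Because every conflicting pair shares a bag and every client is forgotten exactly once, correctness follows, and the running time is $2^{O(\q\tau)}\cdot\mathrm{poly}(n,\q)$, which is FPT in $\q+\tau$ (using $\q\tau\le(\q+\tau)^2$).

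For the FPT algorithm in $n$, my plan is to reduce the problem to an integer linear program with $f(n)$ variables and invoke Lenstra's algorithm. The key observation is that two days with the same conflict graph on the client set are interchangeable, so I group the $\q$ days into \emph{types} according to their labelled conflict graph; there are at most $2^{\binom{n}{2}}$ types, and each day of a given type must be assigned a feasible schedule, i.e.\ an independent set of its conflict graph, of which there are at most $2^{n}$. I then introduce an integer variable $x_{g,I}$ counting how many days of type $g$ are scheduled according to independent set $I$, subject to $\sum_I x_{g,I}$ equalling the number of days of type $g$ and to the coverage constraints $\sum_{g,I:\,j\in I} x_{g,I}\ge k$ for every client $j$. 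This ILP has at most $2^{\binom{n}{2}}\cdot 2^{n}$ variables, a function of $n$ alone, while its coefficients are bounded by the input size; hence by Lenstra's theorem it is solvable in time $g(n)\cdot\mathrm{poly}(\text{input})$, yielding fixed-parameter tractability in $n$. Here no real obstacle arises beyond checking that the grouping into types is sound and that the variable count depends only on $n$.
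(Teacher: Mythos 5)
Your Parts 2 and 3 are correct and follow essentially the same route as the paper. The dynamic program over a nice tree decomposition with states recording, for each bag client, the full set of days on which it is scheduled, conflicts checked inside bags and quotas checked at forget time, is exactly the algorithm behind \Cref{thm:fpt:w+m}, with the same $2^{O(\q\tau)}\cdot \mathrm{poly}(n,\q)$ running time. The ILP for parameter $n$ is exactly the formulation in \Cref{thm:fpt:n}; the only difference is that you bound the number of day types by $2^{\binom{n}{2}}$ (all labelled graphs) where the paper uses the $2^{O(n\log n)}$ enumeration bound for interval graphs, which only worsens the double-exponential constant and does not affect fixed-parameter tractability.

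The genuine gap is Part 1. What you offer is a plan, not a reduction: you propose to inspect the construction of \Cref{thm:k2m3} (its overall conflict graph in fact has unbounded treewidth --- it contains large cliques, and by your own Part 2 a constant-treewidth instance with $\q=3$ would be polynomial-time solvable), and as a fallback you sketch a \textsc{3-Partition} reduction with a path backbone and constant-size gadgets. But no gadget is specified, no explicit tree decomposition is exhibited, and no correctness argument is given; the obstacle you yourself identify --- that the union of per-day interval graphs can have large treewidth, e.g.\ a grid --- is named but never overcome. Designing a construction that is simultaneously hard and globally tree-like is precisely the content of this part of the theorem, and it is missing.

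There is also a structural difficulty your sketch does not engage with: \EQUIJIT\ has a \emph{single global} fairness parameter $k$, so you cannot directly impose different service requirements on backbone clients, selection clients, and blocking clients. The paper resolves this in two steps: \Cref{lem:td-weighted} proves hardness of the weighted variant \WEQUIJIT\ (per-client parameters $k_j$) via a reduction from \textsc{Multicolored Independent Set} whose conflict graph has treewidth at most $4$, and \Cref{lem:td} then gives a reduction from \WEQUIJIT\ back to \EQUIJIT\ that adds two clients and $\q$ extra days and increases the treewidth by at most $2$. Nothing in your proposal plays the role of this second step --- enforcing heterogeneous quotas with a uniform $k$ while keeping the treewidth bounded --- and without it (or a concrete single-shot construction that avoids it), Part 1 remains unproved.
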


\section{Conflict Graphs and Treewidth}
\label{sec:prelim}

The daily conflict relation between the jobs of the clients can be usefully viewed through the prism of graph theory. This relation can be modeled as a \textit{daily conflict graph} which will be used throughout the paper and is defined as follows: 
\begin{definition}\label{def:conflictgraph}
Given an instance $\mathcal{I}$ of \EQUIJIT\ with $n$ clients and $m$ days, the \emph{day $i$ conflict graph} associated with $\mathcal{I}$ is the graph $G_i=(\{1,\ldots,n\},E_i)$ where each vertex $j \in \{1,\ldots,n\}$ is associated with a client $j$ of $\mathcal{I}$. Two vertices $j_1,j_2 \in  \{1,\ldots,n\}$ are adjacent in $G_i$ if their corresponding clients $j_1$ and $j_2$ have a pair of conflicting jobs on day $i$. The \emph{overall conflict graph} associated with~$\mathcal{I}$ is the graph $G = (\{1,\ldots,n\},E_1 \cup \cdots \cup E_m)$.   
\end{definition}
\noindent Note that for each $i \in \{1,\ldots,m\}$, the day $i$ conflict graph is an \emph{interval graph}, an intersection graph of intervals on the real line~\cite{hajos1957uber,golumbic2004algorithmic}. A feasible schedule on day $i$ corresponds to an \textit{independent set} in $G_i$. A partition of the vertices of the graph into independent sets is called a \textit{coloring}, and a graph admits a $\chi$-coloring if its vertices can be partitioned into $\chi$ such independent sets. Both a maximum independent set and a minimum coloring can be computed in $O(n \log n)$ time in an interval graph, given the interval representation of the graph~\cite{gavril1972algorithms}.

We will be interested in \emph{fixed-parameter tractable} algorithms, or FPT algorithms in short, for certain parameters. An FPT algorithm with respect to some parameter~$\kappa$ is an algorithm running in $f(\kappa) \cdot n^{O(1)}$ time, where $f()$ can be any computable function that depends solely on~$\kappa$ and not on~$n$ (see~\cite{downey2012parameterized} for further information on such algorithms). Note that if a problem is hard for $\kappa = O(1)$ then there cannot exist an FPT algorithm unless P=NP. One prominent parameter that we will consider is the \emph{treewidth}~\cite{robertson1986graph} of the overall conflict graph. This parameter is defined through \emph{tree decompositions} as follows: 
\begin{definition}
A \textit{tree-decomposition} of a graph $G = (V,E)$ is a tree $\mathcal{T}= (\mathcal{X},F)$ with a node set~$\mathcal{X} \subseteq \{X\ |\  X \subseteq V\}$ and $F \subseteq \mathcal{X} \times \mathcal{X}$ that upholds the following:
\begin{itemize}
\item $V = \bigcup_{X \in \mathcal{X}} X$. 
 \item For every edge $\{u,v\}\in E$ there exists $X \in \mathcal{X}$ such that $u,v \in X$. 
\item The set of nodes $\mathcal{X}_v = \{ X\ |\ X\in \mathcal{X} \wedge v \in X \}$ containing any vertex $v \in V$ is a connected subgraph in $\mathcal{T}$.
\end{itemize}
The \textit{width} of a tree-decomposition $\mathcal{T}$ is $\max\{ \lvert X\rvert\ |\ X \in \mathcal{X} \}-1$. The \textit{treewidth} $\tau(G)$ of $G$ is the minimum width over all tree decompositions of $G$. 
\end{definition}
\noindent It is known that computing the treewidth of a graph is NP-hard, yet it is FPT with respect to the parameter treewidth~\cite{bodlaender1993tourist}.
Hence, when describing an FPT algorithm with respect to the treewidth parameter, one can assume that a tree decomposition with minimum width is given alongside the input.


\section{The Fairness Parameter}
\label{sec:km}%

In this section, we prove \Cref{thm:sec3} which maps out the effect of parameters~$k$ and~$m$ (the fairness parameter and the number of days) on the computational hardness of \EQUIJIT. We begin with an NP-hardness proof for the case that $k=1$ and $\q=3$ in \Cref{thm:k2m3}, which we later extend to show NP-hardness for every pair $(\q,k)\in \mathbb{N}^2$  such that $0 < k < m-1 $ and $m \geq 3$. We complement these hardness results with \Cref{thm:km-1} where we specify a reduction from \EQUIJIT\ where $k=\q-1$ to \textsc{2-SAT}, essentially showing a polynomial time algorithm for the remaining case.


We begin by showing that \EQUIJIT\ is NP-hard for $k=1$ and $m=3$, even if each job of each client has the same processing time~$p$.

\begin{theorem}\label{thm:k2m3}
The \SEQUIJIT\ problem is NP-hard for $k=1$ and $\q=3$.
\end{theorem}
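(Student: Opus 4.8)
The plan is to first recast the problem and then reduce from a symmetric NP-hard problem. For $k=1$ and $\q=3$, a feasible solution must schedule every client on at least one of the three days, and the jobs scheduled on day~$i$ must be pairwise conflict-free, i.e.\ form an independent set in the day~$i$ conflict graph~$G_i$. Hence an instance is a yes-instance if and only if the clients can be partitioned into three sets $S_1,S_2,S_3$ such that $S_i$ is an independent set of $G_i$; equivalently, one assigns to each client a \emph{day} $c(j)\in\{1,2,3\}$ so that no two clients that conflict on day~$i$ are both assigned to~$i$. Since all processing times equal $p$, I can make any two clients conflict (resp.\ not conflict) on a chosen day by giving them equal (resp.\ far-apart) due dates on that day, and each $G_i$ is then a unit interval graph. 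Because the three days behave symmetrically, the natural source problem is \textsc{3-Coloring}, with the three days playing the role of the three colors.

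First I would fix the global layout convention: on each day the due dates are placed on the line so that the induced unit interval graph is a disjoint union of cliques (clients sharing a due date) together with a few controlled ``path'' overlaps, which keeps every $G_i$ a genuine interval graph while letting me prescribe the conflicts I need. The core difficulty is that my only primitive is \emph{negative}: co-locating two clients on day~$i$ forbids them from both using day~$i$, but I can never directly force two clients onto the same day. I would therefore build two gadgets out of small groups of co-located auxiliary clients: a \emph{forcing gadget} that pins an auxiliary client to a prescribed day (by co-locating a small clique of helpers with it so that the other two days are blocked and, via the ``scheduled at least once'' requirement, the remaining day is forced), and an \emph{inequality gadget} that, for two designated clients $u,v$, uses such forced auxiliaries to make every feasible assignment satisfy $c(u)\neq c(v)$. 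Encoding one inequality gadget per edge of the input graph~$H$ then simulates its edge constraints.

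For correctness I would argue both directions. Given a proper $3$-coloring of~$H$, assigning each client the day equal to its color (and each auxiliary client its forced/consistent day) satisfies all gadget constraints and yields a feasible solution. Conversely, any feasible solution assigns each vertex-client a day, and the inequality gadgets guarantee that adjacent vertices receive different days, i.e.\ a proper coloring. To keep vertex-clients of high degree from having to conflict with all their neighbours simultaneously on one day (which could create a forbidden non-interval pattern), I would split each vertex into private copies tied together by equality gadgets (again assembled from the negative primitive by forbidding all ``off-diagonal'' day-combinations), so that on each day every client is co-located with only a bounded, locally laid-out set of clients.

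The hard part will be exactly this realizability bookkeeping: simultaneously placing all (vertex, edge, and auxiliary) due dates on each of the three lines so that the induced unit interval conflict graph contains precisely the intended conflicts and no spurious ones, and verifying that the auxiliaries of the forcing and (in)equality gadgets never interact across gadgets. Showing that such a layout always exists, and that the gadgets faithfully enforce $c(u)\neq c(v)$ despite only being able to forbid co-scheduling, is where the real work lies; once it is in place, the equivalence with \textsc{3-Coloring} and hence the NP-hardness of \SEQUIJIT\ for $k=1$ and $\q=3$ follow immediately.
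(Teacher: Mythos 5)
Your reformulation for $k=1$, $\q=3$ (partition the clients into three sets, each independent in the corresponding day's conflict graph) is correct, and reducing from \textsc{3-Coloring} with days as colors is a natural instinct, but the proposal has a genuine gap that is not mere ``bookkeeping.'' First, a local error: you claim equality gadgets can be assembled ``from the negative primitive by forbidding all off-diagonal day-combinations.'' A conflict on day $i$ forbids only the \emph{diagonal} pair ($u$ and $v$ both scheduled on day $i$); no conflict, on any day, can forbid an off-diagonal combination such as ``$u$ on day 1 and $v$ on day 2,'' since conflicts only constrain same-day co-scheduling. Equality has to be expressed indirectly, e.g.\ by a diamond gadget: two auxiliaries $z_1,z_2$ that conflict with each other and with both $u$ and $v$ on all three days, forcing $u$ and $v$ to the unique remaining day. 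That part is fixable.

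The fatal problem is the global realizability you defer to the end. In your scheme a diagonal pair $(i,i)$ is excluded only by a day-$i$ conflict, so every disequality gadget (used for edges of $H$) and every equality/diamond gadget must place its conflicts identically on \emph{all three} days; hence the union of all gadgets is a subgraph of every $G_i$. Now if the input graph $H$ contains a cycle, following that cycle through edge gadgets, vertex copies, and equality diamonds produces a chordless cycle of length at least $9$ in each $G_i$: distinct gadgets share only single clients, within a diamond the two tied copies are non-adjacent, and by your own stipulation there are no spurious conflicts. Interval graphs are chordal, so such a $G_i$ cannot be realized by \emph{any} placement of due dates; and adding chords to restore chordality introduces conflicts that can destroy the forward direction of the reduction. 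Since \textsc{3-Coloring} is trivial on forests, the reduction collapses exactly on the instances that matter. The paper sidesteps this by \emph{not} treating the days symmetrically: it reduces from $[2,3]$-bounded \textsc{3-SAT}, lays variables and clauses out consecutively on the line, uses a triple of mutually always-conflicting dummy clients to occupy a blocking interval on every day, and enforces variable--clause consistency via one-directional pigeonhole implications (clause clients mutually conflict on days 1 and 2, forcing one onto day 3, where it overlaps its literal's variable client), so that every day's conflict graph is a disjoint union of cliques and short paths and is trivially an interval graph.
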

\begin{proof}

We present a polynomial-time many-one reduction from \textsc{[2,3]–bounded 3–SAT}~\cite{tovey1984simplified} to \SEQUIJIT\ with $k=1$ and $\q=3$.
In \textsc{[2,3]–bounded 3–SAT} we need to decide whether Boolean formula $\phi$ in conjunctive normal form is satisfiable, given the promise that every clause in $\phi$ contains two or three variables and every variable appears in at most three clauses. We can also assume without loss of generality that no variable $x$ appears in all three times either negated or non-negated, that is because we can create an equivalent \textsc{[2,3]–bounded 3–SAT} formula $\phi'$ by setting $x$ to satisfy all three clauses and removing its clauses from $\phi$.

Given a \textsc{[2,3]–bounded 3–SAT} formula $\phi$, we construct an instance $\mathcal{I}$ of \SEQUIJIT\ as follows:

\begin{itemize}
\item For every variable $x$, we create two clients: $x^T$ and $x^F$.
\item For every clause that contains two literals $c_A$, we create two clients: $c^{A_1}$ and $c^{A_2}$.
\item For every clause that contains three literals $c_B$, we create three clients: $c^{B_1}$, $c^{B_2}$, and $c^{B_3}$.
\item We create three ``dummy'' clients: $a_1$, $a_2$, and $a_3$.
\end{itemize}

Let $\alpha$ denote the number of variables in $\phi$, let $\beta_A$ denote the number of two-literals clauses (type~$A$), and let $\beta_B$ denote the number of three-literals clauses (type~$B$). Accordingly, the number of clauses in $\phi$ is~$\beta_A+\beta_B$. 
The first and second day contain 
variable and clause gadgets. The variable gadgets ensure that every variable is consistently set in all of its clauses. The clause gadgets ensure that each clause is satisfied by one of its literals.

\begin{figure}[H]
\centering

\begin{tikzpicture}

  \def\dx{0.05}
  \def\shiftca{4}

  \draw[thick,->] (0-1,0) -- (14.5,0) node[anchor=north west] {};

  \draw[line width=1, |-|]
  ({0.5-0.5}, 2.5) -- ({1.5-0.5}, 2.5) node[anchor=north west] {};
  \node at (({1-0.5}, 3) {$a_1$};
  \draw[line width=1, |-|]
  ({0.5-0.5}, 1.75) -- ({1.5-0.5}, 1.75) node[anchor=north west] {};
  \node at ({1-0.5},2.1) {$a_2$};
  \draw[line width=1, |-|]
  ({0.5-0.5}, 1.) -- ({1.5-0.5}, 1.) node[anchor=north west] {};
  \node at ({1-0.5},0.5) {$a_3$};

\foreach \x\i in
    {
        1/1,2/1,3/\alpha
    }
    {
      \ifthenelse{\x=2}
      {
        \node at ({1+\x+(\x-1)*\dx+0.5}, 1.75) {\dots};
      }
      {
        \draw[line width=1, |-|]
        ({1+\x+(\x-1)*\dx}, 2.) -- ({2+\x+(\x-1)*\dx}, 2.) node[anchor=north west] {};
        \node at ({1+\x+(\x-1)*\dx+0.5}, 2.5) {$x^{T}_{\i}$};
        \draw[line width=1, |-|]
        ({1+\x+(\x-1)*\dx}, 1.5) -- ({2+\x+(\x-1)*\dx}, 1.5) node[anchor=north west] {};
        \node at ({1+\x+(\x-1)*\dx+0.5},1.) {$x^{F}_{\i}$};
      }
    }

    \foreach \x\i in
        {
            1/1,2/1,3/{\beta_A}
        }
        {
          \ifthenelse{\x=2}
          {
            \node at ({\shiftca+1+\x+(\x-1)*\dx+0.5}, 1.75) {\dots};
          }
          {
            \draw[line width=1, |-|]
            ({\shiftca+1+\x+(\x-1)*\dx}, 2) -- ({\shiftca+2+\x+(\x-1)*\dx}, 2) node[anchor=north west] {};
            \node at ({\shiftca+1+\x+(\x-1)*\dx+0.5}, 2.5) {$c^{A_1}_{\i}$};
            \draw[line width=1, |-|]
            ({\shiftca+1+\x+(\x-1)*\dx}, 1.5) -- ({\shiftca+2+\x+(\x-1)*\dx}, 1.5) node[anchor=north west] {};
            \node at ({\shiftca+1+\x+(\x-1)*\dx+0.5},1.) {$c^{A_2}_{\i}$};
          }
        }

        \foreach \x\i in
            {
                1/1,2/1,3/{\beta_B}
            }
            {
              \ifthenelse{\x=2}
              {
                \node at ({2*\shiftca+1+\x+(\x-1)*\dx+0.5}, 1.75) {\dots};
              }
              {
                \draw[line width=1, |-|]
                ({2*\shiftca+1+\x+(\x-1)*\dx}, 2.5) -- ({2*\shiftca+2+\x+(\x-1)*\dx}, 2.5) node[anchor=north west] {};
                \node at ({2*\shiftca+1+\x+(\x-1)*\dx+0.5}, 3.) {$c^{B_1}_{\i}$};
                \draw[line width=1, |-|]
                ({2*\shiftca+1+\x+(\x-1)*\dx}, 1.75) -- ({2*\shiftca+2+\x+(\x-1)*\dx}, 1.75) node[anchor=north west] {};
                \node at ({2*\shiftca+1+\x+(\x-1)*\dx+0.5},2.1) {$c^{B_2}_{\i}$};
                \draw[line width=1, |-|]
                ({2*\shiftca+1+\x+(\x-1)*\dx}, 1.) -- ({2*\shiftca+2+\x+(\x-1)*\dx}, 1.) node[anchor=north west] {};
                \node at ({2*\shiftca+1+\x+(\x-1)*\dx+0.5},0.5) {$c^{B_3}_{\i}$};
              }
            }

\end{tikzpicture}
\label{fig:NP-h-day1}
\caption{Day 1.}
\end{figure}

As all jobs have the same processing time, we can define every job's execution interval solely by its due date.
We set all processing times to 2.
On the first day (see Figure~1), we have the following.
\begin{itemize}
    \item The due date of the jobs of $a_1$, $a_2$, and $a_3$ is $2$.
    
    \item  The due date of the jobs of $x_{\ell}^T$ and $x_{\ell}^F$ is $2\ell+3$.
    
    \item The due date of the jobs of $c_{\ell}^{A_1}$ and $c_{\ell}^{A_2}$ is $2\alpha+2\ell+5$.
    
    \item The due date of the jobs of $c_{\ell}^{B_1}$, $c_{\ell}^{B_2}$, and $c_{\ell}^{B_3}$ is
    $2\alpha+2\beta_A+2\ell+7$.
\end{itemize}

\begin{figure}[H]
\centering
\begin{tikzpicture}

  \def\dx{0.05}
  \def\shiftca{4}

  \draw[thick,->] (0,0) -- (7.5,0) node[anchor=north west] {};

  \node at (({1.75}, 3.5) {$a_1$};
  \node at (({1.75}, 3.1) {$a_2$};
  \node at (({1.75}, 2.7) {$a_3$};
  \draw[line width=1, |-|]
  ({0.5}, 3.5) -- ({1.5}, 3.5) node[anchor=north west] {};
  \draw[line width=1, |-|]
  ({0.5}, 3.1) -- ({1.5}, 3.1) node[anchor=north west] {};
  \draw[line width=1, |-|]
  ({0.5}, 2.7) -- ({1.5}, 2.7) node[anchor=north west] {};

  \node at (({1.8}, 2.2) {$x^T_1$};
  \node at (({1.8}, 1.6) {$x^F_\alpha$};
  \draw[line width=1, |-|]
  ({0.5}, 2.1) -- ({1.5}, 2.1) node[anchor=north west] {};
  \node at (({1}, 1.9) {$\vdots$};
  \draw[line width=1, |-|]
  ({0.5}, 1.5) -- ({1.5}, 1.5) node[anchor=north west] {};

  \node at (({2.}, 1.) {$c^{A_1}_{1}$};
  \node at (({2.}, 0.3) {$c^{A_2}_{\beta_A}$};
  \draw[line width=1, |-|]
  ({0.5}, 0.9) -- ({1.5}, 0.9) node[anchor=north west] {};
  \node at ({1},0.7) {$\vdots$};
  \draw[line width=1, |-|]
  ({0.5}, 0.3) -- ({1.5}, 0.3) node[anchor=north west] {};

  \foreach \x\i in
      {
          1/1,2/1,3/{\beta_B}
      }
      {
        \ifthenelse{\x=2}
        {
          \node at ({0.4*\shiftca+1+\x+(\x-1)*\dx+0.5}, 1.75) {\dots};
        }
        {
          \draw[line width=1, |-|]
          ({0.4*\shiftca+1+\x+(\x-1)*\dx}, 2.5) -- ({0.4*\shiftca+2+\x+(\x-1)*\dx}, 2.5) node[anchor=north west] {};
          \node at ({0.4*\shiftca+1+\x+(\x-1)*\dx+0.5}, 3.) {$c^{B_1}_{\i}$};
          \draw[line width=1, |-|]
          ({0.4*\shiftca+1+\x+(\x-1)*\dx}, 1.75) -- ({0.4*\shiftca+2+\x+(\x-1)*\dx}, 1.75) node[anchor=north west] {};
          \node at ({0.4*\shiftca+1+\x+(\x-1)*\dx+0.5},2.1) {$c^{B_2}_{\i}$};
          \draw[line width=1, |-|]
          ({0.4*\shiftca+1+\x+(\x-1)*\dx}, 1.) -- ({0.4*\shiftca+2+\x+(\x-1)*\dx}, 1.) node[anchor=north west] {};
          \node at ({0.4*\shiftca+1+\x+(\x-1)*\dx+0.5},0.5) {$c^{B_3}_{\i}$};
        }
      }
\end{tikzpicture}
\label{fig:NP-h-day2}
\caption{Day 2.}
\end{figure}

On the second day  (see Figure~2), we have the following.
\begin{itemize}
    \item The due date of the jobs of $a_1$, $a_2$, and $a_3$ is $2$.
    
    \item The due date of the jobs of $x_{\ell}^T$ and $x_{\ell}^F$ is $2$.
    
    \item The due date of the jobs $c_{\ell}^{A_1}$ and $c_{\ell}^{A_2}$ is 2.

    \item The due date of the jobs of $c^{B_1}_\ell$, $c^{B_2}_\ell$, and $c^{B_3}_\ell$ is
    $3+3\ell$.
\end{itemize}

\begin{figure}[H]
\centering

\begin{tikzpicture}

  \def\dx{2}
  \def\shift{0.3}

  \draw[thick,->] (0,0) -- (14.5,0) node[anchor=north west] {};

  \draw[line width=1, |-|]
  ({0.5}, 2.5) -- ({1.5}, 2.5) node[anchor=north west] {};
  \node at (({1}, 3) {$a_1$};
  \draw[line width=1, |-|]
  ({0.5}, 1.75) -- ({1.5}, 1.75) node[anchor=north west] {};
  \node at ({1},2.1) {$a_2$};
  \draw[line width=1, |-|]
  ({0.5}, 1.) -- ({1.5}, 1.) node[anchor=north west] {};
  \node at ({1},0.5) {$a_3$};

\foreach \x\i in
    {
        1/1
        ,2/1,3/\alpha
    }
    {
      \ifthenelse{\x=2}
      {
        \node at ({\shift/2+3+\x+(\x-1)*\dx+0.5+.25}, 1.75) {\dots};
      }
      {

        \draw[line width=1, |-|]
        ({2+\x+(\x-1)*\dx}, 2.) -- ({3+\x+(\x-1)*\dx}, 2.) node[anchor=north west] {};
        \node at ({2.+\x+(\x-1)*\dx+0.5}, 2.5) {$x^{T}_{\i}$};

        \draw[line width=1, |-|]
        ({1.45+\x+(\x-1)*\dx}, 1.5) -- ({2.45+\x+(\x-1)*\dx}, 1.5) node[anchor=north west] {};
        \node at ({1.45+\x+(\x-1)*\dx+0.45},1.) {$c_1(x^{T}_{\i})$};
        \draw[line width=1, |-|]
        ({2.55+\x+(\x-1)*\dx}, 1.5) -- ({3.55+\x+(\x-1)*\dx}, 1.5) node[anchor=north west] {};
        \node at ({2.55+\x+(\x-1)*\dx+0.6},1.) {$c_2(x^{T}_{\i})$};

        \draw[line width=1, |-|]
        ({\shift+4+\x+(\x-1)*\dx+.5}, 2) -- ({\shift+5+\x+(\x-1)*\dx+.5}, 2) node[anchor=north west] {};
        \node at ({\shift+4.+\x+(\x-1)*\dx+0.5+.5},2.5) {$x^{F}_{\i}$};

        \draw[line width=1, |-|]
        ({\shift+3.45+\x+(\x-1)*\dx+.5}, 1.5) -- ({\shift+4.45+\x+(\x-1)*\dx+.5}, 1.5) node[anchor=north west] {};
        \node at ({\shift+3.45+\x+(\x-1)*\dx+0.5+.45},1.) {$c_1(x^{F}_{\i})$};
        \draw[line width=1, |-|]
        ({\shift+4.55+\x+(\x-1)*\dx+.5}, 1.5) -- ({\shift+5.55+\x+(\x-1)*\dx+.5}, 1.5) node[anchor=north west] {};
        \node at ({\shift+4.55+\x+(\x-1)*\dx+0.5+.6},1.) {$c_2(x^{F}_{\i})$};

      }
    }

\end{tikzpicture}

\label{fig:NP-h-day3}
\caption{Day 3.}
\end{figure}

On the third day (see Figure~3), we have the following.
\begin{itemize}
    \item The due date of the jobs of $a_1$, $a_2$, and $a_3$ is $2$.
    
    \item The due date of the job of $x^T_\ell$ is $10\ell-4$. The due date of the job of $x^F_\ell$ is $10\ell+1$.
    
    \item Let $c_1$, $c_2$, and $c_3$ be the clauses in which $x$ appears. As we can assume that $x$ does not homogeneously appear in all three clauses, $x$ must appear negated in at most two clauses. Symmetrically, we can assume that $x$ appears non-negated in at most two clauses. Let $c_1(x^{F}_{\ell})$ and $c_2(x^{F}_{\ell})$ be the literals in which $x$ appears negated, the due date of $c_1(x^{F}_{\ell})$ is $10\ell$.
    If $c_2(x^{F}_{\ell})$ exists (\emph{i.e.}, if $x_\ell$ appears negated in more than one clause), then its due date is $10\ell+2$.
    In the same fashion the due date of $c_1(x^{T}_{\ell})$ is $10\ell-5$ and if $c_2(x^{T}_{\ell})$ exists, then its due date is $10\ell-3$.
\end{itemize}
Finally, we set $k=1$. This finishes the construction of $\mathcal{I}$, which can clearly be computed in polynomial time.
The scheduling period of $\mathcal{I}$ is $3$ days long.

$(\Rightarrow)$
We show that if $\phi$ is satisfiable, then $\mathcal{I}$ has a feasible $1$-fair schedule.
Assume that the assignment satisfies all clauses of $\phi$, then the following schedule is $1$-fair:
The job of client~$a_1$ is to be scheduled on day one, and the jobs of clients $a_2$ and $a_3$ are to be scheduled on days two and three, respectively.
For every $x_\ell$ that is assigned \verb!true!, schedule the job of $x^T_\ell$ on day one. In the same fashion, schedule on day one the job of $x^F_\ell$ for every $x_\ell$ that is assigned \verb!false!. For each variable $x_\ell$ that is set to \verb!true! schedule the job of client $x^{F}_\ell$ on day three, otherwise schedule on day three the job of $x^T_\ell$.
Let the $\ell$-th type $A$ clause be satisfied with the assignment of literal $c^{A_1}$ then on day one schedule the job of $c^{A_2}$ and on day three the job of $c^{A_1}$, otherwise schedule the job of $c^{A_1}$ on day one and the job of $c^{A_2}$ on day three.
From any type~$B$ clause schedule, without loss of generality, assume that $c^{B_1}$ satisfies the clause, then schedule its job on day three. The other clients' jobs $c^{B_2}$ and $c^{B_3}$ are to be scheduled in days one and two respectively.
This schedule is feasible because none of the scheduled jobs conflict, and it is $1$-fair because every client gets at least one job scheduled. 

$(\Leftarrow)$
We show that if 
$\mathcal{I}$
has an $1$-fair schedule, then 
$\phi$ is satisfiable.
Let $\sigma$
be a feasible and $1$-fair schedule and let $x(\sigma)$ be an assignment of $\phi$ that is computed from $\sigma$ in the following fashion: For every $\ell$ we set $x_\ell$ to \verb!true! if $x^T_\ell$ is scheduled in the first day, otherwise we set it to \verb!false!.
Assume that $\phi$ is not satisfiable by $x(\sigma)$, then there must exist a clause $c_\ell$ in $\phi$ that evaluates to~\verb!false!.
Assume that the type $A$ clause $c^A_\ell$ is not satisfied, then both literals $c^{A_1}_\ell$ and $c^{A_2}_\ell$ evaluate to \verb!false!.
Let the literal of $c^{A_1}_\ell$ be satisfied by setting $x_i$ to either \verb!true! or \verb!false!, we denote by $x^Y_i$ the client $x^T_i$ if and only $c^{A_1}_\ell$ is satisfied by setting $x_i$ to \verb!true! (otherwise $x^Y_i$ denotes $x^F_i$).
It is clear that since $\sigma$ is feasible and $1$-fair, one of $a_1$, $a_2$ and $a_3$ must be scheduled in each of the days.
It follows then that because $c^{A_1}_\ell$ and $c^{A_2}_\ell$ conflict on day one and conflict on day two with the jobs of clients $a_1$, $a_2$ and $a_3$, one of them must be scheduled on day three.
Assume, without loss of generality, that $c^{A_1}_\ell$ is scheduled on day three, then the client $x^Y_i$ must be scheduled on day one because it conflicts with $c^{A_1}_\ell$ on day three.
It then follows that $x^Y_i$ satisfies $c_\ell$ because in the assignment that corresponds to $\sigma$, a variable is set to \verb!true! if it is scheduled on day one. 

In the same fashion, assume that the type $B$ clause $c^B_\ell$ is not satisfied, then all three literals $c^{B_1}_\ell$, $c^{B_2}_\ell$ and $c^{B_3}_\ell$ evaluate to \verb!false!.
Assume, without loss of generality, that $c^{B_1}_\ell$ is scheduled in $\sigma$ on day three, then clearly $c^{B_2}_\ell$ and $c^{B_3}_\ell$ are scheduled on days one and two respectively.
Let $c^{B_1}_\ell$ be satisfied by assigning $x_i$ to \verb!true!,  then by the construction $x^T_i$ conflicts $c^{B_1}_\ell$ on day three.
Since $\sigma$ is conflictless it must be that $x^T_i$ is scheduled on day one, yet this contradicts that $c^B_\ell$ is not satisfied by $x(\sigma)$.
Similarly, when $c^{B_1}_\ell$ is satisfied by assigning $x_i$ to \verb!false!, $x^F_i$ must be scheduled on day one which contradicts our assumption that $c^B_\ell$ is not satisfied by $x(\sigma)$.
\end{proof}

We next use the NP-hardness proof for the case of $(\q,k)=(3,1)$ of \Cref{thm:k2m3} as a base case to inductively show NP-hardness for every $(\q,k)\in \mathbb{N}^2$ where $0 < k < \q-1$ and $m\geq 3$. Observe that an NP-hardness result for any pair $(\q,k)$ implies an NP-hardness for the pair $(\q+1,k+1)$, as we can add a single day with no conflicts to any instance $\mathcal{I}$ of \EQUIJIT\ with $\q$ days: Clearly~$\mathcal{I}$ is $k$-fair if and only if it is $(k+1)$-fair when the extra day without conflicts is added. Moreover, observe that NP-hardness for any pair $(\q,1)$ implies NP-hardness for $(\q+1,1)$, as can be seen by adding an additional client and an additional day to an instance $\mathcal{I}$ such that the new client conflicts every other client on each of first $\q$ days, and on the new day the jobs of \textit{all} clients are mutually conflicting. If $\mathcal{I}$ is $1$-fair, then we can clearly schedule a job of the new client on day $\q+1$ to obtain a $1$-fair schedule for the new instance. Conversely, if the new instance is $1$-fair, then we can assume by a swapping argument that the new client has its job scheduled on day $\q+1$, implying that $\mathcal{I}$ is $1$-fair as well. Observe that this can be done with day-independent processing times, i.e., each client has the same processing time throughout all days. Using these two observations, we can reduce the case of~$(3,1)$ to every instance $(3+i+j,1+i)$ by using $i$ times the first reduction after applying $j$ times the second reduction. 

\begin{corollary}
\label{cor:NPhard-pij-pj}%
The \DDEQUIJIT\ problem is NP-hard when $0<k < m-1$ and $m\geq 3$.
\end{corollary}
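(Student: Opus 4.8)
The plan is to prove the corollary by a pair of parameter-lifting self-reductions, starting from the base case $(\q,k)=(3,1)$ established in \Cref{thm:k2m3} (which in fact produces a \SEQUIJIT\ instance, and hence a fortiori a \DDEQUIJIT\ instance). Concretely, I would set up two reductions on \DDEQUIJIT\ instances: an operation $A$ sending a hard $(\q,k)$-instance to a hard $(\q+1,k+1)$-instance, and an operation $B$ sending a hard $(\q,1)$-instance to a hard $(\q+1,1)$-instance. Composing $B$ exactly $j$ times and then $A$ exactly $i$ times turns the base instance into one with $\q=3+i+j$ days and fairness parameter $k=1+i$; solving $i=k-1$ and $j=\q-2-k$ shows that every target pair with $0<k<\q-1$ and $\q\ge 3$ is reached by nonnegative $i,j$, and since $i+j=\q-3$ is a constant for each fixed target, the whole composition is a polynomial-time reduction.

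Operation $A$ appends one extra day on which no two clients conflict. Under the day-independent restriction this is realized by keeping each client's processing time $p_j$ and choosing the new day's due dates far apart so that all intervals are pairwise disjoint; every client can then be scheduled on the new day. Hence the original $\q$-day instance is $k$-fair exactly when the augmented $(\q+1)$-day instance is $(k+1)$-fair, because the extra day contributes exactly one scheduled job to each client, and the \DDEQUIJIT\ structure is preserved.

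Operation $B$ introduces one new client $w$ together with one new day $\q+1$. On each original day, $w$ is given a client-specific (but day-constant) processing time large enough that its interval overlaps those of all original clients, so scheduling $w$ there blocks every other client; on day $\q+1$ all clients are made pairwise conflicting by assigning them a common due date, so at most one job is scheduled there. For the forward direction, a $1$-fair schedule of the original instance is extended by scheduling $w$ on day $\q+1$, creating no new conflict since $w$ is absent from days $1,\dots,\q$. For the reverse direction I would use a swapping argument to assume without loss of generality that $w$ occupies day $\q+1$ and no original client does: if instead some original client $c^\ast$ sits on day $\q+1$, then $w$ (scheduled somewhere, as the schedule is $1$-fair) must lie on some day $i\le\q$ where it blocks all original clients, and relocating $w$ to day $\q+1$ while moving $c^\ast$ onto the now-free day $i$ yields an equally good schedule. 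Deleting $w$ and day $\q+1$ then recovers a $1$-fair schedule of the original instance.

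I expect the main obstacle to be the interval realizations under the day-independent restriction together with the reverse direction of operation $B$. Making $w$ conflict with all original clients simultaneously is impossible with a single uniform processing time, which is precisely why the statement concerns \DDEQUIJIT\ rather than \SEQUIJIT: one must exploit the freedom to give $w$ a larger client-specific processing time and to choose the day-$(\q+1)$ due dates so that all intervals share a common point. The swapping argument then has to be checked carefully to confirm that relocating $w$ to the all-conflicting day never destroys feasibility nor drops any other client's count below one.
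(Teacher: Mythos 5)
Your proposal is correct and follows essentially the same route as the paper: the same base case $(\q,k)=(3,1)$ from \Cref{thm:k2m3}, the same two lifting operations (a conflict-free extra day for $(\q,k)\to(\q+1,k+1)$, and a new blocking client plus an all-conflicting day with a swapping argument for $(\q,1)\to(\q+1,1)$), composed in the same order to reach every pair with $0<k<\q-1$, $\q\ge 3$. Your additional care about realizing the blocking client with a client-specific but day-constant processing time is exactly the point the paper makes when noting the construction preserves day-independent processing times.
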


Returning to the dichotomy of NP-hardness of \EQUIJIT\ with respect to the parameters $k$ and $\q$, it is clear that the cases where $k=0$ and $k \geq \q$ are straightforward. When $k=0$ every feasible schedule is $k-$fair, making it trivially a YES instance.
When $k=\q$, only conflict-free instances are classified as YES instances, as every job of every client must be scheduled (instances where $k>\q$ are as well trivially NO instances).
These observations leave the remaining case of \EQUIJIT\ when $k=\q-1$, for which we give a polynomial time algorithm using a reduction to \textsc{2-SAT}.

\begin{theorem}
\label{thm:km-1}
The \EQUIJIT\ problem is solvable in $\bigO(\q n^2+n\q^2)$ time when $k=\q-1$.
\end{theorem}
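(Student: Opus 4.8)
The plan is to carry out the reduction to \textsc{2-SAT} announced before the theorem. The crucial structural observation is that when $k=\q-1$ every client must be scheduled on all but at most one of the $\q$ days; equivalently, each client may be \emph{rejected} on at most one day. On a fixed day $i$, a feasible schedule $\sigma_i$ is an independent set in the day-$i$ conflict graph $G_i$, so the set of clients rejected on day $i$ must form a vertex cover of $G_i$: for every conflicting pair (every edge of $G_i$) at least one of the two clients has to be rejected on that day. Thus a feasible $(\q-1)$-fair solution is exactly a choice, for each day, of a vertex cover of $G_i$, subject to the global constraint that no client lies in two of these covers.

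I would introduce one Boolean variable $y_{i,j}$ for every client $j\in\{1,\dots,n\}$ and every day $i\in\{1,\dots,\q\}$, with intended meaning ``client $j$ is \emph{not} scheduled (rejected) on day $i$''. Two families of width-two clauses then capture feasibility exactly. First, for every day $i$ and every edge $\{j_1,j_2\}\in E_i$ I add the clause $(y_{i,j_1}\lor y_{i,j_2})$, forcing at least one endpoint of each conflict to be rejected so that the scheduled clients form an independent set. Second, to encode the $(\q-1)$-fairness requirement, for every client $j$ and every pair of distinct days $i_1\neq i_2$ I add the clause $(\lnot y_{i_1,j}\lor\lnot y_{i_2,j})$, forbidding a client from being rejected on two different days. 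Since every clause has two literals, the resulting formula is a genuine \textsc{2-SAT} instance.

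The correctness argument is a direct translation in both directions: from a satisfying assignment I recover a solution by scheduling client $j$ on day $i$ exactly when $y_{i,j}$ is false; the first clause family guarantees each $\sigma_i$ is conflict-free, while the second guarantees each client is rejected on at most one day and hence scheduled on at least $\q-1$ days. Conversely, given a feasible $(\q-1)$-fair solution, setting $y_{i,j}$ true precisely when $j$ is unscheduled on day $i$ satisfies every clause. For the running time I would first build the $\q$ conflict graphs (each an interval graph) and emit the edge clauses, of which there are $\sum_i |E_i|=\bigO(\q n^2)$, and then emit the $n\binom{\q}{2}=\bigO(n\q^2)$ at-most-one clauses; solving the resulting \textsc{2-SAT} instance takes time linear in its size via the standard strongly-connected-components algorithm, giving the claimed $\bigO(\q n^2+n\q^2)$ bound.

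The step I expect to require the most care is verifying that width-two clauses genuinely capture both constraints and that the clause count matches the stated bound. In particular one must check that the pairwise ``at least one rejected'' edge clauses correctly force an independent set even inside large cliques of $G_i$ (they do, since any violation of independence means some single edge has both endpoints scheduled), and that the ``at most one rejection per client'' requirement is expressible as $\bigO(\q^2)$ binary clauses per client rather than needing a global counting gadget that would leave the 2-SAT fragment. Everything else is routine bookkeeping.
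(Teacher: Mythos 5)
Your proposal is correct and is essentially the paper's own proof: the paper uses variables $x_{i,j}$ meaning ``scheduled'' with conflict clauses $(\neg x_{i,j_1}\vee\neg x_{i,j_2})$ and validation clauses $(x_{i_1,j}\vee x_{i_2,j})$, which are exactly your clauses under the substitution $y_{i,j}=\neg x_{i,j}$, with the same clause counts and the same linear-time \textsc{2-SAT} step. The only cosmetic difference is your rejection/vertex-cover phrasing versus the paper's scheduling phrasing.
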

\begin{proof}
To show the above theorem we reduce \EQUIJIT\ with $k=\q-1$ to \textsc{2-SAT}.
With each combination of client $j$ and day $i$ we associate a variable $x_{i,j}$. We then define two sets of clauses. The first set (denoted as the set of \emph{conflicting clauses}) ensures that no two conflicting jobs are scheduled, while the second set (denoted as the set of \emph{validation clauses}) ensures that for each client not more than one job is discarded. Accordingly, for every pair of clients $j_1$ and $j_2$ that are conflicting on day $i$, we add the conflict clause $(\neg x_{i,j_1} \vee \neg  x_{i,j_2})$ that is unsatisfied if both variables are set to \verb!true!. Moreover, for every pair of client's jobs in any two different days, $i_1$ and $i_2$ ($1 \leq i_1 < i_2 \leq \q$) we add a validation clause $(x_{i_1,j} \vee x_{i_2,j})$ which cannot be satisfied if both variables are set to $\verb!false!$.

The size of this construction is polynomial in the size of the input with $\bigO(n\q)$ variables. We create
$\bigO(n\q^2)$ clauses to ensure that at least $\q-1$ jobs of every client are scheduled with $\binom{\q}{2}$ many such clauses per client. For every day we keep additional $\bigO(n^2)$ clauses, as many as the edges in the conflict graph, which ensures that no two conflicting jobs are simultaneously scheduled. Hence there are $\bigO(\q n^2+n\q^2)$ clauses in the constructed \textsc{2-SAT} formula which can be solved in linear time with respect to its size~\cite{aspvall1979linear}.

$(\Leftarrow)$
If there exists a $k$-fair schedule for \EQUIJIT, then the \textsc{2-SAT} is satisfiable.
To obtain a satisfying assignment from a feasible schedule, simply set $x_{i,j}$ to $\verb!true!$ if and only if the job of client $j$ is scheduled on day $i$.
In a feasible schedule, no two scheduled jobs are in conflict.
It follows that in no conflicting clause both variables are set to true, which means that all conflicting clauses are satisfied by a feasible schedule.
If the schedule is $k$-fair (for $k=\q-1$), then no client has two or more rejected jobs.
All validation clauses are therefore satisfied as every validation clause has at least one variable set to $\verb!true!$.

$(\Rightarrow)$
If the \textsc{2-SAT} formula is satisfiable, then there exists a $k$-fair schedule. To obtain such a schedule, we schedule client $j$'s job on day $i$ if and only if $x_{i,j}$ is $\verb!true!$ in the assignment.
We know that none of these jobs are in conflict as the conflict clauses are satisfied and hence the schedule is feasible. We know as well that for every client at most one job is rejected, otherwise a validation clause will not be satisfied.
\end{proof} 

Combining \Cref{cor:NPhard-pij-pj} with \Cref{thm:km-1} completes the proof of \Cref{thm:sec3}. Observe that \Cref{cor:NPhard-pij-pj} also gives us the first part of \Cref{thm:sec4b}.


\section{Day-Independent Due Dates and Processing Times}
\label{sec:op}%

In this section, we will examine the \EQUIJIT\ problem in the case where either processing times are day-independent, or where the due dates are day-independent, \emph{i.e.}, problems \DDEQUIJIT\ and \OPEQUIJIT. We begin by completing the proof of \Cref{thm:sec4b} by showing that \UEQUIJIT, the special case of \EQUIJIT\ where all processing times are of unit length, is polynomial-time solvable. This is done via a reduction to the \textsc{Bipartite Maximum Matching} problem.

\begin{theorem}
\label{thm:bipartite}%
The \UEQUIJIT\ problem is solvable in $\bigO(n^{1.5} \cdot \q^{2.5})$ time.
\end{theorem}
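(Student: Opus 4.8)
The plan is to reduce $\UEQUIJIT$ to \textsc{Bipartite Maximum Matching}, exploiting the especially simple structure of the daily conflict graphs under unit processing times. First I would observe that since every job has unit length and integer due dates, the job of client $j$ on day $i$ occupies the unit interval $(d_{i,j}-1,\,d_{i,j}]$, and two such jobs on the same day conflict if and only if they share the same due date. Hence each day-$i$ conflict graph $G_i$ is a disjoint union of cliques, one clique per distinct due-date value. Calling each such value a \emph{slot}, a feasible schedule on day $i$ is exactly a choice of at most one job per occupied slot.

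Next I would build a bipartite graph $B$. On the left I place $k$ copies $j^1,\dots,j^k$ of every client $j$; on the right I place one vertex for every occupied slot, i.e.\ every pair $(i,d_{i,j})$ appearing in the input. For each day $i$ I connect every copy of client $j$ to the slot vertex $(i,d_{i,j})$, so that each copy has exactly one edge per day. I would then claim that the instance $\mathcal{I}$ is a \textsc{Yes}-instance of $\UEQUIJIT$ if and only if $B$ admits a matching saturating all $nk$ client-copies (assuming $k\le\q$, as otherwise $\mathcal{I}$ is trivially infeasible).

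The heart of the argument is this equivalence, and it is where I would be most careful. From a feasible $k$-fair schedule, each client is scheduled on at least $k$ days, on pairwise distinct slot vertices (one per day); matching $k$ of its copies to these slot vertices yields a valid matching, since feasibility guarantees no slot vertex is used by two clients. Conversely, given a saturating matching, I would schedule client $j$'s job on day $i$ precisely when some copy of $j$ is matched to $(i,d_{i,j})$. Two points must be verified: (a) the $k$ copies of a single client are matched to $k$ distinct slot vertices, which necessarily lie on $k$ distinct days, since all copies of $j$ share the same single slot vertex on each day and that vertex can be matched only once — hence $\sum_i Z_{i,j}\ge k$; and (b) each slot vertex is matched at most once, so no two scheduled jobs share a slot, giving feasible daily schedules. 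The main obstacle is precisely point (a): ensuring that saturating the copies forces distinct \emph{days} rather than merely distinct slot vertices, which is exactly what the one-edge-per-day structure of the construction guarantees.

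Finally I would bound the running time. Assuming $k\le\q$, the left side has $nk\le n\q$ vertices, the right side has at most $n\q$ slot vertices, and $B$ has $O(n\q^2)$ edges since each of the $O(n\q)$ copies contributes $\q$ edges. Running the Hopcroft--Karp algorithm, which computes a maximum bipartite matching in $O(E\sqrt{V})$ time, then yields $O\!\left(n\q^2\cdot\sqrt{n\q}\right)=O\!\left(n^{1.5}\,\q^{2.5}\right)$ time, as claimed.
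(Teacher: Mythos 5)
Your proof is correct, and it follows the same high-level strategy as the paper: a reduction to \textsc{Bipartite Maximum Matching} solved with the Hopcroft--Karp algorithm, based on the observation that under unit processing times two jobs on the same day conflict exactly when they share a due date, so a feasible daily schedule picks at most one job per due-date slot. The gadget, however, is genuinely different. The paper puts one \emph{job vertex} per client--day pair on the left ($nm$ vertices) and, besides the due-date vertices, adds $m-k$ \emph{rejection vertices} per client on the right; fairness is encoded by requiring that \emph{all} $nm$ job vertices be matched, since each client can ``absorb'' at most $m-k$ rejections. You instead place $k$ copies of each client on the left and use no rejection vertices at all; fairness is encoded by requiring that all $nk$ copies be matched, and the crucial point---which you correctly isolate and justify---is that the one-slot-per-day structure forces the $k$ matched copies of a client onto $k$ distinct days, not merely $k$ distinct slot vertices. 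The two constructions are dual encodings of the constraint ``at least $k$ scheduled, equivalently at most $m-k$ rejected'': yours is somewhat more economical (roughly $nk+nm$ vertices and $nkm$ edges versus $2nm+n(m-k)$ vertices and $nm(m-k+1)$ edges), while the paper's represents every job explicitly, which makes reading off the full schedule immediate. Both graphs have $O(nm)$ vertices and $O(nm^2)$ edges, so both arguments give the claimed $O\left(n^{1.5} m^{2.5}\right)$ running time.
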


\begin{proof}
We show the above theorem with a reduction from \UEQUIJIT\:to the \textsc{Bipartite Maximum Matching} problem. 
Informally, we create a bipartite graph with the set of \textit{job vertices} on one side and the set of \textit{due dates vertices} on the other, to the latter we add a set of \textit{rejection vertices}, $\q-k$ per client. A matching in this graph where all job vertices are matched corresponds to a schedule in which every job is either completed at the due date to which it was matched or is rejected. The schedule is fair if \textit{all} job vertices are matched since the number of rejection vertices of every client is bounded by $\q-k$ (see \Cref{fig:bipartiteM} for illustration).

Given an instance of \UEQUIJIT, we create the following instance of the \textsc{Bipartite Maximum Matching} problem:
\begin{itemize}
\item For each $j\in \{1,\ldots,n\}$ and each $i\in \{1,\ldots,\q\}$, we create a \textit{job} vertex $v_{i,j}$. The set of vertices $V = \{v_{i,j} : 1 \leq i \leq n, 1 \leq j \leq \q\}$ represents all input jobs of all clients over all days.
\item For each $i\in \{1,\ldots,\q\}$ and each $d\in \{d_{i,j} : 1 \leq j \leq n\}$, we create a \textit{due date} vertex $u_{i,d}$ if no such vertex already exists. The set $U=\{u_{i,d_{i,j}}: 1 \leq j \leq n, 1\leq i \leq \q\}$ represents all distinct possible due dates of all input jobs that get scheduled.
\item For each $j \in \{1,\ldots,n\}$ and each $\ell \in \{1,\ldots,\q-k\}$, we create a \textit{rejection} vertex $w_{\ell,j}$. Matching a job vertex to a vertex of $W=\{ w_{\ell,j} : 1\leq j \leq n, 1\leq \ell \leq \q-k \}$ is analogous to rejecting the corresponding job.
\end{itemize}

The edges of $G$ are constructed as follows. For each $j\in \{1,\ldots,n\}$ and each $i\in \{1,\ldots,\q\}$ we connect $v_{i,j}$ to:
\begin{itemize}
\item vertices $w_{1,j},\ldots, w_{\q-k,j}$, and
\item vertex $u_{i,d_{i,j}}$.
\end{itemize}

For the correctness, we prove that $G$ has a matching of size $n\q$ if and only if there exist a schedule $\sigma$ where every client is served on at least $k$ days.

$(\Rightarrow)$
We show how to obtain a $nm$ size matching $M$ given a $k$-fair schedule $\sigma$. Let $\sigma$ be a schedule where every client is served on at least $k$ days. Consider the job of client $j$ on day $i$, for some $j \in \{1,\ldots,n\}$ and $i \in \{1,\ldots,\q\}$. If the job is scheduled in $\sigma_i$ and its completion time is $d=d_{i,j}$, then there is an edge
$\{v_{i,j},u_{i,d}\}$ in $G$. We add this edge to the matching $M$. Otherwise the job of client $j$ is not scheduled on day $i$. Let $\ell$ denote the number of days prior to $i$ in which the job of client $j$ is not scheduled. Then $\ell < m-k$, as otherwise $\sigma$ is not a $k$-fair schedule. We add the edge $\{v_{i,j},w_{\ell+1,j}\}$ to the matching $M$. Observe that no pair of edges in $M$ have a common vertex, and hence $M$ is matching of size $nm$ in $G$.

$(\Leftarrow)$
Assume that $G$ contains a matching $M$ of size~$nm$. First note that the fact that $G$ is bipartite with one part being $V$, and $|V|=\q n$, implies that every vertex in the set of job vertices $V$ has to be matched. Let $v_{i,j}\in V$. Then this vertex is either matched to a vertex in $U$ or a vertex in $W$. We create the schedule $\sigma$ as follows.
\begin{itemize}
\item Suppose that $v_{i,j}$ is matched to some vertex $w_{\ell_0,j_0}\in W$. Note that $j_0 = j$ and $\ell_0 \leq \q-k$ by the construction of $G$. Then we do not schedule the job of client $j$ on day $i$.
\item Suppose that $v_{i,j}$ is matched to some $u_{i_0,d}\in U$. Observe that $i=i_0$ and $d = d_{i,j}$ by the construction of $G$. We set $\sigma$ so that the job of client $j$ is scheduled on day $i$. Note that the fact that $u_{i_0.d}$ cannot be matched to any other vertex in $V$ guarantees that no two jobs can be simultaneously scheduled.
\end{itemize}

The schedule $\sigma$ is feasible with no conflicting jobs, otherwise, there are two job vertices that are matched to the same due date vertex. Assume towards a contradiction that $\sigma$ is not a $k$-fair schedule, then there is some client with less than $k$ jobs scheduled over the period of $m$ days. However, this leads to a contradiction as every vertex in $V$ must be matched, either to a vertex in $U$ or $W$. If client $j$ has less than $k$ scheduled jobs, then less than $k$ vertices of type $v_{i,j}$ are matched to vertices in $U$ for $i\in \{1,\dots,\q\}$, and the rest of these vertices can only be matched to at most $m-k$ vertices of type $w_{i,j}$ in $W$.

We observe that $G$ has $O(n\q)$ vertices and $O(\q^2n)$ edges, and it can be constructed in $O(\q^2n)$ time. Using the algorithm of \citet{hopcroft1973n} for the \textsc{Bipartite Maximum Matching} problem we can solve \UEQUIJIT~in $O(E\sqrt{V})$ where $E$ and $V$ are the number of edges and vertices in the graph. The theorem thus follows.
\end{proof}

\begin{figure}
    \centering
\begin{tikzpicture}

  \node[] at (1.2*1+3.5+1.75-4.5, -1.) {$\q$};
   \draw[line width=1, |-|]
  (1.2+3.5-4.5, -.7) -- (1.2*2+5.75-4.5, -.7) node[anchor=north] {};
  \foreach \i in {1,...,3} {
    \node[draw, circle, fill=gray!40] (A\i) at (\i*1.2-.5, 0) {$v_{\i,j'}$};
  }

  \node[] at (1.2*1+3.5+1.75, -1.) {$\q$};
   \draw[line width=1, |-|]
  (1.2+3.5, -.7) -- (1.2*2+5.75, -.7) node[anchor=north] {};
 
  \foreach \i in {1,...,3} {
    \node[draw, circle, fill=gray!40] (C\i) at (\i*1.2+4, 0) {$v_{\i,j}$};
  }
  
  \node[] at (1.2*1+4.6, 5.1) {$\q-k$};
   \draw[line width=1, |-|]
  (1.2+3.5, 4.75) -- (1.2*2+4.5, 4.75) node[anchor=north] {};
  \foreach \i in {1,...,2} {
    \node[draw, circle, fill=gray!60] (B\i) at (\i*1.2+4, 4) {$w_{\i,j'}$};
  }

  \node[] at (1.2*1+7.6, 5.1) {$\q-k$};
   \draw[line width=1, |-|]
  (1.2+6.5, 4.75) -- (1.2*2+7.5, 4.75) node[anchor=north] {};
  \foreach \i in {1,...,2} {
    \node[draw, circle, fill=gray!60] (D\i) at (\i*1.2+7, 4) {$w_{\i,j}$};
  }

    \node[draw, circle, fill=gray!10] (F1) at (1*1.2-2, 4) {$u_{1,d_1}$};
    \node[draw, circle, fill=gray!10] (F2) at (2*1.2-2, 4) {$u_{1,d_2}$};
    \node[draw, circle, fill=gray!10] (F3) at (3*1.2-2, 4) {$u_{2,d_3}$};
    \node[draw, circle, fill=gray!10] (F4) at (4*1.2-2, 4) {$u_{3,d_4}$};

  \foreach \i in {1,...,3} {
    \foreach \j in {1,...,2} {
      \draw (A\i) -- (B\j);
      \draw (C\i) -- (D\j);
    }
  }

    \draw (A1) -- (F1);
    \draw (C1) -- (F2);

    \draw (C2) -- (F3);
    \draw (A2) -- (F3);

    \draw (C3) -- (F4);
    \draw (A3) -- (F4);
    
\end{tikzpicture}

\caption{
The construction of a bipartite graph from an instance of \UEQUIJIT\ with $\q=3$ and $k=1$, where only the relevant vertices for two clients $j$ and~$j'$ are depicted. Clients $j$ and~$j'$ have different due date on day 1, and the same due date on day 2 and 3. Both jobs of client~$j$ and~$j'$ have two rejection vertices with which they can also be potentially matched.}
\label{fig:bipartiteM}%
\end{figure}
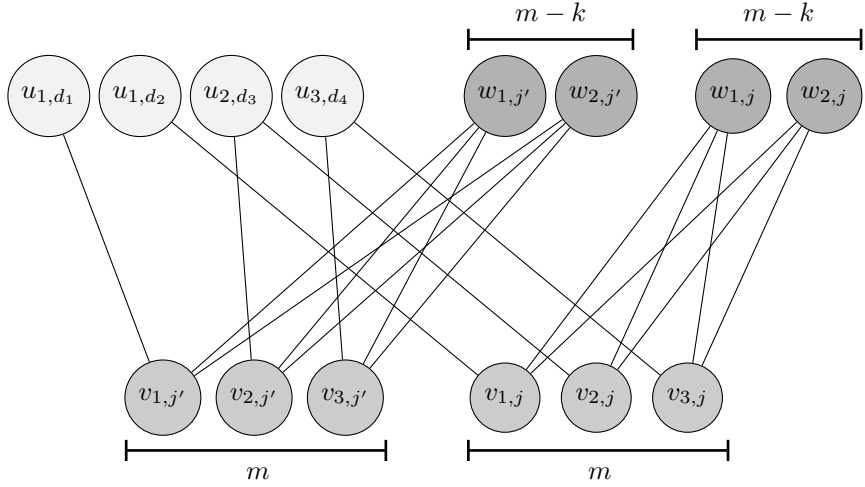

Note that \Cref{thm:bipartite} together with \Cref{cor:NPhard-pij-pj} provides the proof of \Cref{thm:sec4b}. We next turn our attention to day-independent due dates, \emph{i.e.}, the case where $d_{i,j}=d_j$. As it turns out, this variant when $k=1$ is rather similar to the  \threefield{R}{}{\sum_j Z_j} problem. In \threefield{R}{}{\sum_j Z_j} we are given a set of $n$ jobs, where job $j$ has a due date $d_j$. We are given as well a set of $m$ machines with unrelated speeds, \emph{i.e.}, $p_{i,j}$ is the processing time of job $j$ on machine $i$. We are then asked whether we can execute \textit{all} jobs in a JIT mode; that is, whether we can schedule all jobs on the set of $m$ machines such that all jobs on each machine are non-conflicting. The \threefield{R}{}{\sum_j Z_j} is strongly NP-hard~\cite{sung2005maximizing} and was recently proven to be W[1]-hard with respect to the number of machines~\cite{hermelin2022hardness}.

We first show that the NP-hardness of \threefield{R}{}{\sum_j Z_j} implies that \OPEQUIJIT\ is NP-hard.

\begin{theorem}\label{thm:op-np-w1-hard}
The \OPEQUIJIT\ problem is NP-hard.
\end{theorem}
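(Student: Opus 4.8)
The plan is to reduce from \threefield{R}{}{\sum_j Z_j}, which is strongly NP-hard~\cite{sung2005maximizing}, exploiting the structural parallel the surrounding text already hints at: the $m$ unrelated machines will play the role of the $m$ days, and the $n$ jobs will play the role of the $n$ clients. Concretely, given an instance of \threefield{R}{}{\sum_j Z_j} with jobs $1,\ldots,n$, machines $1,\ldots,m$, processing times $p_{i,j}$, and due dates $d_j$, I would build an instance of \OPEQUIJIT\ with $n$ clients and $\q=m$ days in which client $j$ on day $i$ has processing time $p_{i,j}$ and day-independent due date $d_{i,j}=d_j$, and set the fairness parameter to $k=1$. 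Since the processing times and due dates are copied verbatim, the construction is polynomial, and the execution interval $(d_j-p_{i,j},d_j]$ of client $j$ on day $i$ coincides exactly with the execution interval of job $j$ on machine $i$. Hence two clients conflict on day $i$ if and only if the corresponding jobs conflict on machine $i$, so a feasible schedule $\sigma_i$ on day $i$ is precisely a set of jobs that can be run JIT on machine $i$.

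For the forward direction, suppose all jobs can be executed JIT, so that each job $j$ is assigned to some machine $\mu(j)$ and no two jobs sharing a machine are in conflict. I would schedule client $j$'s job only on day $\mu(j)$. Each day's schedule is then conflict-free by the machine constraint, and every client is served on exactly one day, so $\sum_i Z_{i,j}=1\geq k$; the resulting solution is therefore $1$-fair.

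The direction that needs the most care, and the main obstacle, is the converse, because a $1$-fair schedule of \OPEQUIJIT\ may serve a client on \emph{several} days, whereas \threefield{R}{}{\sum_j Z_j} demands that every job be assigned to exactly one machine. The key observation is that fairness with $k=1$ already guarantees that each client is served on at least one day; I would therefore pick, for every client $j$, an arbitrary day $i(j)$ on which its job is scheduled and assign job $j$ to machine $i(j)$. Restricting to this one job per client only removes jobs from each day's schedule, so conflict-freeness is preserved, and the resulting assignment runs all $n$ jobs JIT. This establishes the equivalence of the two instances and hence the NP-hardness of \OPEQUIJIT, already for $k=1$. The only genuine subtlety is confirming that the ``at least one'' quantifier in the fairness condition can be thinned down to the ``exactly one'' assignment required by \threefield{R}{}{\sum_j Z_j}; the thinning is harmless precisely because discarding scheduled jobs can never create a conflict.
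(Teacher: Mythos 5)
Your proposal is correct and is essentially the paper's own proof: the same reduction from \threefield{R}{}{\sum_j Z_j} with jobs playing the role of clients, machines playing the role of days, processing times copied verbatim, day-independent due dates $d_j$, and fairness parameter $k=1$, with the same two-directional correspondence between JIT machine schedules and $1$-fair day schedules. If anything, your handling of the converse direction is more careful than the paper's, which implicitly treats each client as served exactly once, whereas you explicitly thin a client served on several days down to a single machine assignment and note that discarding scheduled jobs cannot create conflicts.
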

\begin{proof}
We give a reduction from \threefield{R}{}{\sum_j Z_j} to \OPEQUIJIT\ with $k=1$. Given an instance of \threefield{R}{}{\sum_j Z_j} we create an instance of \OPEQUIJIT\ as follows: For every job $j$ we create a client $c_j$, and for every machine $i$ create a day $i$. The due dates are day-independent. Accordingly, $d_j$ is the common due-date of all jobs belonging to client $c_j$. The processing time of client $c_j$'s job on day $i$ is $p_{i,j}$ (that is, the processing time of job $j$ on machine $i$). We then ask if there exists a $1$-fair schedule for the constructed instance of \OPEQUIJIT.~We show below that the answer for this question is yes if and only if all jobs of \threefield{R}{}{\sum_j Z_j} can be scheduled in a JIT mode.

$(\Leftarrow)$
Given a $1$-fair schedule for \OPEQUIJIT\ we obtain a schedule $\sigma$ for \threefield{R}{}{\sum_j Z_j} by scheduling on machine $i$ all jobs that are scheduled on day $i$ of \OPEQUIJIT.       Clearly, all jobs are scheduled in $\sigma$ as each of the $n$ clients was served at once in the $1$-fair schedule.
There are no conflicts on any machine as there were no conflicts on any of the days in the $1$-fair schedule.
    
$(\Rightarrow)$
Given a schedule $\sigma$ for \threefield{R}{}{\sum_j Z_j} we obtain a $1$-fair schedule for \OPEQUIJIT\ by scheduling on day $i$ all jobs that are machine $i$ of \threefield{R}{}{\sum_j Z_j}.
Clearly, we have created a conflictless and $1$-fair as all jobs were conflictlessly scheduled in $\sigma$, hence every client is satisfied on exactly one day.
\end{proof}

The problem \threefield{R}{}{\sum_j Z_j} admits a
polynomial time algorithm for a constant number of machines~\cite{sung2005maximizing}. Note that this result carries over to \OPEQUIJIT\ where $k=1$, as can be shown by the inverting the construction of \Cref{thm:op-np-w1-hard}. Nevertheless, our next result generalizes this algorithm for any value of $k$ and proves that \OPEQUIJIT\ is polynomial-time solvable when the number of days $\q$ is constant. Recall that $k\le m$ in all non-trivial instances, hence we can assume that if $m$ is constant, then $k$ is constant as well.

\begin{theorem}
\label{thm:opxp}%
The \OPEQUIJIT\ problem is solvable in $O(m^{k+1}n^{m+1})$ time.
\end{theorem}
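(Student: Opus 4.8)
The plan is to exploit the defining feature of \OPEQUIJIT{}—that every client~$j$ has one common due date~$d_j$, so its job occupies the interval $(d_j-p_{i,j},d_j]$ on \emph{every} day~$i$—to reduce the problem to a single dynamic program over the clients taken in one common order. First I would sort the clients so that $d_1\le d_2\le\cdots\le d_n$. Because all of client~$j$'s jobs end exactly at~$d_j$, this one ordering simultaneously sorts the intervals by right endpoint on all~$\q$ days at once. On any fixed day, a chosen set of jobs is conflict-free precisely when, read in this order, each chosen job starts no earlier than the previous chosen job ends; and since jobs are completed just in time, the completion time of the last job scheduled on day~$i$ equals the due date of the corresponding client. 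This is exactly where day-independence of due dates is essential: a single global sort could not order the intervals on all days simultaneously if a client's interval ended at different times on different days.

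This suggests the state: process clients in the sorted order and maintain a vector $(t_1,\dots,t_\q)$, where $t_i$ is the completion time (equivalently, the due date of the last-scheduled client) on day~$i$, with $t_i=0$ initially. When we reach client~$j$, we may schedule it on a day~$i$ only if $d_j-p_{i,j}\ge t_i$, and doing so updates $t_i$ to~$d_j$; to respect fairness we must place client~$j$ on at least~$k$ days before moving on to client~$j+1$. Each $t_i$ takes one of the at most $n+1$ values in $\{0,d_1,\dots,d_n\}$, so there are $O(n^\q)$ reachable states, and the instance is a YES-instance iff some state survives after all $n$ clients have been processed. I would store the table simply as a reachability (boolean) set over states, which suffices for the decision problem.

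The key to the claimed running time—and the step I expect to be the main obstacle—is an exchange argument that bounds the branching at each client. Scheduling client~$j$ on a day~$i$ can only raise~$t_i$ (from a value $\le d_j$ to~$d_j$), and a larger $t_i$ only \emph{tightens} the constraint $d_{j'}-p_{i,j'}\ge t_i$ for every later client~$j'$; hence placing a client on extra days never helps a subsequent client and can always be undone while preserving feasibility and the ``at least~$k$ days'' requirement. Consequently we may assume without loss of generality that every client is scheduled on \emph{exactly}~$k$ days, so at client~$j$ we need only enumerate the $\binom{\q}{k}\le \q^{k}$ size-$k$ subsets of days, testing feasibility and computing the successor state in $O(\q)$ time for each. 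Verifying that this ``exactly-$k$'' reduction loses no solution is precisely what the monotonicity exchange guarantees, and it is the crucial ingredient that replaces a naive $2^\q$ branching by the $\q^{k}$ factor.

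Putting this together, there are $n\cdot O(n^\q)=O(n^{\q+1})$ (client, state) pairs, and processing each pair costs $O(\q^{k}\cdot \q)=O(\q^{k+1})$, for a total of $O(\q^{k+1}n^{\q+1})$ time; the initial sort costs $O(n\log n)$ and is dominated. The only remaining routine work is to confirm the standard interval-feasibility characterization (a set of intervals ordered by right endpoint is conflict-free iff each starts no earlier than its predecessor ends), which certifies both directions of correctness of the dynamic program.
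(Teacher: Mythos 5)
Your proposal is correct and follows essentially the same route as the paper's proof: sort clients by their common due dates, run a dynamic program over clients whose state records, for each day, the due date (equivalently, the index) of the last client scheduled on that day, restrict attention to schedules placing each client on exactly $k$ days, and enumerate the $\binom{\q}{k}$ day-subsets per client with the feasibility test $d_j - p_{i,j} \ge t_i$, yielding the same $O(\q^{k+1}n^{\q+1})$ bound. The only difference is cosmetic (the paper stores last-client indices $[j^*,j_1,\ldots,j_\q]$ rather than completion times, and treats the exactly-$k$ reduction as immediate, which your exchange argument spells out explicitly).
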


\begin{proof}
Given an instance of \OPEQUIJIT\, we order the set of clients $\{1,\ldots,n\}$ in $O(n \log n)$ time in non-decreasing order of their due dates. Thus, assume henceforth that we have~$d_1 \leq d_2 \leq \cdots \leq d_n$. Let $j^{*} \in \{1,\ldots,n\}$, and consider a partial schedule $\sigma=(\sigma_1,\ldots,\sigma_m)$ defined on the of clients $\{1,\ldots,j^{*}\}$ (that is, $\sigma_i \subseteq \{1,\ldots,j^{*}\}$ for each $i \in \{1,\ldots,m\}$). We say that $\sigma$ is feasible if each $\sigma_i$ contains no conflicting jobs.
Further, we let $\Sigma_{j^{*}}$ denote the set of all feasible schedules on $\{1,\ldots,j^{*}\}$ with $\sum_i Z_{i,j} = k$ for every $j \in \{1, \ldots, j^*\}$ (that is, \emph{exactly} $k$ of the jobs of each client from $\{1, \ldots, j^*\}$ are scheduled). An instance of \OPEQUIJIT\ is clearly a YES instance if and only if $\Sigma_n \neq  \emptyset$. 

We represent a schedule $\sigma \in \Sigma_{j^{*}}$ with the state $[j^{*},j_1,\ldots,j_\q]$ in which $j_i$ represents the index of the last scheduled client on day~$i$ ($i \in \{1,\ldots,n\}$). Consider $\Sigma_1$.
It is obvious that every schedule~$\sigma$ for the client set $\{1\}$ that schedules $k$ jobs of client 1 is feasible and thus contained in $\Sigma_1$.
Thus, we have~$|\Sigma_1| = \binom{m}{k}$, and~$\Sigma_1$ can be computed in $O(m^k)$ time. We next show how to compute the set $\Sigma_{j^{*}}$ from $\Sigma_{j^{*}-1}$. 

To compute $\Sigma_{j^{*}}$ from $\Sigma_{j^{*}-1}$ we iterate over the $\binom{\q}{k}$ combinations for scheduling client $j^{*}$'s jobs. Let~$S$ be one of the aforementioned combinations with $S \subseteq \{1,\ldots,\q\}$ and $|S|=k$.
For every schedule $[j^{*}-1, j_1,\ldots,j_m] \in \Sigma_{j^{*}-1}$ we check whether $p_{i,j^{*}}\leq d_{j_0} - d_{j_i}$ for every $i\in S$, \emph{i.e.}, if the jobs of client $j^{*}$ can be feasibly scheduled in the partial schedule $[j^{*}-1, j_1,\ldots,j_\q]$. If yes, we add $[j^{*},j^{'}_1,\ldots,j^{'}_m]$ to~$\Sigma_{j^{*}}$, where $j^{'}_i=j_0$ if $i\in S$ and $j^{'}_i=j_i$ otherwise.

The correctness of the above algorithm is clear. To analyze its time complexity, note that for any $j^{*}\in\{1,\ldots,n\}$ there are $O(n^m)$ possible schedules of the form $[j^{*}, j_1,\ldots,j_m]$, and so the size of each $\Sigma_{j^{*}}$ is $O(n^m)$. We compute all schedules in $\Sigma_{j^{*}}$ by iterating through all $O(m^k)$ combinations~$S$, and for each such $S$ and every $[j^{*}-1, j_1,\ldots,j_m] \in \Sigma_{j^{*}-1}$ we require $O(k)$ time, which is naturally bounded by $O(m)$. Thus, in total, we compute $\Sigma_{j^{*}}$ in $O(m^{k+1}n^m)$ time. As there are $n$ such sets to compute, we get the stated running time of the theorem.
\end{proof}

As a final result for the section, we consider the case where both the processing times and due dates are day-independent, \emph{i.e.}, the \SDEQUIJIT\ problem. This corresponds to the case that each client has the same time interval for each day, and so the conflict graph is identical for each of the days.
Recall that daily conflict graph is an interval graph,
one can compute the chromatic number $\chi(G)$ of any interval graph $G$ with $n$ vertices in $O(n\log n)$ time given its interval representation~\cite{gavril1972algorithms}. In scheduling terms, the chromatic number is the minimum number of machines required to schedule all jobs in a JIT mode. We use this fact to show that \SDEQUIJIT\ can be solved in $O(n\log n)$ time.

\begin{theorem}\label{thm:chromatic}
The \SDEQUIJIT\ problem is solvable in $O(n\log n)$ time. 
\end{theorem}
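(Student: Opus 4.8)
The plan is to exploit the fact that day-independence of \emph{both} the processing times and the due dates collapses all $m$ daily conflict graphs into a single interval graph. Concretely, since $p_{i,j}=p_j$ and $d_{i,j}=d_j$, every client $j$ occupies the identical interval $(d_j-p_j,d_j]$ on each day, so $G_1=\cdots=G_m=G$ for one fixed interval graph $G$. Under this identification a feasible daily schedule is exactly an independent set of $G$, and a feasible solution is a choice of $m$ independent sets $\sigma_1,\dots,\sigma_m$ of $G$ such that every vertex (client) lies in at least $k$ of them, i.e.\ $\sum_i Z_{i,j}\ge k$. The whole problem thus reduces to asking whether the vertices of $G$ can be covered at least $k$ times by $m$ independent sets.

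First I would prove the clean characterization: such a cover exists if and only if $m\ge k\cdot\chi(G)$, where $\chi(G)$ is the chromatic number of $G$. Because $G$ is an interval graph it is perfect, so $\chi(G)$ equals the maximum clique size $\omega(G)$, and both can be computed in $O(n\log n)$ time from the interval representation by Gavril's algorithm; the decision $m\ge k\chi(G)$ then takes constant time, giving the claimed bound.

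For sufficiency, fix a proper $\chi(G)$-coloring with independent color classes $I_1,\dots,I_{\chi(G)}$. Schedule these classes in round-robin over $k$ full passes: on the $(t\cdot\chi(G)+r)$-th day use $I_r$, for $t\in\{0,\dots,k-1\}$ and $r\in\{1,\dots,\chi(G)\}$, and leave the remaining $m-k\chi(G)\ge 0$ days empty. Each client is then scheduled exactly $k$ times, so the solution is feasible and $k$-fair. For necessity, note that since $\chi(G)=\omega(G)$ the graph contains a clique $C$ on $\chi(G)$ pairwise-conflicting clients; on any single day at most one vertex of $C$ can be scheduled, so over all $m$ days the total number of scheduled $C$-jobs is at most $m$. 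A $k$-fair solution would schedule at least $k\chi(G)$ such jobs, forcing $m\ge k\chi(G)$.

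I expect the only real content to be the characterization itself --- specifically the observation that the interval-graph identity $\chi(G)=\omega(G)$ makes the trivial upper bound (repeat a $\chi$-coloring $k$ times) and the clique lower bound meet exactly at $k\chi(G)$. Everything else (the reduction to a single interval graph, the round-robin construction, and the $O(n\log n)$ coloring routine) is routine given the tools already set up in \Cref{sec:prelim}.
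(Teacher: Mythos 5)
Your proposal is correct and follows essentially the same approach as the paper: both reduce to a single interval graph, characterize feasibility by $k\cdot\chi(G)\le m$, prove sufficiency by scheduling the $\chi(G)$ color classes $k$ times each, and prove necessity via the clique-counting argument enabled by the perfection of interval graphs ($\chi=\omega$), with the $O(n\log n)$ bound coming from Gavril's coloring algorithm.
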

\begin{proof}
Let $G=(V,E)$ be the conflict graph of a given \SDEQUIJIT\ instance. As mentioned above, we can compute the chromatic number $\chi=\chi(G)$ of $G$ in $O(n\log n)$ time. We show that there exists a feasible $k$-fair schedule for our given instance if and only if~$k \cdot \chi \leq\q$.

If $k\cdot \chi \leq \q $, then there exist $\chi$ independent sets in $G$ whose union is $V$. We can therefore schedule each of the $\chi$ independent sets on a different day. Since there are at least $k\cdot \chi$ days, each of the $\chi$ independent sets can be scheduled $k$ times without conflicts. If $k \cdot \chi > \q $, then there cannot exist a $k$-fair schedule due to the fact that interval graphs are perfect graphs, meaning that their chromatic number equals the size of their largest clique~\cite{golumbic2004algorithmic}. In our case, a clique is a subset of jobs that are mutually conflict with each other. Clearly, each of the members of the largest clique must be scheduled on a separate day and must be satisfied at least $k$ times. Seeing that we can only schedule one client in the clique per day, if $k\cdot \chi > \q $ there are simply not enough days to schedule all clients of the clique.
\end{proof}

\Cref{thm:sec4} now follows directly from \Cref{thm:op-np-w1-hard}, \Cref{thm:opxp}, and \Cref{thm:chromatic}.

\section{Complexity of Strucural Parameters}\label{sec:fpt}

This section focuses on FPT algorithms for \EQUIJIT; namely, we provide a complete proof of \Cref{thm:sec5}. As we have seen from \Cref{thm:k2m3}, the problem is also NP-hard for a constant number of days $\q$, and so we begin by showing an NP-hardness for instances where the overall conflict graph (\Cref{def:conflictgraph}) has a constant treewidth $\tau$. We then present an FPT algorithm for parameter~$m+\tau$, and finish the section by presenting an FPT algorithm with respect to~$n$.

\subsection{Hardness with respect to treewidth}
\label{sec:paraw}%

We begin by showing that \EQUIJIT\ is NP-hard even when the treewidth~$\tau$ of the overall conflict graph is constant. For this, we first show a hardness result for a generalization of {\EQUIJIT}, and then we reduce this generalized version to {\EQUIJIT}\ such that the treewidth increases only by a small constant. In the generalized version \WEQUIJIT\ of \EQUIJIT, we are given a number~$k_j$ for each client~$j$, and are asked to find a schedule with $\sum_i X_{i,j} \ge k_j$ for every $j$. Thus, each client has its own fairness parameter, instead of the instance having a single global parameter.

\begin{lemma}\label{lem:td-weighted}
The \WEQUIJIT\ problem is NP-hard even when the overall conflict graph has treewidth at most 4. 
\end{lemma}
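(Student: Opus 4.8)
The plan is to reduce from \textsc{3-SAT}. The crucial design goal is to keep \emph{all} of the combinatorial difficulty in the temporal dimension (which day each client is scheduled on) and in the per-client fairness parameters, while keeping the \emph{structure} of conflicts trivial. Concretely, I would use a constant-size \emph{core} of ``hub'' clients and attach every other client to the core only, so that the overall conflict graph (\Cref{def:conflictgraph}) consists of a bounded number of hubs together with pendant clients each adjacent only to hubs. Such a graph has a tree decomposition whose bags are the hub set plus a single pendant, and hence treewidth at most a small constant. This is exactly the feature that a direct emulation of the (possibly dense) clause--variable incidence structure would destroy, and it is the reason the per-client parameters $k_j$ of \WEQUIJIT\ are needed.

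Two hubs suffice for the encoding. I would introduce an \emph{assignment hub} $A$ and, for each variable $x_\ell$, two days $d_\ell^+$ and $d_\ell^-$, interpreting ``$A$ scheduled on $d_\ell^+$'' as $x_\ell=\texttt{true}$ and ``$A$ scheduled on $d_\ell^-$'' as $x_\ell=\texttt{false}$. To force a clean Boolean value I add, per variable, a pendant $g_\ell$ with $k_{g_\ell}=1$ that conflicts with $A$ on both $d_\ell^+$ and $d_\ell^-$; scheduling $g_\ell$ forces $A$ to be absent on at least one day of the pair, so $A$ occupies \emph{at most} one day of each pair. Setting $k_A$ equal to the number of variables then forces, by a counting argument over the pairs, that $A$ occupies \emph{exactly} one day of every pair. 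The second hub $Q$ is a \emph{blocker} with $k_Q$ equal to the number of days; I make $Q$ conflict with each pendant on every day \emph{except} the days relevant to that pendant, so a pendant can be scheduled only on its relevant days and only when $A$ happens to be absent there.

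For each clause $C$ I add a pendant $p_C$ with $k_{p_C}=1$ whose relevant days are the reading days of its literals: a positive literal $x_a$ is read on $d_a^-$ and a negative literal $\neg x_b$ on $d_b^+$, arranged so that $p_C$ can be scheduled on a literal's reading day precisely when that literal is true under the value encoded by $A$. Hence $p_C$ meets its quota iff at least one of its literals is satisfied. The two directions are then straightforward: a satisfying assignment yields the schedule in which $A$ realizes that assignment, each $g_\ell$ takes the free day of its pair, $Q$ is scheduled on every day, and each $p_C$ is placed on the reading day of a satisfied literal; conversely, any feasible schedule forces $A$ into a single Boolean value per variable (by the counting argument) and forces every clause pendant onto a reading day of a true literal, yielding a satisfying assignment.

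The treewidth bound is immediate from the hub/pendant structure, and stays within $4$ with room to spare for a few auxiliary hubs. I expect the main obstacle to be the \emph{confinement} step: since in \EQUIJIT\ every client owns a job on \emph{every} day, I must prevent pendants from meeting their quota on irrelevant days, which is precisely the role of the blocker $Q$, and I must verify that the conflicts imposed on each single day are realizable as an interval graph --- here they always form a ``double star'' around $A$ and $Q$, which is trivially an interval graph, so suitable day-dependent due dates and processing times exist. The conceptually important point, and the reason bounded treewidth is attainable at all, is that global consistency of the assignment is transmitted \emph{through time} by the single shared hub $A$ rather than through the conflict structure, so no variable ever needs a structural link to all of the clauses in which it occurs.
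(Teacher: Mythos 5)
Your proposal is correct, but it takes a genuinely different route from the paper. The paper reduces from \textsc{Multicolored Independent Set}: it builds vertex days, validation days, and edge days, with vertex clients, selection clients, paired edge clients, two interaction clients~$c^+,c^-$ (with $k^+=k^-=|E|/2$) that perform incidence checking, and a dummy client scheduled every day that blocks irrelevant jobs; the resulting overall conflict graph has treewidth exactly~4. You instead reduce from \textsc{3-SAT} using a two-hub architecture: an assignment hub~$A$ whose placement on one day of each variable pair $(d_\ell^+,d_\ell^-)$ encodes the truth value, a blocker~$Q$ with $k_Q$ equal to the number of days that confines every pendant to its relevant days, and pendant clients ($g_\ell$ forcing exclusivity within each pair, $p_C$ checking clause satisfaction via reading days). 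Your counting argument ($k_A$ equal to the number of variables, at most one day of $A$ per pair) correctly forces a well-defined assignment, your reading-day convention (positive literal $x_a$ read on $d_a^-$, negative on $d_b^+$) correctly ties schedulability of $p_C$ to literal truth, and each daily conflict graph is a disjoint union of two stars, hence an interval graph realizable with integer data. Both constructions share the key insight that consistency is transmitted through time via globally shared clients rather than through the conflict structure, and both use an always-scheduled blocker; but your reduction is simpler (no incidence-checking parity trick with $c^+/c^-$ is needed, since clause checking is done directly on variable days) and yields overall conflict graphs isomorphic to subgraphs of $K_{2,n}$, i.e., treewidth~2, which is a strictly stronger bound than the paper's treewidth~4. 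The paper's extra structure is what lets it route every edge of a dense graph through dedicated edge days, but for plain NP-hardness of \WEQUIJIT\ your leaner gadget suffices; note that, like the paper, you would still need the separate reduction of \Cref{lem:td} to transfer the hardness from \WEQUIJIT\ to \EQUIJIT, since your argument leans heavily on the per-client fairness parameters.
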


\begin{proof}
We reduce from the NP-hard \textsc{Multicolored Independent Set} problem~\cite{Pietrzak03,fellows2009parameterized}. In \textsc{Multicolored Independent Set}, the input is an $\ell$-partite graph $G=(V_1 \cup\ldots\cup V_\ell, E)$, where each vertex class in the partition is referred to as a \emph{color class} (and vertices $v \in V_i$ are of \emph{color $i$}).  The question is whether there exists a multicolored independent set $I$ of size $\ell$ in $G$, \emph{i.e.}, a set $I$ of nonadjacent vertices in $G$ with $|I \cap V_i| =1$ for each $i \in \{1,\ldots,\ell\}$. Let $G = (V_1 \cup \ldots \cup V_\ell, E)$ be an instance of \textsc{Multicolored Independent Set}. We can assume without loss of generality that~$G$ is $r$-regular (\emph{i.e.}, every vertex has $r$ neighbours), that $|V_1| = \cdots = |V_\ell| = n$, and that $|E|$ is even.
Let $V_i = \{v_1^i, \ldots, v_n^i\}$ for each~$i \in \{1,\dots,\ell\}$ and let $E = \{e_1, \ldots, e_{|E|}\}$.
To simplify notation, we assume that $G$ is directed, where each edge connecting $v \in V_i$ to $u \in V_j$ with $i < j$ is directed from~$V_i$ to~$V_j$.

We next describe the construction of the \WEQUIJIT\ instance. The instance will contain $\ell \cdot (n+1) + |E|$ days:
For every color $i\in \{1,\dots,\ell\}$, we create $n$ \textit{vertex days} and a single \textit{validation day}. These will allow us to encode which vertex from $V_i$ is selected to be in the independent set. For every edge $e\in E$ we create a single \textit{edge day}, which will help in ensuring that the selection done in vertex days indeed encodes a multicolored independent set in $G$. The clients of the \WEQUIJIT\ instance are constructed as follows:
\begin{itemize}
\item For vertex $v \in V_1 \cup \cdots \cup V_\ell$ we create a \textit{vertex client} $c_v$ with a fairness parameter of ~$k_v=1$.
\item For each color $i\in\{1,\dots,\ell\}$ we create a \textit{selection client} $c_i$ with fairness parameter $k_i=1$.
\item For each edge $(v,u)\in E$ we create two edge clients $c_{v,u}$ and $c_{u,v}$ with $k_{v,u}=k_{u,v}=1$.
\item We create additional two \textit{interaction clients} $c^+$ and $c^-$ with $k^+=k^-=|E|/2$.
\item We create a single \textit{dummy client} $c_0$ with $k_0 = \ell\cdot(n+1)+|E|$.
\end{itemize}

The general idea of the construction is that for every color the \textit{vertex days} and \textit{validation day} form a \textit{vertex selection gadget} which compels the selection of a single \textit{vertex client} for the color (\Cref{fig:vsg}).
The \textit{edge days} (\Cref{fig:edgeday}) form an \textit{incidence-checking gadget} which certifies that all clients are satisfied only if the selected vertices form a multicolored independent set in $G$. More specifically, for every vertex $v \in V_1 \cup \cdots \cup V_\ell$ we create $v$'s \textit{vertex day} as follows:
\begin{itemize}
\item Client $c_{v}$'s job has due date $r+1$ and processing time $r$.
\item Client $c_{i}$'s job has due date $r+1$ and processing time $r$.
\item All other jobs have due date $1$ and processing time $1$.
\end{itemize}
\noindent For every color $i \in \{1,\ldots,\ell\}$ we create $i$'s \textit{validation day} as follows:
\begin{itemize}
\item For every $v=v^i_p \in   V_i$ we set the due date of $c_v$'s job to $r(p+1)$ and its processing time to $r$.
\item For every $v=v^i_p  \in  V_i$ we define an arbitrary order over $v$'s neighbors. Let $u$ be the $q$'th neighbor of $v$, then $c_{v,u}$ has due date $rp + q$ and processing time $1$.
\item All other jobs have due date $1$ and processing time $1$.
\end{itemize}
\noindent For every edge $(v,u) \in E$ we create $(v,u)$'s \textit{edge day}:
\begin{itemize}
\item Client $c^-$'s job has due date $3$ and processing time $2$.
\item Client $c^+$'s job has due date $4$ and processing time $2$.
\item Client $c_{v,u}$'s job has due date $4$ and processing time $1$.
\item Client $c_{u,v}$'s job has due date $2$ and processing time $1$.  
\item All other jobs have due date $1$ and processing time $1$.
\end{itemize}

Observe that the jobs of the dummy client must be scheduled on all $\ell (n+1)+|E|$ days to satisfy its fairness requirement. This is used in the gadget creation to \textit{``block''} the execution of jobs that are irrelevant for the gadget. In this way, every vertex client $c_v$ can be only scheduled on $v$'s \textit{vertex day} or on the \textit{validation day} that corresponds to $v$'s color.
For every color $i\in \{1,\dots,\ell\}$, the selection client $c_i$ can be only scheduled on one of $i$'s \textit{vertex days}, as it is blocked by the dummy client in the rest of the days. This means that $\ell$ vertex clients must be scheduled on the $\ell$ \textit{validation days}, as the $\ell$ selection clients block them on their vertex days. See \Cref{fig:vsg} for illustration.

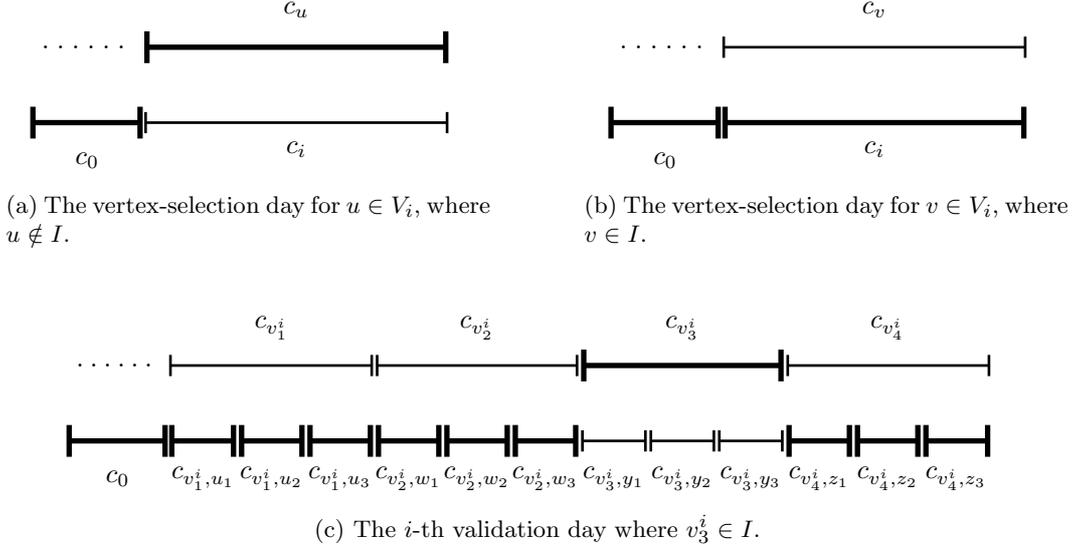
\begin{figure}[H]
\centering
\begin{subfigure}{0.4\textwidth}
\begin{center}
\begin{tikzpicture}
\draw[line width=2, |-|]
  ({-1}, 1.5) -- ({.48}, 1.5) node[anchor=north west] {};
  \node at (({-.25}, 1.) {$c_0$};
  \node at ({-0.25}, 3.) {};
  \foreach \x in
    {
        1,2,3,4,5,6
    }
    {
        \node at ({-1+\x*0.2}, 2.5) {$.$};
    }
  \draw[line width=2, |-|]
  ({0.5}, 2.5) -- ({4.5}, 2.5) node[anchor=north west] {};
  \node at (({2.5}, 3) {$c_u$};
  \draw[line width=1, |-|]
  ({0.5}, 1.5) -- ({4.5}, 1.5) node[anchor=north west] {};
  \node at ({2.5},1.15) {$c_i$};
            \end{tikzpicture}
            \caption{The vertex-selection day for $u\in V_i$, where $u \notin I$.
            }
        \end{center}
    \end{subfigure}
    \hspace{1cm}
        \begin{subfigure}{0.4\textwidth}
        \begin{center}
            \begin{tikzpicture}
                \draw[line width=2, |-|]
  ({-1}, 1.5) -- ({.48}, 1.5) node[anchor=north west] {};
  \node at (({-.25}, 1.) {$c_0$};
  \node at ({-0.25}, 3.) {};
  \foreach \x in
    {
        1,2,3,4,5,6
    }
    {
        \node at ({-1+\x*0.2}, 2.5) {$.$};
    }
  \draw[line width=1, |-|]
  ({0.5}, 2.5) -- ({4.5}, 2.5) node[anchor=north west] {};
  \node at (({2.5}, 3) {$c_v$};
  \draw[line width=2, |-|]
  ({0.5}, 1.5) -- ({4.5}, 1.5) node[anchor=north west] {};
  \node at ({2.5},1.15) {$c_i$};
            \end{tikzpicture}
            \caption{The vertex-selection day for $v\in V_i$, where $v \in I$.
            }
        \end{center}
    \end{subfigure}

    \vspace{0.5cm}
    
    \begin{subfigure}{\textwidth}
        \begin{center}
            \begin{tikzpicture}[xscale=0.9]

                  \draw[line width=2, |-|]
  ({-1}, 1.5) -- ({.48}, 1.5) node[anchor=north west] {};
  \node at (({-.25}, 1.) {$c_0$};
  \node at ({-0.25}, 3.)  {};
  \foreach \x in
    {
        1,2,3,4,5,6
    }
    {
        \node at ({-1+\x*0.2}, 2.5) {$.$};
    }
    
  \draw[line width=1, |-|]
  ({0.5}, 2.5) -- ({3.48}, 2.5) node[anchor=north west] {};
  \node at (({2.}, 3) {$c_{v_1^i}$};
  \draw[line width=1, |-|]
  ({3.52}, 2.5) -- ({6.48}, 2.5) node[anchor=north west] {};
  \node at (({5.}, 3) {$c_{v_2^i}$};
  \draw[line width=2, |-|]
  ({6.52}, 2.5) -- ({9.48}, 2.5) node[anchor=north west] {};
  \node at (({8}, 3) {$c_{v_3^i}$};
  \draw[line width=1, |-|] ({9.52}, 2.5) -- ({12.5}, 2.5) node[anchor=north west] {};
  \node at (({11.}, 3) {$c_{v_4^i}$};
  
  \draw[line width=2, |-|]
  ({0.5}, 1.5) -- ({1.48}, 1.5) node[anchor=north west] {};
  \node at (1, 1.) {$c_{v_1^i,u_1}$};
  \draw[line width=2, |-|]
  ({1.52}, 1.5) -- ({2.48}, 1.5) node[anchor=north west] {};
  \node at (2, 1.) {$c_{v_1^i,u_2}$};
  \draw[line width=2, |-|]
  ({2.52}, 1.5) -- ({3.48}, 1.5) node[anchor=north west] {};
  \node at (3, 1.) {$c_{v_1^i,u_3}$};
  
  \draw[line width=2, |-|]
  ({3.52}, 1.5) -- ({4.48}, 1.5) node[anchor=north west] {};
  \node at (4, 1.) {$c_{v_2^i,w_1}$};
  \draw[line width=2, |-|]
  ({4.52}, 1.5) -- ({5.48}, 1.5) node[anchor=north west] {};
  \node at (5, 1.) {$c_{v_2^i,w_2}$};
  \draw[line width=2, |-|]
  ({5.52}, 1.5) -- ({6.48}, 1.5) node[anchor=north west] {};
  \node at (6, 1.) {$c_{v_2^i,w_3}$};
  
  \draw[line width=1, |-|]
  ({6.52}, 1.5) -- ({7.48}, 1.5) node[anchor=north west] {};
  \node at (7, 1.) {$c_{v_3^i,y_1}$};
  \draw[line width=1, |-|]
  ({7.52}, 1.5) -- ({8.48}, 1.5) node[anchor=north west] {};
  \node at (8, 1.) {$c_{v_3^i,y_2}$};
  \draw[line width=1, |-|]
  ({8.52}, 1.5) -- ({9.48}, 1.5) node[anchor=north west] {};
  \node at (9, 1.) {{$c_{v_3^i,y_3}$}};
  
  \draw[line width=2, |-|]
  ({9.52}, 1.5) -- ({10.48}, 1.5) node[anchor=north west] {};
  \node at (10, 1.) {{$c_{v_4^i,z_1}$}};
  \draw[line width=2, |-|]
  ({10.52}, 1.5) -- ({11.48}, 1.5) node[anchor=north west] {};
  \node at (11, 1.) {$c_{v_4^i,z_2}$};
  \draw[line width=2, |-|]
  ({11.52}, 1.5) -- ({12.5}, 1.5) node[anchor=north west] {};
  \node at (12, 1.) {$c_{v_4^i,z_3}$};
\end{tikzpicture}
\caption{The $i$-th validation day where $v^i_3\in I$.}
\end{center}

\end{subfigure}
\caption{An example for schedules on the vertex selection days and the validation days given a multicolored independent set $I$. We mark in \textbf{bold} the jobs which are scheduled; the jobs not appearing in the gadget are all identical to the job of client $c_0$. }
\label{fig:vsg}%
\end{figure}
  
The aforementioned \textit{vertex-selection} gadget forces the selection of $\ell$ vertices to satisfy the fairness requirements of the selection clients and vertex clients. Every edge client $c_{v,u}$ can be scheduled on the validation day of $v$'s color as long as $v$ is not selected. Thus, this leaves $r\cdot k$ edge clients that need to be scheduled on the edge days. The fairness requirements of $c^+$ and $c^-$ force the scheduling of one of their jobs on every single edge day. Essentially, the edge days and the validation days form an \textit{incidence-checking} gadget.
Assuming that the jobs of the selected clients form an independent set $I$, there exists no edge $(v,u)$ with $u,v \in I$ whose edge clients must be served on the same day, therefore~$c^+$ and~$c^-$ may be satisfied. See \Cref{fig:edgeday} for illustration.

\begin{figure}[H]
  
    \begin{subfigure}[t]{0.3\textwidth}
        \begin{center}
            \begin{tikzpicture}

    \draw[line width=2, |-|]
  ({0.48}, 1.5) -- ({-1.}, 1.5) node[anchor=north west] {};
  \node at (({-.25}, 1.15) {$c_0$};
  \node at ({-0.25}, 3.) {}; 
  \foreach \x in
    {
        1,2,3,4,5,6
    }
    {
        \node at ({-1+\x*0.2}, 2.5) {$.$};
    }

  \draw[line width=2, |-|]
  ({0.5}, 2.5) -- ({2.48}, 2.5) node[anchor=north west] {};
  \node at (({1.5}, 3) {$c^-$};
  \draw[line width=1, |-|]
  ({1.52}, 1.5) -- ({3.5}, 1.5) node[anchor=north west] {};
  \node at ({2.5},1.15) {$c^+$};
  
  \draw[line width=1, |-|]
  ({.5}, 1.5) -- ({1.48}, 1.5) node[anchor=north west] {};
  \node at ({1.},1.15) {$c_{u,v}$};
  \draw[line width=2, |-|]
  ({2.52}, 2.5) -- ({3.5}, 2.5) node[anchor=north west] {};
  \node at ({3.},3) {$c_{v,u}$};
  
            \end{tikzpicture}
        \end{center}
        \caption{$v\in I , u \notin I$.}
        \end{subfigure}
           \begin{subfigure}[t]{0.3\textwidth}
        \begin{center}
            \begin{tikzpicture}

             \draw[line width=2, |-|]
  ({0.48}, 1.5) -- ({-1.}, 1.5) node[anchor=north west] {};
  \node at (({-.25}, 1.15) {$c_0$};
  \node at ({-0.25}, 3.) {}; 
  \foreach \x in
    {
        1,2,3,4,5,6
    }
    {
        \node at ({-1+\x*0.2}, 2.5) {$.$};
    }
                
  \draw[line width=1, |-|]
  ({0.5}, 2.5) -- ({2.48}, 2.5) node[anchor=north west] {};
  \node at (({1.5}, 3) {$c^-$};
  \draw[line width=2, |-|]
  ({1.52}, 1.5) -- ({3.5}, 1.5) node[anchor=north west] {};
  \node at ({2.5},1.15) {$c^+$};
  
  \draw[line width=2, |-|]
  ({.5}, 1.5) -- ({1.48}, 1.5) node[anchor=north west] {};
  \node at ({1.},1.15) {$c_{u,v}$};
  \draw[line width=1, |-|]
  ({2.52}, 2.5) -- ({3.5}, 2.5) node[anchor=north west] {};
  \node at ({3.},3) {$c_{v,u}$};
  
            \end{tikzpicture}
        \end{center}
        \caption{$ v \notin I, u\in I$.}
        \end{subfigure}       
        \begin{subfigure}[t]{0.3\textwidth}
        \begin{center}
            \begin{tikzpicture}

             \draw[line width=2, |-|]
  ({0.48}, 1.5) -- ({-1.}, 1.5) node[anchor=north west] {};
  \node at (({-.25}, 1.15) {$c_0$};
 \node at ({-0.25}, 3.) {}; 
  \foreach \x in
    {
        1,2,3,4,5,6
    }
    {
        \node at ({-1+\x*0.2}, 2.5) {$.$};
    }
                
  \draw[line width=2, densely dash dot, |-|]
  ({0.5}, 2.5) -- ({2.48}, 2.5) node[anchor=north west] {};
  \node at (({1.5}, 3) {$c^-$};
  \draw[line width=2, densely dash dot, |-|]
  ({1.52}, 1.5) -- ({3.5}, 1.5) node[anchor=north west] {};
  \node at ({2.5},1.15) {$c^+$};
  
  \draw[line width=1, |-|]
  ({.5}, 1.5) -- ({1.48}, 1.5) node[anchor=north west] {};
  \node at ({1.},1.15) {$c_{u,v}$};
  \draw[line width=1, |-|]
  ({2.52}, 2.5) -- ({3.5}, 2.5) node[anchor=north west] {};
  \node at ({3.},3) {$c_{v,u}$};
  
\end{tikzpicture}
\end{center}
\caption{$u,v \notin I$: schedule either $c^+$ or\ $c^-$ such that $c^+$ and $c^-$ are satisfied. }
\end{subfigure}

\caption{
The three possible cases on an \textit{edge day} (with $v\in V_i$ and $u\in V_j$ with $i<j$) given a multicolored independent set $I$. We mark the jobs which are scheduled in a schedule constructed from $I$ in \textbf{bold}, and the jobs which do not appear in the figure all have the same job as that of client $c_0$ (but are not scheduled). There are $k\cdot r$ days of types (a) and~(b), as $|E|$ is even can freely use (c) days to ensure both $c^+$ and $c^-$ are satisfied.  
}
\label{fig:edgeday}
\end{figure}
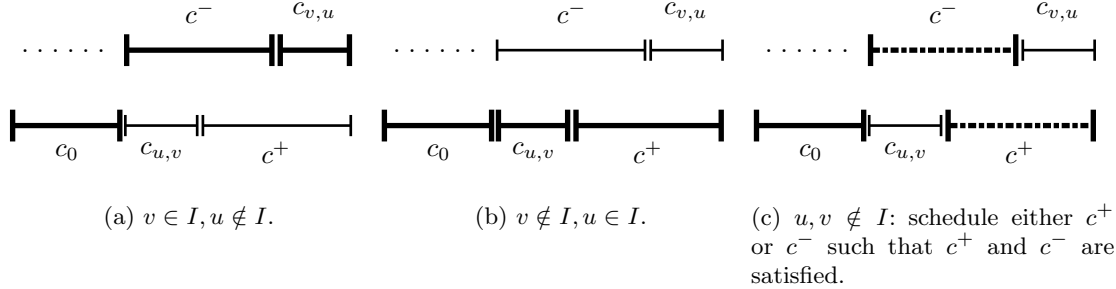

After having described the reduction, we continue by showing that the overall conflict graph of the constructed instance indeed has bounded treewidth and that it is correct.
We can construct a tree-decomposition~$(\mathcal{X},F)$ of width four as follows: 

\begin{figure}[H]
    \centering
    
\begin{minipage}{0.5\textwidth}
Nodes $\mathcal{X}$:
\begin{itemize}
    \item $X_{c_0} = \{ c_0,c^+,c^-\}$
    \item $\forall i\in \{1,\dots, \ell\}\ :\ X_{S_i} = \{c_0 ,c^+,c^-,c_i \}$
    \item $\forall v\in V\ :\ X_v = \{c_0 ,c^+,c^-,c_v, c_i \}$
    \item $\forall u\in N(v)\ :\ X_{v,u} = 
    \{ c_0,c^+,c^-,c_v, c_{v,u}\}$
\end{itemize}

Edges $F = F_1\cup F_2 \cup F_3$:
\begin{itemize}
    \item $F_1 = \{ (X_{c_0},X_{S_i})\ |\ i\in \{1,\dots,\ell\} \}$
    \item $F_2 = \{ (X_{S_i},X_{v})\ |\ v \in V_i \}$
    \item $F_3 = \{(X_v, X_{v,u}) \ |\ v\in V \wedge u \in N(v) \}$
\end{itemize}
\end{minipage}%
\hfill%
\begin{minipage}{0.5\textwidth}

\scalebox{.75}{

\tikzset{level 1 concept/.append style={ sibling angle=45,level distance = 45mm}}
\tikzset{level 2 concept/.append style={ sibling angle=45,level distance = 25mm}}
\tikzset{level 3 concept/.append style={ sibling angle=45,level distance = 20mm}}
\tikzset{level 3 unimportant/.append style={ sibling angle=45,level distance = 20mm}}

\begin{tikzpicture}

\newcommand{\colorxdb}{black}
\newcommand{\colorxsb}{black}
\newcommand{\colorxvb}{black}
\newcommand{\colorxvub}{black}

  \path[mindmap,concept color=\colorxdb,text=black,]
    node[draw=gray,circle,fill=white!50,minimum size=.1cm,scale=0.78] {\Huge $\boldsymbol{X_{c_0}}$}
    [clockwise from=90]
    child[level distance=40mm, concept color=\colorxsb] {
      node[draw=gray,circle,fill=white!50,minimum size=.1cm,scale=0.78] {\Huge $\boldsymbol{X_{S_{1}}}$}
      [clockwise from=115]
      child[level distance=25mm,sibling angle=25,concept color=\colorxvb] { node[draw=gray,circle,fill=white!50,minimum size=.1cm, scale=0.3] {}
          [clockwise from=140]
          child[level distance=12.5mm,sibling angle=20,concept color=\colorxvub] { node[draw=gray,circle,fill=white!50,minimum size=.1cm, scale=0.3] {} }
          child[level distance=12.5mm,sibling angle=20,concept color=\colorxvub] { node[draw=gray,circle,fill=white!50,minimum size=.1cm, scale=0.3] {} }
          child[level distance=12.5mm,sibling angle=20,concept color=\colorxvub] { node[draw=gray,circle,fill=white!50,minimum size=.1cm, scale=0.3] {} }
      }
      child[level distance=25mm,sibling angle=25, concept color=\colorxvb] { node[draw=gray,circle,fill=white!50,minimum size=.1cm, scale=0.3] {}
          [clockwise from=110.]
          child[level distance=12.5mm,sibling angle=20,concept color=\colorxvub] { node[draw=gray,circle,fill=white!50,minimum size=.1cm, scale=0.3] {} }
          child[level distance=12.5mm,sibling angle=20,concept color=\colorxvub] { node[draw=gray,circle,fill=white!50,minimum size=.1cm, scale=0.3] {} }
          child[level distance=12.5mm,sibling angle=20,concept color=\colorxvub] { node[draw=gray,circle,fill=white!50,minimum size=.1cm, scale=0.3] {} }
      }
      child[level distance=25mm,sibling angle=25, concept color=\colorxvb] { node[draw=gray,circle,fill=white!50,minimum size=.1cm, scale=0.3] {}
          [clockwise from=80]
          child[level distance=12.5mm,sibling angle=20,concept color=\colorxvub] { node[draw=gray,circle,fill=white!50,minimum size=.1cm, scale=0.3] {} }
          child[level distance=12.5mm,sibling angle=20,concept color=\colorxvub] { node[draw=gray,circle,fill=white!50,minimum size=.1cm, scale=0.3] {} }
          child[level distance=12.5mm,sibling angle=20,concept color=\colorxvub] { node[draw=gray,circle,fill=white!50,minimum size=.1cm, scale=0.3] {} }
      }
    }
    child[concept color=\colorxsb] {
      node[draw=gray,circle,fill=white!50,minimum size=.1cm,scale=0.78] {\Huge $\boldsymbol{X_{S_{i}}}$}
      [clockwise from=90]
      child[,concept color=\colorxvb] { node[draw=gray,circle,fill=white!50,minimum size=.1cm, scale=0.78] {\Huge$\boldsymbol{X_{v_{1}}}$} 
          [clockwise from=107.5]
          child[level distance=17.5mm,sibling angle=17.5,concept color=\colorxvub] { node[draw=gray,circle,fill=white!50,minimum size=.1cm, scale=0.3] {} }
          child[level distance=17.5mm,sibling angle=17.5,concept color=\colorxvub] { node[draw=gray,circle,fill=white!50,minimum size=.1cm, scale=0.3] {} }
          child[level distance=17.5mm,sibling angle=17.5,concept color=\colorxvub] { node[draw=gray,circle,fill=white!50,minimum size=.1cm, scale=0.3] {} }
      }
      child[concept color=\colorxvb] {
        node[draw=gray,circle,fill=white!50,minimum size=.1cm,scale=0.78, clockwise from=90] {\Huge $\boldsymbol{X_{v_p}}$}
          [clockwise from=90]
          child[concept color=\colorxvub] { node[draw=gray,circle,fill=white!50,minimum size=.1cm, scale=0.78] {} }
          child[level distance=25mm, concept color=\colorxvub] { node[draw=gray,circle,fill=white!50,minimum size=.1cm, scale=.78, minimum size = 1.5cm,text width=2.1cm, align=center] {
        \huge$\boldsymbol{X_{v_{p},u}}$
          } }
          child[concept color=\colorxvub] { node[draw=gray,circle,fill=white!50,minimum size=.1cm, scale=1.03] {} }
    }
      child[,concept color=\colorxvb] {
          node[ draw=gray,circle,fill=white!50,minimum size=2.05cm, scale=0.85] {\Huge $\boldsymbol{X_{v_{n}}} $}
          [clockwise from=17.5]
          child[level distance=17.5mm,sibling angle=17.5,concept color=\colorxvub] { node[draw=gray,circle,fill=white!50,minimum size=.1cm, scale=0.3] {} }
          child[level distance=17.5mm,sibling angle=17.5,concept color=\colorxvub] { node[draw=gray,circle,fill=white!50,minimum size=.1cm, scale=0.3] {} }
          child[level distance=17.5mm,sibling angle=17.5,concept color=\colorxvub] { node[draw=gray,circle,fill=white!50,minimum size=.1cm, scale=0.3] {} }
    }
    }
    child[level distance=40mm, concept color=\colorxsb] {
      node[draw=gray,circle,fill=white!50,minimum size=.1cm, scale=0.78] {\Huge $\boldsymbol{X_{S_{\ell}}}$}
      [clockwise from=25]
      child[level distance=25mm,sibling angle=25,concept color=\colorxvb] { node[draw=gray,circle,fill=white!50,minimum size=.1cm, scale=0.3] {}
          [clockwise from=45]
          child[level distance=12.5mm,sibling angle=20,concept color=\colorxvub] { node[draw=gray,circle,fill=white!50,minimum size=.1cm, scale=0.3] {} }
          child[level distance=12.5mm,sibling angle=20,concept color=\colorxvub] { node[draw=gray,circle,fill=white!50,minimum size=.1cm, scale=0.3] {} }
          child[level distance=12.5mm,sibling angle=20,concept color=\colorxvub] { node[draw=gray,circle,fill=white!50,minimum size=.1cm, scale=0.3] {} }
      }
      child[level distance=25mm,sibling angle=25,concept color=\colorxvb] { node[draw=gray,circle,fill=white!50,minimum size=.1cm, scale=0.3] {}
          [clockwise from=20.]
          child[level distance=12.5mm,sibling angle=20,concept color=\colorxvub] { node[draw=gray,circle,fill=white!50,minimum size=.1cm, scale=0.3] {} }
          child[level distance=12.5mm,sibling angle=20,concept color=\colorxvub] { node[draw=gray,circle,fill=white!50,minimum size=.1cm, scale=0.3] {} }
          child[level distance=12.5mm,sibling angle=20,concept color=\colorxvub] { node[draw=gray,circle,fill=white!50,minimum size=.1cm, scale=0.3] {} }
      }
      child[level distance=25mm,sibling angle=25,concept color=\colorxvb] { node[draw=gray,circle,fill=white!50,minimum size=.1cm, scale=.3] {}
          [clockwise from=-10]
          child[level distance=12.5mm,sibling angle=20,concept color=\colorxvub] { node[draw=gray,circle,fill=white!50,minimum size=.1cm, scale=0.3] {} }
          child[level distance=12.5mm,sibling angle=20,concept color=\colorxvub] { node[draw=gray,circle,fill=white!50,minimum size=.1cm, scale=0.3] {} }
          child[level distance=12.5mm,sibling angle=20,concept color=\colorxvub] { node[draw=gray,circle,fill=white!50,minimum size=.1cm, scale=0.3] {} }
      }
    };

\node[rotate=-80] at (3.4,1.5) {\Huge$\dots$};
\node[rotate=-10] at (1.65,3.5) {\Huge$\dots$};
\node[rotate=-10] at (3.8,4.7) {\huge$\dots$};
\node[rotate=-80] at (4.7,3.75) {\huge$\dots$};
\node[rotate=-10] at (5.5,6.2) {\large$\dots$};
\node[rotate=-80] at (6.2,5.5) {\large$\dots$};

\end{tikzpicture}
}

\end{minipage}
\caption{A tree decomposition $(\mathcal{X},F)$ of the overall conflict graph constructed from the instance of \WEQUIJIT. Observe that the width of this decomposition is $|X_v|-1 = 4$.}
\label{fig:tw}
\end{figure}
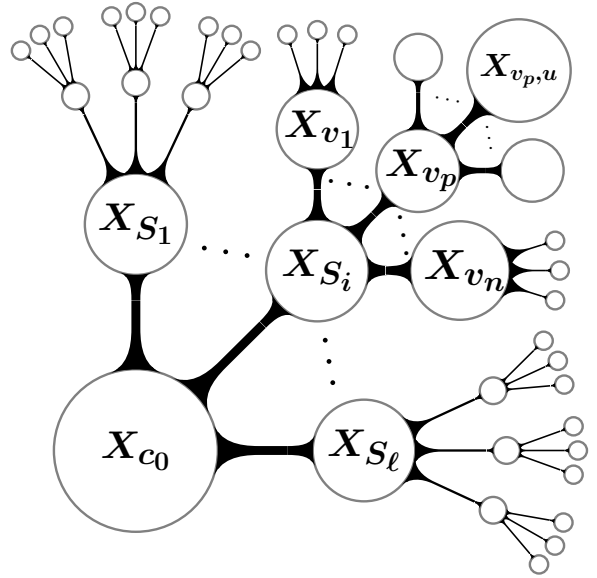

$(\Rightarrow)$
We next show that if $G$ contains a multicolored independent set $I \subseteq V_1 \cup \dots  \cup V_\ell$ of size $\ell$ then our constructed \WEQUIJIT\ instance has a solution in which each client's fairness parameter is met. We construct such a solution as follows: On each day, the dummy client~$c_0$ is scheduled. For each~${i \in \{1,\dots,\ell\}}$, the clients from the vertex-selection gadget are scheduled as follows: For each vertex~$v = v^i_p \in V_i \setminus I$, we schedule the job of $c_v$ on the $p$'th vertex day of the $i$'th color. For $v = v^i_p \in V_i \cap I$, we schedule the selection client~$c_i$ on the $p$'th vertex day of the $i$'th color and the job of $c_v$ on $i$'th validation day. Additionally, on the $i$'th validation day, all clients~$c_{v,u}$ are scheduled for each $v \in V_i \setminus I$. Finally, for each edge $e_p=(v,u)$ or $e = (u, v)$ incident to $v \in V_i \cap I$, we schedule client~$c_{v,u}$ on the $p$'th edge day. Note that by the construction, all these jobs can be scheduled on their appropriate days without conflicts. It remains to schedule the interaction clients~$c^+$ and~$c^-$. For each edge day of edge~$(u, v)$ with $u \in I$ or $v \in I$, we schedule the job of~$c^+$ if $u \in I$, otherwise we schedule the job of~$c^-$. Note that this does not lead to a conflict as $I$ is an independent set. On the remaining edge days, no client has been scheduled so far (except for the dummy client), and we schedule either $  c^+$ or $c^-$ in a way such that in the end, $c^+$ and $c^-$ are scheduled equally often. All together, the interaction clients~$c^+$ and $c^-$ are scheduled on $|E|/2$ days each, the dummy client is scheduled on every day, and all other clients are scheduled on a single day.

$(\Leftarrow)$
Assume that there is a schedule to the constructed \WEQUIJIT\ instance in which the fairness parameter of each client is met. Then the jobs of the dummy client are scheduled on each day, and so each selection client $c_i$ has to have his job scheduled on some vertex day. Let $\{p_1, \ldots,  p_\ell\}$ be a set of indices such that for each $i \in \{1,\dots,\ell\}$, the job of selection client~$c_i$ is scheduled at the $p_i$'th vertex day of the color~$i$. Consider some $v=v^{p_i}_i\in V_i$. By construction, client~$c_v$ needs to be scheduled on the $i$'th validation day as otherwise the job of the dummy client cannot be scheduled. This implies that a job of the edge client~$c_{v,u}$ is scheduled on~$(v,u)$'s edge day for each neighbor~$u$ of $v$ in $G$. The jobs of the interaction clients~$c^+$ and $c^-$ can only be scheduled on the edge days, and at most one interaction client job per day (as $c^+$ and $c^-$ conflict on each day). As $k^+ + k^- = |E|$, it follows that on each edge day, exactly one of~$c^+$ and $c^-$ is scheduled. Consequently, there is no edge~$(u, v)$ such that on $(u, v)$'s edge day the jobs of both~$c_{u,v}$ and $c_{v,u}$ are scheduled (as $c^+$ conflicts with $c_{v,u}$ and $c^-$ conflicts with $c_{u,v}$). It follows that $u$ and $v$ are non-adjacent in $G$, and so $\{v_{p_1}, \ldots, v_{p_{\ell}}\}$ is a multicolored independent set in $G$.
\end{proof}

\begin{lemma}
\label{lem:td}%
There is a polynomial time reduction from \WEQUIJIT\ to \EQUIJIT\ which increases the treewidth by at most 2.
\end{lemma}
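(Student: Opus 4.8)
The plan is to reduce \WEQUIJIT\ to \EQUIJIT\ by \emph{padding}: I would equalize all the individual fairness requirements to a single global one by handing out ``free'' satisfactions to the clients that demand less. Concretely, set $K=\max_j k_j$ and let the new global fairness parameter be $k=K$. I would keep all $n$ clients and all $m$ original days exactly as they are, and append $K$ new days. On new day $t\in\{1,\ldots,K\}$, give a job to precisely those clients $j$ with $k_j\le K-t$ (equivalently $K-k_j\ge t$), so that client $j$ receives a job on exactly the first $K-k_j$ new days. On each new day, assign the participating jobs the pairwise-disjoint unit intervals $(0,1],(1,2],(2,3],\ldots$ (one per participating client), so that all of them can be scheduled simultaneously and none of them conflict. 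This construction is clearly polynomial, as it adds at most $K\le m$ days and at most $nK$ jobs.

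For correctness I would argue the two directions against the common target $k=K$. In the forward direction, given a \WEQUIJIT\ schedule in which client $j$ is served on at least $k_j$ original days, keep that schedule on the original days and additionally schedule \emph{every} free job on \emph{every} new day (legal, since the free jobs on a fixed day are pairwise non-conflicting). Then client $j$ is served on at least $k_j+(K-k_j)=K$ days, so the padded instance is a YES-instance of \EQUIJIT. In the backward direction, take any schedule of the padded instance that meets the global threshold $K$; since client $j$ owns a job on only $K-k_j$ of the new days, it can collect at most $K-k_j$ satisfactions there, so it must be served on at least $K-(K-k_j)=k_j$ of the original days. Restricting the schedule to the original days therefore yields a feasible \WEQUIJIT\ schedule.

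For the treewidth, the key observation is that the new days contribute \emph{no} conflicts: the free jobs of each new day are laid out as disjoint intervals, so each new day's conflict graph is edgeless, and no new clients are introduced. Hence the overall conflict graph of the padded instance has exactly the same vertex set and edge set as that of the original \WEQUIJIT\ instance, and its treewidth is unchanged---in particular it increases by at most $2$, as claimed. (If one instead prefers a construction that keeps the number of days fixed, the free satisfactions can be administered by a constant number of auxiliary ``blocker'' clients, which would account for the small additive slack in the statement.)

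I expect the only delicate point to be the \emph{tightness} of the padding in the backward direction: the equivalence hinges on client $j$ having a free job on exactly $K-k_j$ new days, so that padding cannot over-satisfy a client and let it shirk its original-day obligation. Getting this count right---the first $K-k_j$ new days and no more---is what makes the reduction sound; everything else, namely the feasibility of simultaneously scheduling the free jobs and the exact preservation of the conflict graph, follows immediately from the disjoint-interval placement.
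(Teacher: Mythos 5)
Your padding idea is in spirit the same as the paper's (equalize the per-client demands by appending extra days on which clients collect ``free'' satisfactions), but your construction has a genuine gap: it does not produce a valid \EQUIJIT\ instance. By the problem definition, \emph{every} client has a job with some processing time and due date on \emph{every} day; there is no option of a client simply ``not owning a job'' on a day. Your backward direction relies critically on client $j$ being physically unable to be scheduled on new days $t> K-k_j$, but in \EQUIJIT\ the only way to prevent a job from being scheduled is to force some conflicting job to be scheduled on that day --- a job that merely exists can always be scheduled on its own. Note also that the paper's discussion of the generalization where clients may lack jobs on some days (in the concluding section) reduces it to \EQUIJIT\ by adding $\lceil \q/k\rceil$ mutually conflicting auxiliary clients, which creates a clique and hence does \emph{not} preserve treewidth up to an additive constant; so you cannot repair your argument by composing with that reduction.

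The fix is precisely where the ``$+2$'' in the statement comes from, and it is the nontrivial part of the paper's proof. The paper appends $\q$ (not $\max_j k_j$) new days, sets $k:=\q$, and adds two clients $c^-$ and $c^+$ whose jobs coincide on every one of the $2\q$ days. Since each of them needs $\q$ of the $2\q$ days and they conflict daily, exactly one of them is scheduled on every day; their common interval thus acts as an every-day blocker. A client $c$ with demand $k_c$ is then given, on the new days, $k_c$ jobs lying inside the blockers' interval (hence never schedulable) and $\q-k_c$ jobs disjoint from everything, which enforces exactly your counting argument: $c$ collects at most $\q - k_c$ satisfactions on the new days and therefore at least $k_c$ on the original ones. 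Your closing observation that your reduction leaves the treewidth unchanged should have been a warning sign: a reduction that adds no conflicts at all cannot enforce the absence of a job, and the two extra units of treewidth in the lemma are exactly the price of the pair $\{c^+,c^-\}$ that does.
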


\begin{proof}
Let $\mathcal{I}$ be an instance of \WEQUIJIT\ with $\q$ days and $n$ clients. We construct an instance $\mathcal{I}'$ (illustrated in \Cref{fig:reduction-unweighted}) of \EQUIJIT\ as follows: Starting with $\mathcal{I}$, we append another $\q$ days, add two clients~$c^-$ and $c^+$, and set $k := \q$. The jobs of $c^-$ and $c^+$ have the same processing time and due date, so they cannot both be scheduled on the same day. On the first ``original'' $\q$ days, none of the clients from~$\mathcal{I}$ intersects with clients~$c^-$ and $c^+$. On the last ``additional'' $\q$ days, no two different clients from~$\mathcal{I}$ conflict. For each client~$c$ from $\mathcal{I}$, there are exactly $k_c$ days where client~$c$ intersects with $c^-$ and $c^+$. More formally, let $d_{\max}$ be the maximum due date in $\mathcal{I}$. On each day of the $2\q$ days in~$\mathcal{I}'$, clients~$c^-$ and $c^+$ have processing time $n$ and due date $d_{\max} + n$. On days~$1$ to~$\q$, each client from~$\mathcal{I}$ has the same processing time and due date as in~$\mathcal{I}$.
Let $c_1, \ldots, c_n$ be the clients from~$\mathcal{I}$ such that $k_{c_i}\leq k_{c_j}$ for $i<j$ (ordering the clients in this way is not needed for correctness, but makes the illustration in \Cref{fig:reduction-unweighted} nicer). Client~$c_i$ has processing time~$1$ and due date $d_{\max} + i$ on days $\q+1, \ldots, \q+k_c$ and processing time~$1$ and due date $d_{\max} + n + i$ on days $\q+k_c + 1 , \ldots, 2\q$.

\begin{figure}[H]
\centering
\begin{subfigure}{0.8\textwidth}
\begin{center}
\begin{tikzpicture}

    \def\dmax{5.5}
    \def\eps{.1}
    \draw[thick,->] (0,0) -- (14,0) node[anchor=north west] {};
    \draw[line width=0.1, dashed]
    (\dmax,-.) -- (\dmax,2.2) node[anchor=north] {};
    \node at (({\dmax}, -.5) {$d_{\max}$};
    \draw[line width=0.1, dashed]
    (0.5,-.) -- (0.5,2.2) node[anchor=north] {};
    \node at (({11.}, -.5) {$d_{\max}+n$};
    
    \draw[line width=0.1, dashed]
    (11.,-.) -- (11.,2.2) node[anchor=north] {};

    \draw[line width=.75, >-<]
  ({0.5+\eps}, 2.) -- ({\dmax-\eps}, 2.) node[anchor=north west] {};
  \node at (({\dmax*0.5+\eps*0.5}, 2.2) {The $i$-th day of $\mathcal{I}$};
    

    \node at (({\dmax*0.5+0.25}, 1.1) {$\dots$};
  \draw[line width=1, |-|]
  ({0.5+\eps}, .5) -- ({2.3}, .5) node[anchor=north west] {};
  \draw[line width=1, |-|]
  ({0.5+\eps}, 1.) -- ({1.5-\eps}, 1.) node[anchor=north west] {};
  \draw[line width=1, |-|]
  ({1.5+\eps}, 1.) -- ({2.5-\eps}, 1.) node[anchor=north west] {};
  \draw[line width=1, |-|]
  (1, .75) -- ({2.}, .75) node[anchor=north west] {};
  \draw[line width=1, |-|]
  (1, 1.3) -- ({2.5}, 1.3) node[anchor=north west] {};
  \draw[line width=1, |-|]
  (1, 1.3) -- ({2.5}, 1.3) node[anchor=north west] {};

\draw[line width=1, |-|]
  (\dmax*0.5+1.7, .3) -- ({\dmax*0.5+2.5}, .3) node[anchor=north west] {};
  \draw[line width=1, |-|]
  ({\dmax*0.5+0.75+\eps}, .5) -- ({\dmax*0.5+1.5}, .5) node[anchor=north west] {};
  \draw[line width=1, |-|]
  ({\dmax*0.5+1.75+\eps}, .75) -- ({\dmax*0.5+2.65-\eps}, .75) node[anchor=north west] {};
  \draw[line width=1, |-|]
  (\dmax*0.5+.75, .75) -- ({\dmax*0.5+1.7}, .75) node[anchor=north west] {};
    \draw[line width=1, |-|]
  ({\dmax*0.5+1.+\eps}, 1.) -- ({\dmax*0.5+1.5-\eps}, 1.) node[anchor=north west] {};
    \draw[line width=1, |-|]
  ({\dmax*0.5+1.65+\eps}, 1.) -- ({\dmax*0.5+2.4-\eps}, 1.) node[anchor=north west] {};
  \draw[line width=1, |-|]
  (\dmax*0.5+1, 1.3) -- ({\dmax*0.5+2.}, 1.3) node[anchor=north west] {};

   \draw[line width=1, |-|]
  (\dmax+\eps, 1.3) -- ({11.-\eps}, 1.3) node[anchor=north west] {};
  \node at (({8.5}, 1.6) {$c^+$};
   \draw[line width=1, |-|]
  (\dmax+\eps, .5) -- ({11.-\eps}, .5) node[anchor=north west] {};
  \node at (({8.5}, .7) {$c^-$};
  
    \end{tikzpicture}
    \caption{When $1\leq i\leq \q$ the interval, the jobs of the original clients have the same intervals as in  the $i$-th day of $\mathcal{I}$.}
    \end{center}
    \end{subfigure}

 \begin{subfigure}{0.8\textwidth}
        \begin{center}
            \begin{tikzpicture}

                 \def\dmax{5.5}
    \def\eps{.1}
    \draw[thick,->] (0,0) -- (14,0) node[anchor=north west] {};
    \draw[line width=0.1, dashed]
    (\dmax,-.) -- (\dmax,2.2) node[anchor=north] {};
    \node at (({\dmax}, -.5) {$d_{\max}$};
    \node at (({11.}, -.5) {$d_{\max}+n$};
    \draw[line width=0.1, dashed]
    (11.,-.) -- (11.,2.2) node[anchor=north] {};

   \draw[line width=1, |-|]
  (\dmax+\eps, 1.3) -- ({11.-\eps}, 1.3) node[anchor=north west] {};
  \node at (({8.5}, 1.6) {$c^+$};
   \draw[line width=1, |-|]
  (\dmax+\eps, .5) -- ({11.-\eps}, .5) node[anchor=north west] {};
  \node at (({8.5}, .7) {$c^-$};

\foreach \x/\i in
    {
        1/1,2/3,3/1
    }
    {
        \ifthenelse{\i = 0}{
        
        }{
         \ifthenelse{\i = 1}{
                \draw[line width=1, |-|]
                  (\dmax+3+\x*0.7, 2) -- ({\dmax+3+\x*0.7+0.3}, 2) node[anchor=north west] {};
         }{

            \node at (\dmax+3+\x*0.7+0.15, 2) {$\dots$};
         
         }
        }

    }  

\foreach \x/\i in
    {
        1/1,2/3,3/1
    }
    {
        \ifthenelse{\i = 0}{
        
        }{
         \ifthenelse{\i = 1}{
                \draw[line width=1, |-|]
                  (10.5+\x*0.7, 2) -- ({10.5+\x*0.7+0.3}, 2) node[anchor=north west] {};
         }{

            \node at (10.5+\x*0.7+0.15, 2) {$\dots$};
         
         }
        }

    }

\end{tikzpicture}
\caption{When $\q< i\leq2\q$ let the jobs of original clients either intersect the jobs of $c^-$ and $c^+$  (for $\{\ c\ |\ k_c \leq i\ \}$) \textit{or} intersect no jobs at all (for $\{\ c\ |\ k_c > i\ \}$).}
\end{center}
\end{subfigure}
\caption{An intersection model of the $i$-th day's jobs.}
\label{fig:reduction-unweighted}
\end{figure}
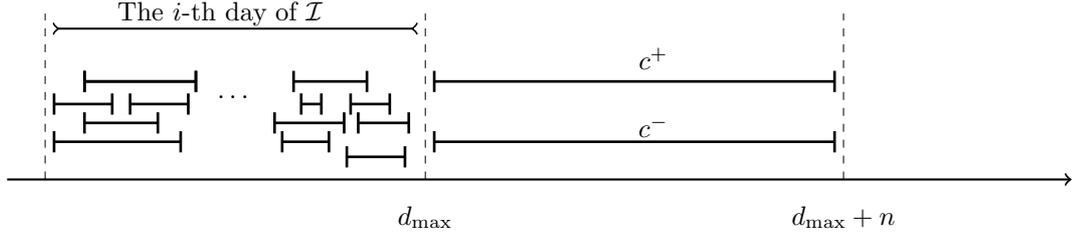
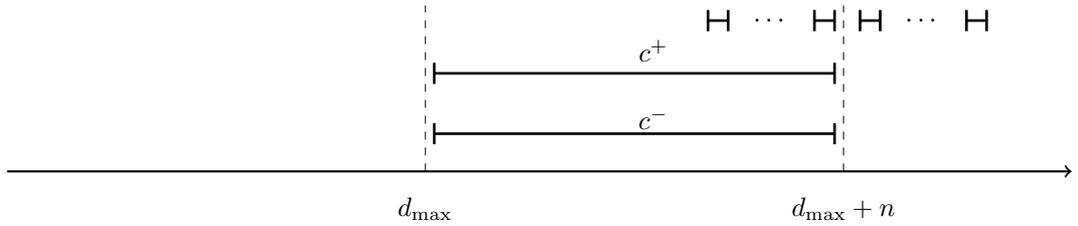

As we only add two clients (and no new conflicts between clients from~$\mathcal{I}$), the treewidth of the overall conflict graph increases by at most two. What remains is to show the correctness of our construction. Given a solution to~$\mathcal{I}'$, we construct a solution to~$\mathcal{I}$ by deleting clients~$c^-$ and $c^+$ and days~$\q+1, \ldots, 2\q$. Note that on each day, either $c^-$ or $c^+$ needs to be scheduled as they conflict on each day. Thus, each client~$c$ from $\mathcal{I}$ cannot be scheduled on days~$\q+1, \ldots, m+k_c$, implying that~$c$ is scheduled at least $k_c$ times on days~$1, \ldots, \q$. Thus, we have a solution to~$\mathcal{I}$. Conversely, given a solution to~$\mathcal{I}$, we construct a solution to~$\mathcal{I}'$ as follows: Client~$c^-$ is scheduled on the first $\q$ days, while $c^+$ is scheduled on the last $\q$ days (days~$\q+1, \ldots, 2\q$). On days~$1, \ldots, \q$, the clients from~$\mathcal{I}$ are scheduled as in the solution to~$\mathcal{I}$. Furthermore, every client~$c$ from~$\mathcal{I}$ is scheduled on days~$\q+k_c+1, \ldots, 2\q$. Thus, client~$c$ is scheduled on $k_c + \q-k_c = \q$ days, so we get a solution to~$\mathcal{I}'$.
\end{proof}

\begin{theorem}
\label{thm:treedepth}
The \EQUIJIT\ problem is NP-hard for $\tau \geq 6$.
\end{theorem}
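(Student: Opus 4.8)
The plan is to derive the theorem by simply composing the two preceding lemmas, keeping in mind that $\tau$ denotes the treewidth of the overall conflict graph (cf.\ \Cref{def:conflictgraph}). \Cref{lem:td-weighted} already establishes that the generalized problem \WEQUIJIT\ is NP-hard on instances whose overall conflict graph has treewidth at most $4$, and \Cref{lem:td} supplies a polynomial-time, correctness-preserving reduction from \WEQUIJIT\ to \EQUIJIT\ that raises the treewidth by at most $2$. So the only content of this proof is to chain these two maps and track the parameter.

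First I would take the NP-hard \WEQUIJIT\ instances guaranteed by \Cref{lem:td-weighted} and feed each of them through the reduction of \Cref{lem:td}. Since the composition of two polynomial-time many-one reductions is again a polynomial-time many-one reduction, and since \Cref{lem:td} is correctness-preserving, the resulting \EQUIJIT\ instances inherit NP-hardness. Moreover, as the reduction of \Cref{lem:td} increases treewidth additively by at most $2$ while its input already has treewidth at most $4$, the overall conflict graphs of these \EQUIJIT\ instances have treewidth at most $4+2=6$. This yields NP-hardness of \EQUIJIT\ already on the class of instances with $\tau \le 6$.

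To match the wording ``$\tau \ge 6$'', I would then invoke monotonicity of the hard class: any algorithm solving \EQUIJIT\ on all instances whose treewidth is bounded by some fixed $t \ge 6$ would in particular solve the treewidth-$\le 6$ hard instances constructed above, so the problem remains NP-hard for every treewidth bound $\tau \ge 6$. If one wants the hard instances to have treewidth exactly $6$ (so that they literally lie in the class $\{\tau \ge 6\}$), I would attach, as a separate connected component, seven new clients that are pairwise conflicting on a single arbitrary day and pairwise non-conflicting on all other days, with intervals well-separated from those of the original clients. These seven clients form a disjoint $K_7$ in the overall conflict graph, which has treewidth $6$, so the total treewidth becomes $\max\{6,6\}=6$; and since each of them can be scheduled on all but one day, their fairness demand is met for every $k \le \q-1$ (which holds for the instances produced by \Cref{lem:td}, where $k=\q$ among $2\q$ days), so feasibility of the original instance is unaffected.

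I do not expect a genuine obstacle at this level, because both ingredients are already proved. The only things to verify are bookkeeping points: that \Cref{lem:td} is indeed additive in the treewidth and correctness-preserving as stated, and that the optional $K_7$ padding neither lowers nor inflates the treewidth beyond $6$. The genuinely delicate work, namely exhibiting a width-$4$ tree decomposition for a gadget-heavy construction, is carried out inside \Cref{lem:td-weighted}; at the level of this theorem the argument is the routine composition just described.
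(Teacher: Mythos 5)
Your proof is correct and takes essentially the same route as the paper, which establishes \Cref{thm:treedepth} exactly by composing \Cref{lem:td-weighted} (hardness of \WEQUIJIT\ at treewidth at most $4$) with the treewidth-additive reduction of \Cref{lem:td}. Your optional $K_7$-padding step, making the treewidth exactly $6$ to match the phrasing ``$\tau \geq 6$'' literally, goes beyond what the paper writes but is sound and harmless.
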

\begin{proof}
In \Cref{lem:td-weighted} we described a reduction from the NP-hard \textsc{Multicolored Independent Set} to an instance of \WEQUIJIT\ with a treewidth 4. \Cref{lem:td} describes a reduction from \WEQUIJIT\ to \EQUIJIT\ by which the treewidth of the overall conflict graph increases by at most 2. The theorem thus follows.
\end{proof}

\subsection{FPT algorithm with respect to the number of days and the treewidth}\label{sec:fptqw}

We next show an FPT algorithm for parameter $\q+\tau$, number of days $\q$ plus the treewidth of the overall conflict graph. We will use a simplified version of tree decompositions, called \emph{nice tree decompositions}, introduced by Kloks~\cite{kloks1994treewidth}.  
\begin{definition}
A \textit{nice tree decomposition} is a tree decomposition $\mathcal{T}=(\mathcal{X},F)$ where the following conditions additionally hold:
\begin{enumerate}
\item Every node $X \in \mathcal{X}$ has at most two children. 
\item If node $X$ has two children $X'$ and $X''$ then $X=X'=X''$, and $X$ is called a \textit{join node}. 
\item If node $X$ has a single child $X'$ then either of the two following holds:
\begin{enumerate}
\item $X = X' \setminus \{j\}$ for some client $j \in X'$. In this case, we call $X$ a \textit{forget node}.
\item $X = X' \cup \{j\}$ for some client $j \in \{1,\ldots,n\} \setminus X'$. In this case, we say that $X$ is an \textit{introduce node}.
\end{enumerate}
\end{enumerate}
\end{definition}
\noindent It is known that a nice tree decomposition of width $\tau=\tau(G)$ of a graph $G$ can be computed from a given tree decomposition of equal width in linear time~\cite{alber2002improved,kloks1994treewidth}.

Let $\mathcal{I}$ be an instance of \EQUIJIT\, and let $\mathcal{T}$ be a given nice tree decomposition of the overall conflict graph $G=(\{1,\ldots,n\},E_1 \cup \cdots \cup E_m\})$ of $\mathcal{I}$. To present our algorithm, we will need additional terminology. Let $V=\{1,\ldots,n\}$ denote the set of $n$ clients of $\mathcal{I}$. Recall that a (not necessarily feasible) schedule $\sigma$ for $V$ is a tuple $\sigma = (\sigma_1,\ldots,\sigma_m)$ of $m$ subsets of $V$; that is,~$\sigma_i \subseteq V$ for each day~$i \in \{1,\ldots,m\}$. 
For a subset of clients $X \subseteq V$, we let $\sigma \cap X$ denote the schedule $(\sigma'_1 \cap X,\ldots,\sigma'_m \cap X)$.

We will be particularly interested in schedules that are defined only on subsets of clients. Let $X \subseteq V$. We say that a schedule \emph{$\sigma=(\sigma_1,\ldots,\sigma_m)$ is defined on~$X$} if $\sigma_i \subseteq X$ for each day~$i \in \{1,\ldots,m\}$. We say that $\sigma$ is \emph{feasible and fair on~$X$} if $\sigma$ is feasible, defined on~$X$, and we have $\sum_j Z_{i,j} \geq k$ for each client~$j \in X$. That is, a feasible and fair schedule on $X$ is a feasible schedule which is defined only on the clients of $X$, and in which each client of $X$ has at least $k$ of its jobs scheduled among all $m$ days. We let $\Sigma(X)$ denote the set of all feasible and fair schedules on $X$. The following easy lemma will be crucial for our algorithm:
\begin{lemma}
\label{lem:TWDP}
Let $X \subseteq V$ be a subset of at most $\tau + 1$ clients. Then $|\Sigma(X)| = O(2^{\tau m})$, and we can compute~$\Sigma(X)$ in $2^{O(\tau m)}$ time.
\end{lemma}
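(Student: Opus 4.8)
The plan is to prove the lemma by a straightforward brute-force enumeration, exploiting that both the number of clients in $X$ and the number of days $m$ are bounded by the parameters. First I would observe that a schedule $\sigma=(\sigma_1,\ldots,\sigma_m)$ defined on $X$ is completely determined by specifying, for each day $i\in\{1,\ldots,m\}$ and each client $j\in X$, whether $j\in\sigma_i$. Since $|X|\le\tau+1$, this amounts to at most $m\cdot|X|\le m(\tau+1)$ independent binary choices, so there are at most $2^{m(\tau+1)}$ schedules defined on $X$ in total. As $\Sigma(X)$ is by definition the subset of these consisting of the feasible and fair ones, we immediately get $|\Sigma(X)|\le 2^{m(\tau+1)}$. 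Since $(\tau+1)m = O(\tau m)$ (using $m\le\tau m$ for $\tau\ge 1$, while $\tau=0$ forces $|X|\le 1$ and is trivial), this is the claimed bound of $O(2^{\tau m})$ once the constant factor in the exponent is absorbed.

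For the algorithmic part, I would generate all $2^{m|X|}$ candidate schedules defined on $X$ one by one and filter them. For each candidate $\sigma$ I check two conditions: \emph{feasibility}, i.e.\ that for every day $i$ the set $\sigma_i$ contains no two conflicting clients, which can be read off directly from the day-$i$ conflict graph $G_i$ restricted to $X$ (equivalently, by comparing the time-intervals of the scheduled jobs); and \emph{fairness on $X$}, i.e.\ that $\sum_i Z_{i,j}\ge k$ for every $j\in X$, which is verified by counting, for each of the at most $\tau+1$ clients, the number of days on which it is scheduled. Both checks run in time polynomial in $m$ and $|X|$ (at most $O(m(\tau+1)^2)$ for the pairwise conflict tests and $O(m(\tau+1))$ for the fairness counts), and I retain exactly those candidates passing both, which are precisely the members of $\Sigma(X)$.

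The total running time is then the number of candidates times the per-candidate work, namely $2^{m(\tau+1)}\cdot\mathrm{poly}(m,\tau)=2^{O(\tau m)}$, as stated. I do not expect a genuine obstacle here, since the result is essentially a counting observation; the only point needing care is matching the raw count $2^{(\tau+1)m}$ with the stated bound, handled by noting the extra $+m$ in the exponent is dominated by $\tau m$. It is also worth emphasising that the feasibility constraint does not improve the worst-case count — a conflict-free day over $X$ still admits all $2^{|X|}$ subsets — so the exponential dependence on $\tau m$ is intrinsic to this explicit-enumeration approach, which is exactly why the surrounding algorithm is an FPT algorithm in $\tau+m$ rather than a polynomial-time one.
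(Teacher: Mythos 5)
Your proof is correct, and it is essentially the intended argument: the paper states this as an ``easy lemma'' and omits its proof entirely, the point being exactly the brute-force enumeration you describe---at most $2^{m|X|}\le 2^{(\tau+1)m}$ candidate schedules on $X$, each checked for feasibility and fairness in time polynomial in $m$ and $\tau$. Your side remark reconciling the raw count $2^{(\tau+1)m}$ with the stated bound (reading it as $2^{O(\tau m)}$, which is how the paper uses the lemma) is the right way to handle the only loose end in the statement.
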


Let $X \in \mathcal{X}$ be a node of the given nice tree decomposition $\mathcal{T}$. We let~$\mathcal{T}_X$ denote the subtree of~$\mathcal{T}$ rooted at $X$, \emph{i.e.}, the subtree of~$\mathcal{T}$ that contains $X$ and all of its decedents in $\mathcal{T}$. We use~ $V(\mathcal{T}_X)$ to denote the set of all clients that are in some node of $X'$ of $\mathcal{T}(X)$. In other words, $V(\mathcal{T}_X)$ is the set of all clients in $X$, and all clients in any descendent of $X$ in $\mathcal{T}$. We compute a dynamic programming table~$T$ on the nodes of~$\mathcal{T}$, in bottom-up fashion, to determine whether the set $\Sigma(V)$ is non-empty. For any node $X$ of $\mathcal{T}$, and any schedule $\sigma \in \Sigma(X)$, we have a corresponding entry~$T[X,\sigma]$ in~$T$ where the following invariant will hold: 
$$
T[X,\sigma] = 1 \iff \sigma = \sigma^* \cap X \text{ for some } \sigma^* \in \Sigma(V(\mathcal{T}_X)) 
$$
Equivalently, $T[X,\sigma] = 1$ if and only if $\sigma$ can be extended to some feasible and fair schedule $\sigma^*$ for $V(\mathcal{T}_X)$. In this way, there exists a solution to the entire \EQUIJIT\ instance if and only if $T[X,\sigma] = 1$ for the root node $X$ of $\mathcal{T}$ and some $\sigma$, as $V(\mathcal{T}_X) = V$ in this case. We proceed to show how to compute all four types of nodes of $\mathcal{T}$, starting with leaf nodes which are the base case of the dynamic program.   

\paragraph*{Leaf nodes.} 

For leaf node $X$ of $\mathcal{T}$ we have $V(\mathcal{T}_X) = X$ by definition. Thus, we have $T[X,\sigma]=1$ for all $\sigma \in \Sigma(X)$. We can therefore compute all entries in~$T$ corresponding to~$X$ in $2^{O(\tau m)}$ time using \Cref{lem:TWDP}. 

\paragraph*{Forget nodes.} 

Let $X$ be a forget node in $\mathcal{T}$, and let $X'\in \mathcal{X}$ be the child of $X$ in~$\mathcal{T}$. Then we have $X=X' \setminus \{j\}$ for some client $j \in \{1, \ldots,n\}$. Given a schedule $\sigma \in \Sigma(X)$, we compute the entry~$T[X,\sigma]$ by
$$
T[X,\sigma] = 
\begin{cases}
1  & T[X',\sigma'] = 1  \text{ for some } \sigma' \text{ with }\sigma=\sigma' \cap X;\\
0  & \text{otherwise.}
\end{cases} 
$$
Note that this computation can be done in $2^{O(\tau m)}$ time, and it correctness follows from the fact that by definition we have $V(\mathcal{T}_X) = V(\mathcal{T}_{X'})$. Thus, $T[X',\sigma']=1$ if and only if $T[X,\sigma' \cap X]=1$ for all $\sigma' \in \Sigma(X')$.  

\paragraph*{Introduce nodes.} 

Let $X$ be an introduce node in $\mathcal{T}$, and let $X'\in \mathcal{X}$ be the child of $X$ in~$\mathcal{T}$. Then $X=X' \cup \{j\}$ for some client $j \in \{1, \ldots,n\}$. For a schedule $\sigma \in \Sigma(X)$, we compute the entry~$T[X,\sigma]$ by
$$
T[X,\sigma] 
= \begin{cases}
1  & T[X',\sigma']=1 \text{ where } \sigma'=\sigma \cap X';\\
0  & \text{otherwise.}
\end{cases} 
$$
Observe that as $X= X' \cup \{j\}$ we have  $V(\mathcal{T}_X) = V(\mathcal{T}_{X'}) \cup \{j\}$. Moreover, by requirements of a tree decomposition, any client $j_0 \in V(\mathcal{T}_{X'}) \setminus X'$ is not adjacent to client $j$ in the overall conflict graph. This means that any such client~$j_0$ does not have any conflicting jobs with the jobs of client~$j$ on any of the $m$ days. Now consider any schedule $\sigma' = (\sigma'_1,\ldots,\sigma'_m) \in \Sigma(X')$ with $T[X',\sigma']=1$, and let $\sigma^* = (\sigma^*_1,\ldots,\sigma^*_m) \in \Sigma(V(\mathcal{T}_{X'}))$ be such that $\sigma'=\sigma^* \cap X'$. It follows that $\sigma^{**}=(\sigma^*_1 \cup \sigma_1,\ldots,\sigma^*_m \cup \sigma_m)$ is feasible, and so $\sigma^{**} \in \Sigma(V(\mathcal{T}_X))$. Thus, as $\sigma = \sigma^{**} \cap X$, we have $T[X,\sigma]=1$. Conversely, for any $\sigma \in \Sigma(X)$ for which there exists some $\sigma^{**} \in \Sigma(V(\mathcal{T}_X))$ with $\sigma = \sigma^{**} \cap X$, we have by definition that $\sigma^* = \sigma^{**} \cap X'$ is in $\Sigma(V(\mathcal{T}_{X'}))$, and that $\sigma^* \cap X' = \sigma \cap X' \in \Sigma(X')$. Therefore, we have~$T[X',\sigma \cap X']=1$.  

\paragraph*{Join nodes.} 

Let $X$ be a join node in $\mathcal{T}$, and let $X',X''\in \mathcal{X}$ be the children of $X$ in~$\mathcal{T}$. Then $X=X'=X''$. For a schedule $\sigma \in \Sigma(X)$, we compute the entry~$T[X,\sigma]$ by
$$
T[X,\sigma] 
= \begin{cases}
1  & T[X',\sigma]=1 \text{ and } T[X'',\sigma]=1; \\
0  & \text{otherwise.}
\end{cases} 
$$
Note that we have $V(\mathcal{T}_X) = V(\mathcal{T}_{X'}) \cup V(\mathcal{T}_{X''})$. Thus, for any schedule $\sigma^* \in \Sigma(V(\mathcal{T}_X))$ we have $\sigma^* \cap V(\mathcal{T}_{X'}) \in \Sigma(V(\mathcal{T}_{X'}))$ and $\sigma^* \cap V(\mathcal{T}_{X''}) \in \Sigma(V(\mathcal{T}_{X''}))$. Moreover, by requirements of a tree decomposition, any client $j_1 \in V(\mathcal{T}_{X'}) \setminus X$ is not adjacent to any client $j_2 \in V(\mathcal{T}_{X''}) \setminus X$. Thus, for any pair of schedules $(\sigma^*_1,\ldots,\sigma^*_m) \in \Sigma(V(\mathcal{T}_{X'}))$ and $(\sigma^{**}_1,\ldots,\sigma^{**}_m) \in \Sigma(V(\mathcal{T}_{X''}))$, we have that $(\sigma^*_1 \cup \sigma^{**}_1,\ldots,\sigma^*_m \cup \sigma^{**}_m) \in \Sigma(V(\mathcal{T}_{X}))$. It follows that $T[X,\sigma]=1$ if and only if $T[X',\sigma]=T[X'',\sigma]=1$ for all schedules~$\sigma \in \Sigma(X)$.

In conclusion, it is easy to see that from the above that we can compute the value each entry $T[X,\sigma]$ in our dynamic programming table in $2^{O(\tau m)}$ time. As there are $2^{O(\tau m)} \cdot n$ such entries, it follows that the total running time of our algorithm can be bounded by $2^{O(\tau m)} \cdot n$. 

\begin{theorem}
\label{thm:fpt:w+m}%
\EQUIJIT\ is solvable in $2^{O(\tau m)} \cdot n$ time.
\end{theorem}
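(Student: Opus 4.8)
The plan is to assemble the dynamic programming algorithm described in the preceding paragraphs into a single inductive correctness-and-runtime argument. First I would invoke the fact that a nice tree decomposition of width $\tau$ can be obtained from a given minimum-width tree decomposition in linear time, so we may assume that a nice tree decomposition $\mathcal{T}=(\mathcal{X},F)$ of the overall conflict graph is available. Since each bag $X \in \mathcal{X}$ contains at most $\tau+1$ clients, \Cref{lem:TWDP} lets us enumerate $\Sigma(X)$ in $2^{O(\tau m)}$ time with $|\Sigma(X)| = O(2^{\tau m})$; these schedules index the table entries $T[X,\sigma]$.

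Next I would establish the table invariant
$$
T[X,\sigma] = 1 \iff \sigma = \sigma^* \cap X \text{ for some } \sigma^* \in \Sigma(V(\mathcal{T}_X))
$$
by structural induction over $\mathcal{T}$, processed bottom-up. The base case is handled by the leaf nodes, where $V(\mathcal{T}_X)=X$ forces $T[X,\sigma]=1$ for every $\sigma \in \Sigma(X)$. The inductive step splits into the three internal node types---forget, introduce, and join---each governed by the recurrence and the correctness argument already given. The essential ingredient in the introduce and join cases is the separation property of tree decompositions: a newly introduced client (respectively, a client lying strictly below only one child of a join node) has no conflict edge to any client lying below the subtree boundary, so the coordinate-wise unions of partial schedules remain conflict-free on each of the $m$ days; hence the combined schedule is an element of $\Sigma(V(\mathcal{T}_X))$.

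With the invariant in hand, correctness follows by observing that at the root node $r$ we have $V(\mathcal{T}_r)=V$, so the instance is a YES-instance precisely when $T[r,\sigma]=1$ for some $\sigma \in \Sigma(r)$. For the running time, a nice tree decomposition has $O(n)$ nodes, each carrying $O(2^{\tau m})$ entries; computing the per-node schedule sets via \Cref{lem:TWDP} costs $2^{O(\tau m)}$, and each entry is evaluated by combining entries of the at most two child bags, again in $2^{O(\tau m)}$ time. Multiplying through yields the claimed $2^{O(\tau m)} \cdot n$ bound.

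Since all four recurrences and their justifications have already been spelled out, I do not expect a genuine obstacle here; the algorithm is essentially complete and the theorem is a matter of packaging it. The only point meriting explicit care is the bookkeeping around the restriction operation $\sigma \mapsto \sigma \cap X$: one must verify that restricting a feasible-and-fair schedule on $V(\mathcal{T}_X)$ to a bag $X$ leaves the scheduled days of each client $j \in X$ unchanged---so the per-client fairness counts are preserved and the restriction indeed lands in $\Sigma(X)$---and, symmetrically, that extending a schedule across an introduce or join node does not disturb the fairness already guaranteed below.
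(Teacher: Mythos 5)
Your proposal is correct and follows essentially the same route as the paper's own proof: a bottom-up dynamic program over a nice tree decomposition, with table entries indexed by the feasible-and-fair bag schedules of \Cref{lem:TWDP}, the same invariant established by induction over the leaf/forget/introduce/join node types (using the separation property of tree decompositions for the introduce and join cases), the same YES-instance criterion at the root, and the same entry-counting runtime analysis giving $2^{O(\tau m)} \cdot n$. The bookkeeping point you flag about the restriction $\sigma \mapsto \sigma \cap X$ preserving per-client fairness counts is handled in the paper implicitly through the definition of $\Sigma(X)$, so nothing further is needed.
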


\subsection{FPT algorithm with respect to the number of clients}\label{sec:fptn}

We next present an FPT algorithm for \EQUIJIT\ with respect to the number of clients parameter~$n$. Our algorithm exploits the fact that when $\q$ is large enough ($\q \in 2^{\omega(n \log n)}$), some daily conflict graphs are guaranteed to occur multiple times. Specifically, we provide an ILP formulation where the number of variables is bounded by a function of $n$, which directly implies an FPT algorithm with respect to $n$ due to Lenstra's Theorem~\cite{lenstra1983integer}.

\begin{theorem}[Lenstra's ILP Algorithm~\cite{lenstra1983integer}]
\label{thm:Lenstra}%
There exists an algorithm for solving any ILP on $n$ variables and $m$ constraints in $n^{O(n)}m^{O(1)}$ time.
\end{theorem}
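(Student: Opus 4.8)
The plan is to prove the theorem by the classical divide-and-conquer-on-dimension strategy underlying Lenstra's algorithm. I treat the decision (feasibility) version first: given $A \in \ZZ^{m \times n}$ and $b \in \ZZ^m$, decide whether the rational polyhedron $P = \{x \in \RR^n : Ax \le b\}$ contains a point of $\ZZ^n$. The optimization version reduces to this by binary search on the objective value $c^\top x$: each test adds one constraint $c^\top x \ge \theta$ and calls the feasibility routine, and the number of probes is polynomial in the encoding length, contributing only a factor absorbed into the $m^{\bigO(1)}$ part. I would first reduce to the case where $P$ is bounded and full-dimensional by standard preprocessing (intersecting with a box whose size is bounded by the encoding length, and passing to the affine hull when $P$ is flat), so that the ambient object is always a full-dimensional polytope.

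The geometric heart of the argument is to bring $P$ into a well-rounded position. Using the (shallow-cut) ellipsoid method I would compute in polynomial time a L\"owner--John type ellipsoid $E$ with center $x_0$ satisfying $E \subseteq P \subseteq x_0 + (n+1)(E - x_0)$, so that $P$ is sandwiched between two concentric ellipsoids whose radii differ only by a factor depending on $n$. Applying the linear map $\tau$ that sends $E$ to a ball $B(\tau x_0,\rho)$ turns $\ZZ^n$ into a lattice $\Lambda = \tau(\ZZ^n)$, and the question becomes whether the now nearly spherical body $\tau(P)$ contains a point of $\Lambda$. I would then run lattice basis reduction (LLL, and for the sharpest bound a Korkine--Zolotarev-reduced basis of the dual) to obtain a reduced basis $b_1,\dots,b_n$ of $\Lambda$; this runs in time polynomial in $n$ and the bit-size, and its output controls both the covering radius of $\Lambda$ and the flatness direction below.

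With the reduced basis in hand I would branch on a dichotomy. If the inscribed ball is large relative to the reduced basis, concretely if $\rho$ exceeds a threshold of order $2^{\bigO(n)}\max_i \|b_i\|$, then rounding the center to the nearest lattice point via the reduced basis yields a point of $\Lambda$ inside the inner ball $\tau(E)$, hence inside $\tau(P)$, and we accept. Otherwise the body is flat: here I invoke the flatness theorem of the geometry of numbers, which guarantees a nonzero dual vector $w \in \Lambda^{*}$ (produced algorithmically from the reduced basis of $\Lambda^{*}$) such that the width $\max_{x \in \tau(P)} w^\top x - \min_{x \in \tau(P)} w^\top x$ is at most a bound $f(n)$ depending only on $n$. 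Pulling $w$ back to the original coordinates, $P$ meets at most $f(n)+1$ consecutive integer hyperplanes $\{x : \langle a,x\rangle = t\}$ over a contiguous range of $t$; on each such hyperplane the integer points form a translate of an $(n-1)$-dimensional lattice, so I recurse in dimension $n-1$ after the corresponding unimodular change of coordinates.

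For the complexity, the recursion has depth $n$ and, using a sufficiently reduced basis to realize a polynomial flatness bound $f(i) = i^{\bigO(1)}$, branches into at most $f(i)+1$ subproblems at level $i$, so the total number of leaves is $\prod_{i=2}^{n}\big(f(i)+1\big) \le (n!)^{\bigO(1)} = n^{\bigO(n)}$. Each node performs only polynomially much work (ellipsoid rounding, basis reduction, and the hyperplane bookkeeping) in $m$ and the encoding length, and the encoding lengths do not blow up across levels because all transformations are unimodular or of bounded precision; multiplying through and folding in the binary-search factor yields the claimed $n^{\bigO(n)} m^{\bigO(1)}$ bound. The main obstacle, and the step I expect to require the most care, is the flatness dichotomy: both establishing the flatness theorem and, crucially, turning it into an effective procedure that exhibits the flat direction together with its width bound $f(n)$ from a reduced basis of the dual lattice, since this is exactly what simultaneously guarantees bounded branching and drives the dimension down.
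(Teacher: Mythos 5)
The paper does not actually prove this statement---it is imported as a black box (cited to Lenstra~1983; the $n^{O(n)}m^{O(1)}$ form of the bound is really due to Kannan's refinement of Lenstra's algorithm), and it is used only as the engine behind \Cref{thm:fpt:n}. So there is no in-paper argument to compare against; what can be checked is whether your sketch is a faithful reconstruction of the known proof, and it essentially is: optimization-to-feasibility by binary search, preprocessing to a full-dimensional polytope, ellipsoidal rounding, basis reduction, the flatness dichotomy with branching into at most $f(n)+1$ hyperplanes, and recursion in dimension $n-1$ is exactly the Lenstra--Kannan scheme, and your leaf count $\prod_{i=2}^{n}(f(i)+1)=n^{O(n)}$ for polynomial $f$ is the right accounting.

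Two technical claims in your write-up are off, though neither sinks the final bound. First, a minor one: the exact L\"owner--John sandwich with factor $n+1$ is not known to be polynomial-time computable; the shallow-cut ellipsoid method gives only a weaker factor of order $n^{3/2}$, which is harmless here. Second, and more substantively: you assert that a Korkine--Zolotarev-reduced basis can be computed ``in time polynomial in $n$ and the bit-size,'' and that ``each node performs only polynomially much work.'' This is false---KZ reduction requires solving shortest-vector problems, costing $2^{O(n)}$ or $n^{O(n)}$ per node (Kannan's enumeration), and this is precisely the price of the polynomial flatness bound $f(n)=n^{O(1)}$ that your leaf count relies on. If you insist on genuinely polynomial per-node work you must fall back on LLL, which only yields $f(n)=2^{O(n)}$ and hence a total of $2^{O(n^2)}$ --- Lenstra's original bound, not the theorem's $n^{O(n)}$. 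The statement survives because an $n^{O(n)}$ per-node cost multiplied by $n^{O(n)}$ nodes is still $n^{O(n)}$ times a polynomial in $m$ and the encoding length, but your complexity paragraph should be corrected to say so explicitly rather than claiming polynomial work per node.
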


\begin{theorem}
\label{thm:fpt:n}%
The \EQUIJIT\ problem is solvable in $2^{2^{O(n\log n)}} \cdot m^{O(1)}$ time.
\end{theorem}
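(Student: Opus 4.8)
The plan is to exploit the fact that two days whose daily conflict graphs coincide are completely interchangeable, so a fair schedule is determined only by \emph{how many} days of each ``type'' serve each feasible set of clients. First I would compute, for every day $i \in \{1,\ldots,\q\}$, its daily conflict graph $G_i$ (an interval graph on the $n$ clients), and group the $\q$ days into \emph{types}, where two days share a type exactly when their conflict graphs are equal. The crucial counting step is to bound the number~$T$ of distinct types: since each $G_i$ is an interval graph, its edge set is determined by the relative order of the $2n$ interval endpoints, so there are at most $(2n)! = 2^{\bigO(n\log n)}$ possible conflict graphs, whence $T \le 2^{\bigO(n\log n)}$. Grouping the days costs $\q^{\bigO(1)} \cdot \operatorname{poly}(n)$ time and records the number $\q_t$ of days of each type $t$, with $\sum_t \q_t = \q$.

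Next I would set up an integer linear program. A feasible single-day schedule on a day of type $t$ is precisely an independent set of the corresponding conflict graph, and there are at most $2^n$ such sets. For each type $t$ and each independent set $I$ of its conflict graph I introduce a non-negative integer variable $x_{t,I}$, interpreted as the number of days of type $t$ on which exactly the clients of $I$ are served. The constraints are $\sum_{I} x_{t,I} = \q_t$ for every type $t$ (each day of type $t$ is assigned one feasible set), and $\sum_{t}\sum_{I \ni j} x_{t,I} \ge k$ for every client $j \in \{1,\ldots,n\}$ (each client is served on at least $k$ days).

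For correctness I would argue both directions. From any feasible $k$-fair schedule, setting $x_{t,I}$ to the number of days of type $t$ whose schedule is exactly $I$ satisfies all constraints. Conversely, from a feasible assignment of the $x_{t,I}$ I reconstruct a schedule by distributing, within each type $t$, its $\q_t$ days among the independent sets according to the counts $x_{t,I}$; each day is feasible because every chosen $I$ is an independent set of that day's conflict graph, and the client constraints guarantee every client is served at least $k$ times. Here the interchangeability of same-type days is exactly what makes this count-based reconstruction valid.

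Finally, I would bound the running time. The ILP has $N \le T \cdot 2^n = 2^{\bigO(n\log n)}$ variables and $M \le T + n = 2^{\bigO(n\log n)}$ constraints, with all coefficients in $\{0,1\}$ and right-hand sides bounded by $\q$. Applying \Cref{thm:Lenstra} solves it in $N^{\bigO(N)} M^{\bigO(1)} = 2^{\bigO(N \log N)} = 2^{2^{\bigO(n\log n)}}$ time, since $\log N = \bigO(n\log n)$ is absorbed into the exponent. Together with the $\q^{\bigO(1)}$ preprocessing this yields the claimed $2^{2^{\bigO(n\log n)}} \cdot \q^{\bigO(1)}$ bound. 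I expect the main obstacle to be the type-counting step: obtaining the sharp $2^{\bigO(n\log n)}$ bound (rather than the crude $2^{\bigO(n^2)}$ bound from counting all graphs on $n$ vertices) requires invoking the interval-graph structure through the endpoint ordering, and this is precisely what pins the final running time at double-exponential in $n\log n$ instead of $n^2$.
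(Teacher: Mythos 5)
Your proposal is correct and follows essentially the same route as the paper: group days by their daily conflict graph, introduce one integer variable per (graph type, independent set) pair, impose the day-count and client-coverage constraints, and invoke Lenstra's algorithm (\Cref{thm:Lenstra}) to get the $2^{2^{O(n\log n)}} \cdot m^{O(1)}$ bound. The only cosmetic difference is that you justify the $2^{O(n\log n)}$ bound on the number of interval graphs by an elementary endpoint-ordering argument ($(2n)!$ arrangements), whereas the paper cites known enumeration results; both give the same bound and the rest of the argument is identical.
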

\begin{proof}

We prove this by providing an ILP formulation of the problem. The program will use $2^{\bigO(n \log n)}$ variables and $2^{\bigO(n \log n)}$ constraints; the proof of the theorem will then directly follow from \Cref{thm:Lenstra}, simply plug in the number of variables to obtain the upper bound $2^{2^{\bigO(n \log n)}}$ on the running time.

Recall that every daily conflict graph $G_i$ is an interval graph with $n$ vertices. The number of distinct interval graphs with $n$ vertices is bounded by $2^{\bigO(n \log n)}$~\cite{bukh2022enumeration,yang2017enumeration}. For every such graph type $G^t$ and every subset of vertices $I \subseteq \{1,\ldots,n\}$ that is an independent set in $G^t$ we create an integer variable $x^{G^t}_{I}$, whose value indicates how many times we schedule the set $I$ on days of type $G^t$.

We proceed to introduce the three sets of constraints, which will guarantee us a feasible and $k$-fair solution. First, we introduce a set (\ref{constraint:xlarger0}) of $2^{\bigO(n \log n)}$ constraints ensuring that all variables are non-negative.
\begin{equation}
\label{constraint:xlarger0}%
\forall G^t, I : x^{G^t}_{I} \geq 0.
\end{equation}
Next, for every $G^t$ we add a constraint to ensure that the total number of independent sets we schedule on days of type $G^t$ is exactly the number of times that $G^t$ appears in the $\q$ days.
\begin{equation}
\label{constraint:numberG}%
\forall G^t : \sum\limits_{I}   x^{G^t}_{I} = |\{ G_i \mid i \in \{1,\ldots,m\} \wedge G_i = G^t\}|.
\end{equation}
Lastly, we introduce another $n$ constraints, one for each client to ensure that its containing subsets are scheduled at least $k$ times.
\begin{equation}
\label{constraint:kfair}%
\forall j \in \{1,\ldots,n\}: \sum\limits_{I:j \in I} \sum\limits_{G^t} x^{G^t}_{I} \geq k.
\end{equation}

$(\Rightarrow)$
Let $\sigma$ be a feasible and $k$-fair schedule. We compute from $\sigma$ an assignment to the variables of the ILP such that no constraint is violated. First, set all variables to 0, then iterate on each of daily schedules in $\sigma$. When the set $I$ is scheduled on day $i$ and the conflict graph on that day be $G^t$, increment the variable $x^{G^t}_{I}$ by one. Clearly (\ref{constraint:xlarger0}) is met as $\sigma$ cannot schedule some job subset a negative number of times. The schedule $\sigma$ is feasible, hence  (\ref{constraint:numberG}) must be upheld, as the scheduled jobs on every day must be independent. Since $\sigma$ is $k$-fair we know that (\ref{constraint:kfair}) is met as every client~$j$ is scheduled in at least $k$ days, which means that the sum of $x^{G^t}_{I}$ must be at least $k$ for every $G^t$ and every $I$ that contains~$j$. 

$(\Leftarrow)$
Any assignment to the variables that satisfies all constraints can be mapped to a feasible and $k$-fair schedule $\sigma$. It is only required that each client has $k$ of its jobs scheduled, regardless of which of the $\q$ days this happens. We can therefore define an arbitrary order $\pi$ over the subsets of $V$ and iterate over the days. Let the $i$-th day have the conflict graph $G^t$, and let $x^{G^t}_{I}$ be the first non-zero variable over the arbitrary order $\pi$. The schedule $\sigma$ will then schedule the jobs associated with $I$ on the $i$-th day. We then proceed to decrease $x^{G^t}_{I}$ by one, and repeatedly continue to day~$i+1$. 

Suppose that $\sigma$ is infeasible, then there must exist two clients whose jobs intersect on a certain day, let the day type be $G^t$. This is a contradiction to the fact that the assignment upholds (\ref{constraint:numberG}) as the number of independent sets occurrences on days of type $G^t$ must equal the number of scheduled client-subsets on days of type $G^t$, and as no client subset can be scheduled a negative number of times it follows that only independent sets are scheduled on $\sigma$, and therefore it must be feasible. Suppose that $\sigma$ is not $k$-fair, then some client $j$ has less than $k$ jobs scheduled, but this is as well a contradiction as we have that (\ref{constraint:kfair}) is met by the assignment. It follows that $\sigma$ is $k$-fair.
\end{proof}


\section{Summary and Discussion}
\label{sec:conclusion}%

In this paper, we studied a new problem \EQUIJIT\ that models repetitive interval scheduling on a single machine when fairness amongst the different clients is the main objective. We provided a thorough complexity analysis for the problem, and identified for which values of $k$ and~$m$ the problem is NP-hard or not. We also considered the special cases of day-independent processing times or due dates, as well as several different parameterizations for the problem through the lens of parameterized complexity. 

\subsection{Generalizations}

In \EQUIJIT\ all clients have the same fairness parameter, they all submit a single job on each day, and we have a single machine to process all these jobs on every day. Our results easily carry over to some natural generalizations of \EQUIJIT\ that we discuss below:
\begin{itemize}
\item \emph{Client-dependent fairness parameters:} The case where each client~$j$ has its own fairness parameter~$k_j$ is discussed in \Cref{sec:paraw}, where we show a reduction from this variant to \EQUIJIT\ in \Cref{lem:td}.

\item \emph{Clients do not require service on some days:} Consider the variant of \EQUIJIT\ where any client~$j$ may not have a job to be processed on some of the days~$i \in \{1,\ldots,m\}$. We can reduce this generalized variant to \EQUIJIT\ by adding  $\lceil{\frac{\q}{k}}\rceil$ auxiliary clients to the set of clients. We also extend the number of days by $k(\lceil{\frac{\q}{k}}\rceil-\frac{\q}{k})$ days. We set the jobs of the auxiliary clients so that they are all mutually conflicting on all days. We then need to add jobs to each of the $n$ original clients so that every client has a job on every day. On each of the original $m$ days, if some client $j$ does not have a job on some day $i$, we create a job for this client that is in conflict only with the jobs of auxiliary clients. On the additional~$k(\lceil{\frac{\q}{k}}\rceil-\frac{\q}{k})$ days, we set the jobs of the original $n$ clients to conflict with the jobs of all auxiliary clients. See \Cref{fig:generalcase} for an illustration. It is easy to verify that an instance of the generalized variant has a solution of value $k$ if and only if the constructed instance of \EQUIJIT\ as a solution of the same value.

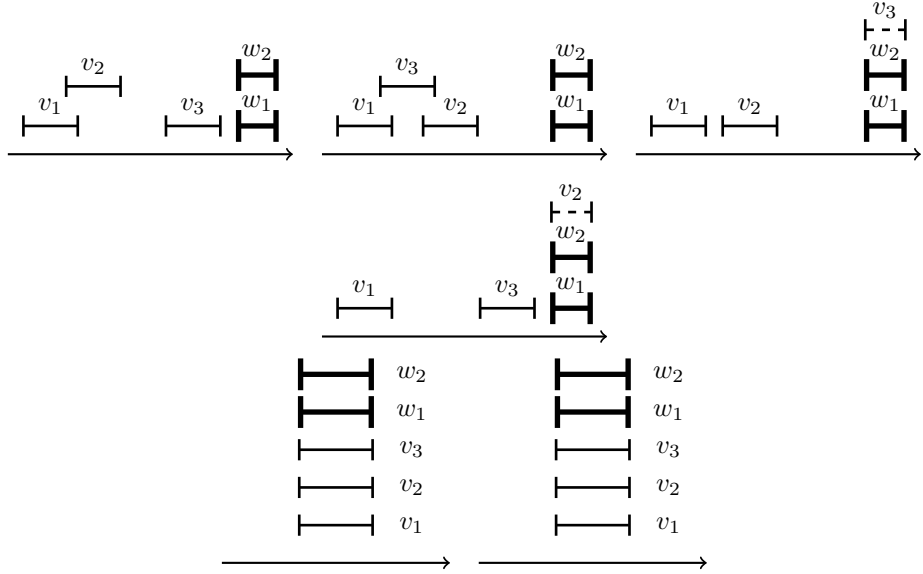
\begin{figure}[H]
\centering

\begin{tikzpicture}[scale=0.75]

  \def\dx{0.05}
  \def\shiftca{4}

    \draw[thick,->] (0,0) -- (5,0) node[anchor=north west] {};

    \draw[line width=1, |-|]
    (0.25,0.5) -- (1.25, .5) node[anchor=north west] {};
    \node at (.75,.85) {$v_{1}$};
    \draw[line width=1, |-|]
    (1.,1.2) -- (2., 1.2) node[anchor=north west] {};
    \node at (1.5,1.55) {$v_{2}$};
    \draw[line width=1, |-|]
    (2.75,0.5) -- (3.75, .5) node[anchor=north west] {};
    \node at (3.25,.85) {$v_{3}$};

    \draw[line width=2, |-|]
    (4,0.5) -- (4.75, .5) node[anchor=north west] {};
    \node at (4.37,.9) {$w_{1}$};
    \draw[line width=2, |-|]
    (4,1.4) -- (4.75, 1.4) node[anchor=north west] {};
    \node at (4.37,1.8) {$w_{2}$};

\end{tikzpicture}
\begin{tikzpicture}[scale=0.75]

  \def\dx{0.05}
  \def\shiftca{4}

    \draw[thick,->] (0,0) -- (5,0) node[anchor=north west] {};

    \draw[line width=1, |-|]
    (0.25,0.5) -- (1.25, .5) node[anchor=north west] {};
    \node at (.75,.85) {$v_{1}$};
    \draw[line width=1, |-|]
    (1.75,0.5) -- (2.75, .5) node[anchor=north west] {};
    \node at (2.35,.85) {$v_{2}$};
    \draw[line width=1, |-|]
    (1.,1.2) -- (2., 1.2)  node[anchor=north west] {};
    \node at (1.5,1.55) {$v_{3}$};

    \draw[line width=2, |-|]
    (4,0.5) -- (4.75, .5) node[anchor=north west] {};
    \node at (4.37,.9) {$w_{1}$};
    \draw[line width=2, |-|]
    (4,1.4) -- (4.75, 1.4) node[anchor=north west] {};
    \node at (4.37,1.8) {$w_{2}$};

\end{tikzpicture}
\begin{tikzpicture}[scale=0.75]

  \def\dx{0.05}
  \def\shiftca{4}
  \draw[thick,->] (0,0) -- (5,0) node[anchor=north west] {};

    \draw[line width=1, |-|]
    (0.25,0.5) -- (1.25, .5) node[anchor=north west] {};
    \node at (.75,.85) {$v_{1}$};
    \draw[line width=1, |-|]
    (1.5,0.5) -- (2.5, .5) node[anchor=north west] {};
    \node at (2.,.85) {$v_{2}$};
    
    \draw[line width=2, |-|]
    (4,0.5) -- (4.75, .5) node[anchor=north west] {};
    \node at (4.37,.9) {$w_{1}$};
    \draw[line width=2, |-|]
    (4,1.4) -- (4.75, 1.4) node[anchor=north west] {};
    \node at (4.37,1.8) {$w_{2}$};
    \draw[line width=1,dashed, |-|]
    (4,2.2) -- (4.75, 2.2) node[anchor=north west] {};
    \node at (4.37,2.55) {$v_{3}$};
\end{tikzpicture}
\begin{tikzpicture}[scale=0.75]

  \def\dx{0.05}
  \def\shiftca{4}
  \draw[thick,->] (0,0) -- (5,0) node[anchor=north west] {};

    \draw[line width=1, |-|]
    (0.25,0.5) -- (1.25, .5) node[anchor=north west] {};
    \node at (.75,.85) {$v_{1}$};
    \draw[line width=1, |-|]
    (2.75,0.5) -- (3.75, .5) node[anchor=north west] {};
    \node at (3.25,.85) {$v_{3}$};
    
    \draw[line width=2, |-|]
    (4,0.5) -- (4.75, .5) node[anchor=north west] {};
    \node at (4.37,.9) {$w_{1}$};
    \draw[line width=2, |-|]
    (4,1.4) -- (4.75, 1.4) node[anchor=north west] {};
    \node at (4.37,1.8) {$w_{2}$};
    \draw[line width=1,dashed, |-|]
    (4,2.2) -- (4.75, 2.2) node[anchor=north west] {};
    \node at (4.37,2.55) {$v_{2}$};

\end{tikzpicture}

\begin{tikzpicture}

    \def\dx{0.05}
  \def\shiftca{4}
  \draw[thick,->] (0,0) -- (3,0) node[anchor=north west] {};

    \draw[line width=1, |-|]
    (1,0.5) -- (2, .5) node[anchor=north west] {};
    \node at (2.5, .5) {$v_{1}$};
    \draw[line width=1, |-|]
    (1,1.) -- (2, 1.) node[anchor=north west] {};
    \node at (2.5, 1.) {$v_{2}$};
    \draw[line width=1, |-|]
    (1,1.5) -- (2, 1.5) node[anchor=north west] {};
    \node at (2.5, 1.5) {$v_{3}$};

    \draw[line width=2, |-|]
    (1,2.) -- (2, 2.) node[anchor=north west] {};
    \node at (2.5, 2.) {$w_{1}$};
    \draw[line width=2, |-|]
    (1,2.5) -- (2, 2.5) node[anchor=north west] {};
    \node at (2.5, 2.5) {$w_{2}$};
\end{tikzpicture}
\begin{tikzpicture}

  \def\dx{0.05}
  \def\shiftca{4}
  \draw[thick,->] (0,0) -- (3,0) node[anchor=north west] {};

    \draw[line width=1, |-|]
    (1,0.5) -- (2, .5) node[anchor=north west] {};
    \node at (2.5, .5) {$v_{1}$};
    \draw[line width=1, |-|]
    (1,1.) -- (2, 1.) node[anchor=north west] {};
    \node at (2.5, 1.) {$v_{2}$};
    \draw[line width=1, |-|]
    (1,1.5) -- (2, 1.5) node[anchor=north west] {};
    \node at (2.5, 1.5) {$v_{3}$};

    \draw[line width=2, |-|]
    (1,2.) -- (2, 2.) node[anchor=north west] {};
    \node at (2.5, 2.) {$w_{1}$};
    \draw[line width=2, |-|]
    (1,2.5) -- (2, 2.5) node[anchor=north west] {};
    \node at (2.5, 2.5) {$w_{2}$};
    
\end{tikzpicture}
\caption{The construction of \EQUIJIT\ from the general case by which $\q=4$ and $k=3$.
The reduction adds $3(\lceil{\frac{4}{3}}\rceil-\frac{4}{3})=2$ additional days and $\lceil{\frac{4}{3}}\rceil=2$ auxiliary clients $w_1$ and~$w_2$. The top figure depicts the original days, while the bottom figure depicts the additional days. The jobs in \textbf{bold} are jobs belonging to the auxiliary clients, and the dashed jobs are new jobs that are added for the original clients and on the original days.}
\label{fig:generalcase}
\end{figure}

\item \emph{Agreeable due dates:} An instance of \EQUIJIT\ is said to have agreeable due dates if $d_{i,j} \leq d_{i,j+1}$ for all $i \in \{1,\ldots,m\}$ and $j \in \{1,\ldots,n-1\}$. Note that the case of day-independent due dates is a special case of agreeable due dates. Agreeable conditions are widely used in the scheduling literature to derive polynomial time algorithms for NP-hard problems~\cite{bampis2012speed,gilenson2021scheduling,ma2013online,liu2007scheduling,shabtay2022state}. We can create from every \textit{agreeable due dates} instance an equivalent instance by which the due dates are \textit{day-independent}. To do so, we simply set $d_{i,j}=j$ to the $j$'th job in the ordering. Then, we set $p_{i,j} = \ell$ where $\ell$ is the smallest job index which is in conflict with job $j$ on day $i$. It is not difficult to see that we obtain an equivalent instance of \OPEQUIJIT\ in this way. 

\item \emph{Multiple Machines:} Most of our algorithms also extend naturally to the case where there are~$M$ identical machines on each day to process the clients' jobs: The running time of the algorithm in \Cref{thm:fpt:n} increases to $2^{2^{O(Mn\log(Mn))}} \cdot m^{O(1)}$ by replacing variables $x^{G^t}_{I}$ with variables of the form~$x^{G^t}_{I_1,\ldots,I_M}$. We get an algorithm running time 
$M^{\tau m}\cdot n$
by extending the algorithm in \Cref{thm:fpt:w+m} so that it considers the schedules on $M$ machines on each day for each node of the tree decomposition. For the case of day-independent processing times and due dates, we can easily reduce to \SDEQUIJIT\ with $Mm$ days, and the algorithms of \Cref{thm:bipartite} and Theorem~\ref{thm:opxp} can be extended to algorithms with running times of~$O(\sqrt{M}n^{1.5}m^{2.5}+(nmM)^{1.5})$ and~$O((mM)^{k+1}n^{mM+1})$, respectively. A notable exception is \Cref{thm:km-1} and the case of $m=k-1$ which does not naturally extend to the multiple machine case.  
\end{itemize} 

\subsection{Open problems}

Despite our thorough investigation into the complexity of \EQUIJIT\, there are still several interesting open questions left by our work. Below we give only a partial list: 
\begin{itemize}
\item Can \UEQUIJIT\ be solved in $O(mn)$ time?
\item Does \OPEQUIJIT\ admit an algorithm running in~$n^{o(m)}$ time? Note that \Cref{thm:k2m3} excludes an FPT algorithm for $m$.
\item Can the $M$-machine case be solved in polynomial-time for~$k=m-1$?
\end{itemize}
\noindent Naturally, apart from the Just-In-Time objective $\sum_i Z_{i,j}$, there are several natural other objectives to explore in the context of fair repetitive scheduling, and this paper should serve as an encouragement to explore this interesting new area.

\backmatter

\bmhead{Supplementary information}
Not applicable.




\bmhead{Acknowledgements}
Not applicable.



\section*{Declarations}

Klaus Heeger, Danny Hermelin, Yuval Itzhaki, and Dvir Shabtay are supported by the ISF, grant No.~1070/20. Hendrik Molter is supported by the ISF, grants No.~1070/20 and No.~1456/18, and European Research Council, grant number 949707. 

The authors have no relevant financial or non-financial interests to disclose.




\bibliography{sn-bibliography}%

\end{document}